 \newtheorem{Theorem}{Theorem}[section]
  \newtheorem{Lemma}[Theorem]{Lemma}
  \newtheorem*{Remark}{Remark}
\begin{document}

\title{Effects of Viral and Cytokine Delays on Dynamics of Autoimmunity}

\author{F. Fatehi Chenar,\hspace{0.5cm}Y.N. Kyrychko, \hspace{0.5cm}K.B. Blyuss\thanks{Corresponding author. Email: k.blyuss@sussex.ac.uk}\\\\ Department of Mathematics, University of Sussex, Falmer,\\
Brighton, BN1 9QH, United Kingdom}

\maketitle

\abstract{A major contribution to the onset and development of autoimmune disease is known to come from infections. An important practical problem is identifying the precise mechanism by which the breakdown of immune tolerance as a result of immune response to infection leads to autoimmunity. In this paper, we develop a mathematical model of immune response to a viral infection, which includes T cells with different activation thresholds, regulatory T cells (Tregs), and~a cytokine mediating immune dynamics. Particular emphasis is made on the role of time delays associated with the processes of infection and mounting the immune response. Stability analysis of various steady states of the model allows us to identify parameter regions associated with different types of immune behaviour, such as, normal clearance of infection, chronic infection, and autoimmune dynamics. Numerical simulations are used to illustrate different dynamical regimes, and to identify basins of attraction of different dynamical states. An important result of the analysis is that not only the parameters of the system, but also the initial level of infection and the initial state of the immune system determine the progress and outcome of the dynamics.}

\section{Introduction}

An immune system can only be viewed as effective when it can robustly identify and destroy pathogen-infected cells, while also distinguishing such cells from healthy cells. In the case of breakdown of immune tolerance, the immune system fails to discriminate between self-antigens and foreign antigens, which results in autoimmune disease, i.e., undesired destruction of healthy cells. Under normal conditions, once foreign epitopes are presented on antigen presenting cells (APCs) to T cells, this results in the proliferation of T cells and eliciting their effector function. While this mechanism is responsible for a successful clearance of infection, cross-reactivity between epitopes associated with foreign and self-antigens can lead to a T cell response against healthy host cells \cite{Mason98,And00}.

For many autoimmune diseases, the disease occurs in a specific organ or part of the body, such as retina in uveitis, central nervous system in multiple sclerosis, or pancreatic $\beta$-cells in type-1 diabetes \cite{Kerr08,Prat02,San10}. It is extremely difficult to identify the specific causes of autoimmunity in individual patients, as it usually has contributions from a number of internal and external factors, including a genetic predisposition, age, previous immune challenges, exposure to pathogens etc., \cite{RB15,Caf08,Li08,Guil11}. Even though genetic predisposition is known to play a very significant role, it is believed that some additional environmental triggers are required for the onset of autoimmunity, and these are usually represented by infections \cite{Ger12,Mal13}. A very recent work has experimentally identified a gut bacterium that, when present in mice and humans, can~migrate to other parts of the body, facilitating~subsequent triggering of autoimmune disease in those organs \cite{Vieira18}. Various mechanisms of onset of pathogen-induced autoimmune disease have been identified, including bystander activation~\cite{fuji3} and molecular mimicry~\cite{von,erco}, which is particularly important in the context of autoimmunity caused by viral infections.

A number of mathematical models have looked into dynamics of onset and development of autoimmune disease. Segel et al. \cite{sege} analysed 
interactions between effector and regulatory T cells in the context of T cell vaccination, without explicitly specifying possible causes of autoimmunity. Similar~models were later studied by Borghans et al.~\cite{borg1,borg2} who demonstrated possible onset of autoimmune state, defined as stable above-threshold oscillations in the number of autoreactive cells, as a result of interactions between regulatory and autoreactive T cells. Le\'on et al. \cite{leon1,leon2,Leon2004} and \mbox{Carneiro et al. \cite{carn}} have studied interactions between different T cells, with an emphasis on the suppressing role of regulatory T cells in the dynamics of immune response and control of autoimmunity. Alexander and Wahl \cite{alex} have also looked into the role of regulatory T cells, in particular focusing on their interactions with professional APCs and effector cells for the purpose of controlling immune response. \mbox{Iwami et al. \cite{iwam1,iwam2}} explicitly included a separate compartment representing the viral population in their models of immune response, and showed that the functional form of the growth function for susceptible host cells can have a significant effect on the resulting immune dynamics. Despite being able to explain the emergence of autoimmunity as a by-product of immune response to infection, these models were not able to exhibit another practically important dynamical regime of normal viral clearance. For the case of pathogen-induced autoimmunity arising through bystander activation, Burroughs et al. \cite{burr1,Burr11b,burr2} have developed a model that investigates interactions between T cells and interleukin-2 (IL-2), an important cytokine, in mediating the onset of autoimmunity.

Among various parts of the immune system involved in coordinating an effective immune response, a particularly significant role is known to be played by the T cells, with experimental evidence suggesting that regulatory T cells are vitally important for controlling autoimmunity \cite{Saka04,Jos12,Font03,Khat03}. To account for this fact in mathematical models, Alexander and Wahl \cite{alex} and \mbox{Burroughs et al. \cite{burr1,Burr11b}} have explicitly included a separate compartment for regulatory T cells that are activated by autoantigens and suppress the activity of autoreactive T cells. Another framework for modelling the effects of T cells on autoimmune dynamics is by using the idea that T cells have different or {\it tunable activation thresholds} (TAT), which result in different immune functionality of the same T cells, and also allow T cells to adjust their response to simulation by autoantigens. This approach was proposed for the analysis of the peripheral and central T cell dynamics \cite{gros,GPaul00,GSinger96}, it has also been used to study differences in activation/response thresholds that are dependent on the activation state of the \mbox{T cell \cite{Bonn05}}. Murine and human experiments have confirmed that activation of T cells can indeed change dynamically during their circulation \cite{Bit02,Nich00,Roe11,Stef02}. To model this feature, \mbox{van den Berg and Rand \cite{berg}}, \mbox{and Scherer et al. \cite{Scher04}} developed stochastic models for the tuning of activation thresholds.

Blyuss and Nicholson \cite{blyu12,blyu15} have proposed a mathematical model of autoimmunity resulting from immune response to a viral infection through a mechanism of molecular mimicry. This model explicitly includes the virus population and two types of T cells with different activation thresholds, and it also accounts for a biologically realistic scenario where infection and autoimmune response can occur in different organs of the host. Besides the normal viral clearance and chronic infection, in some parameter regime the model also exhibits an autoimmune state characterised by stable oscillations in the amounts of cell populations. From a clinical perspective, such behaviour is to be expected, as~it is associated with relapses and remissions that have been observed in a number of autoimmune diseases, such as autoimmune thyroid disease, MS, and uveitis \cite{Bezra95,Davies97,Nyla12}. One deficiency of this model is the fact that the oscillations associated with autoimmune regime can only occur if the amount of free virus and the number of infected cells are also exhibiting oscillations, while in clinical and laboratory settings, autoimmunity usually occurs after the initial infection has been fully cleared. To~overcome this limitation, Fatehi et al. \cite{Fatehi2018b} have recently developed a more advanced model that also includes regulatory T cells and cytokines, which has allowed the authors to obtain a more realistic representation of immune response and various dynamical regimes. A particularly important practical insight provided by this model is the observation that it is not only the system parameters, but also the initial level of infection and the initial state of the immune system, that determine whether the host will just successfully clear the infection, or will proceed to develop autoimmunity. Approaching the same problem from another perspective, Fatehi et al. \cite{Fatehi2018a} have investigated the role of stochasticity in driving the dynamics of immune response and determining which of the immune states is more likely to be attained. The authors have also determined an experimentally important characterisation of autoimmune state, as provided by the dependence of variance in cell populations on various system parameters.

In this paper, we develop and analyse a model of autoimmune dynamics, with particular focus on the role of time delays associated with different aspects of immune response, as well as an inhibiting effect of regulatory T cells on secretion of IL-2. In the next section, we introduce the model and discuss its basic properties. Section \ref{S3} contains systematic analysis of all steady states, including conditions for their feasibility and stability. Section \ref{S4} contains a bifurcation analysis of the model and demonstrates various types of behaviour that the system exhibits depending on parameters and initial conditions, which includes identification of attraction basins of various states. The paper concludes in Section \ref{S5} with the discussion of results.

\section{Model Derivation} \label{S2}

To understand how interactions between different parts of the immune system and the cytokine can lead to autoimmunity, we consider a model illustrated in a diagram shown in Figure~\ref{dia}. In this model, unlike earlier work of Blyuss and Nicholson \cite{blyu12,blyu15}, we consider a situation where a viral infection and possible autoimmunity occur in the same organ of the host. The healthy host cells, whose~number is denoted by $A(t)$, in the absence of infection are assumed to grow logistically with linear growth rate $r$ and carrying capacity $N$, and they acquire infection at rate $\beta$ from the infected cells $F(t)$. Since experimental evidence suggests that antibodies play a secondary role compared to T cells \cite{wu10}, and autoimmunity can develop even in the absence of B cells \cite{wolf96}, we do not include antibody response in the model, but focus solely on the dynamics of T cell populations. Na\"ive~(inactivated) T cells $T_{in}(t)$ are assumed to be in homeostasis \cite{blyu12,iwam1,iwam2}, and once they are activated through interaction with infected cells, which occurs at rate $\alpha$, a proportion $p_1$ of them will go on to differentiate into additional regulatory T cells, a fraction $p_2$ will become normal activated T cells $T_{nor}(t)$ able to destroy infected cells at rate $\mu_F$ upon recognition of foreign antigen present on their surface. The remaining proportion of $(1-p_1-p_2)$ of T cells will become autoreactive T cells $T_{aut}(t)$ that, in light of their lower activation threshold will be eliminating both infected cells and healthy host cells at rate $\mu_a$ due to the above-mentioned cross-reactivity between some epitopes in self- and foreign antigens. Regulatory T cells $T_{reg}(t)$ are assumed to be in their own homeostasis \cite{balt}, and their main contribution to immune dynamics lies in suppressing autoreactive T cells at rate $\delta_1$. To reduce the dimensionality of the model, it is assumed that the process of viral production is occurring very fast compared to other characteristic timescales of the model, thus the viral population can be represented by its quasi-steady-state approximation, i.e., it is taken to be proportional to the number of infected cells, and~this eliminates the need for a separate compartment for free virus.

A number of different cytokines mediate immune response to infection, and in the context of T cell dynamics, a particular important role is played by interleukin 2 (IL-2),  represented by the variable $I(t)$ in the model, which acts to enhance the proliferation of T cells, which, in turn, secrete further IL-2. One~of the actions of regulatory T cells is to suppress the expression of IL-2 \cite{Sheva01}, which~is only produced by the activated T cells, but not by the regulatory T cells \cite{abbas,Thorn98}. To represent this mathematically, we will assume that $T_{nor}$ and $T_{aut}$ produce IL-2 at rates $\sigma_1$ and $\sigma_2$, and conversely, IL-2 enhances proliferation of $T_{reg}$, $T_{nor}$ and $T_{aut}$ at rates $\rho_1$, $\rho_2$, and $\rho_3$. We include in the model suppression of IL-2 by regulatory T cells at rate $\delta_2$, in a manner similar to Burroughs et al. \cite{burr2}.
\begin{figure}
	\centering
    \includegraphics[width=0.9\linewidth]{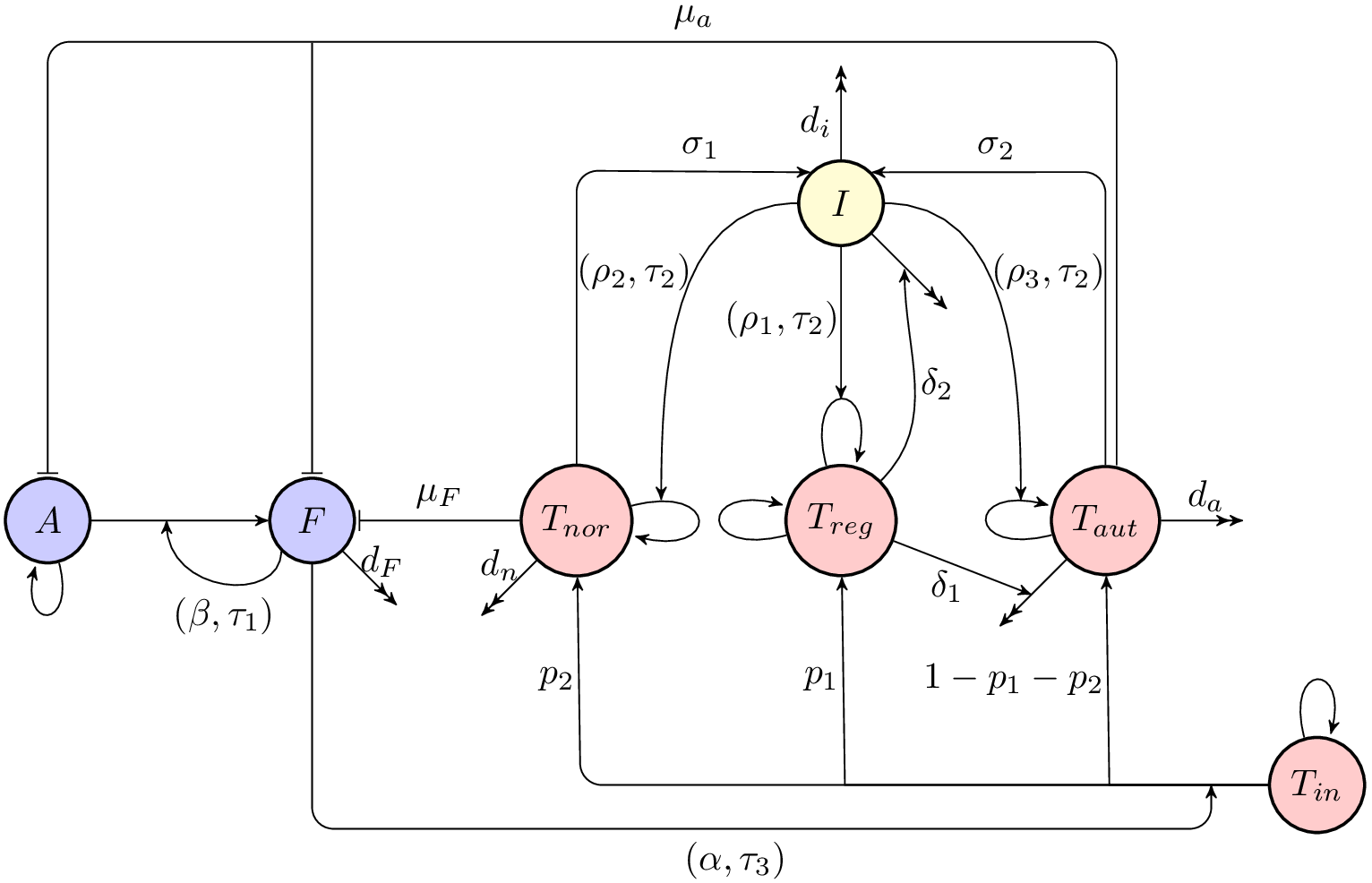}
	\caption{\small A schematic diagram of immune response to infection. Blue indicates host cells (susceptible and infected), red denotes different T cells (na\"ive, regulatory, normal activated, and autoreactive T~cells), yellow shows cytokines (interleukin 2). $\tau_i$ inside each of the subnetworks shows the time delay associated with that process.}
	\label{dia}
\end{figure}

While the production of new virus particles by infected cells is assumed to be fast, we explicitly include in the model time delay $\tau_1$ associated with the actual process of infection, which includes multiple stages of the {\it eclipse phase} of viral life cycle, such as virus attachment, cell penetration and uncoating \cite{bac06,pawel12}. We also include the time delay $\tau_2$ associated with simulation and proliferation of T~cells  by IL-2, and the time delay $\tau_3$ between antigen encounter and resulting T cell expansion \cite{Kim07}. 

With the above assumptions, the complete model takes the form
\[
\begin{array}{l}
\displaystyle{\dfrac{dA}{dt}=rA\left(1-\dfrac{A}{N}\right)-\beta AF-\mu_a T_{aut}A,} \vspace{4pt}\\
\displaystyle{\dfrac{dF}{dt}=\beta A\left(t-\tau_1\right)F\left(t-\tau_1\right)-d_FF-\mu_FT_{nor}F-\mu_a T_{aut}F,} \vspace{4pt}\\
\displaystyle{\dfrac{dT_{in}}{dt}=\lambda_{in}-d_{in}T_{in}-\alpha T_{in}F,} \vspace{4pt}\\
\displaystyle{\dfrac{dT_{reg}}{dt}=\lambda_r-d_rT_{reg}+p_1\alpha T_{in}\left(t-\tau_3\right)F\left(t-\tau_3\right)+\rho_1 I\left(t-\tau_2\right)T_{reg}\left(t-\tau_2\right),} \vspace{4pt}\\
\displaystyle{\dfrac{dT_{nor}}{dt}=p_2\alpha T_{in}\left(t-\tau_3\right)F\left(t-\tau_3\right)-d_nT_{nor}+\rho_2 I\left(t-\tau_2\right)T_{nor}\left(t-\tau_2\right),} \vspace{4pt}\\
\displaystyle{\dfrac{dT_{aut}}{dt}=(1-p_1-p_2)\alpha T_{in}\left(t-\tau_3\right)F\left(t-\tau_3\right)-d_aT_{aut}-\delta_1T_{reg}T_{aut}+\rho_3 I\left(t-\tau_2\right)T_{aut}\left(t-\tau_2\right),} \vspace{4pt}\\
\displaystyle{\dfrac{dI}{dt}=\sigma_1T_{nor}+\sigma_2T_{aut}-\delta_2T_{reg}I-d_iI.}
\end{array}
\]

Introducing non-dimensional variables
\begin{align*}
    &\hat{t}=rt,\quad A=N\hat{A},\quad F=N\hat{F},\quad T_{in}=\dfrac{\lambda_{in}}{d_{in}}\hat{T}_{in},\quad T_{reg}=\dfrac{\lambda_{in}}{d_{in}}\hat{T}_{reg},\\ &T_{nor}=\dfrac{\lambda_{in}}{d_{in}}\hat{T}_{nor},\quad T_{aut}=\dfrac{\lambda_{in}}{d_{in}}\hat{T}_{aut},\quad I=\dfrac{\lambda_{in}}{d_{in}}\hat{I},
\end{align*}
where
\begin{align*}
    &\hat{\beta}=\dfrac{\beta N}{r},\quad \hat{\mu}_a=\dfrac{\mu_a\lambda_{in}}{rd_{in}},\quad \hat{d}_F=\dfrac{d_F}{r},\quad \hat{\mu}_F=\dfrac{\mu_F\lambda_{in}}{rd_{in}},\quad \hat{d}_{in}=\dfrac{d_{in}}{r},\quad \hat{\alpha}=\dfrac{\alpha N}{r},\\
    &\hat{\lambda}_r=\dfrac{\lambda_rd_{in}}{\lambda_{in}r},\quad \hat{d}_r=\dfrac{d_r}{r}, \quad \hat{d}_n=\dfrac{d_n}{r}, \quad \hat{d}_a=\dfrac{d_a}{r}, \quad \hat{\rho}_i=\dfrac{\rho_i \lambda_{in}}{rd_{in}}, \quad i=1,2,3,\\
    &\hat{\delta}_1=\dfrac{\delta_1 \lambda_{in}}{r d_{in}},\quad \hat{\delta}_2=\dfrac{\delta_2 \lambda_{in}}{r d_{in}},\quad \hat{\sigma}_1=\dfrac{\sigma_1}{r},\quad \hat{\sigma}_2=\dfrac{\sigma_2}{r},\quad \hat{d}_i=\dfrac{d_i}{r},
\end{align*}
yields a rescaled model
\begin{myequation}\label{resc_model}
\begin{array}{l}
\displaystyle{\dfrac{dA}{dt}=A\left(1-A\right)-\beta AF- \mu_aT_{aut}A,} \vspace{4pt}\\
\dfrac{dF}{dT}=\beta A\left(T-\tau_1\right)F\left(T-\tau_1\right)- d_FF-\mu_FT_{nor}F-\mu_aT_{aut}F, \vspace{4pt}\\
\dfrac{dT_{in}}{dT}=d_{in}\left(1-T_{in}\right)-\alpha T_{in}F, \vspace{4pt}\\
\dfrac{dT_{reg}}{dT}=\lambda_r-d_rT_{reg}+p_1\alpha T_{in}\left(T-\tau_3\right)F\left(T-\tau_3\right)+\rho_1I\left(T-\tau_2\right)T_{reg}\left(T-\tau_2\right), \vspace{4pt}\\
\dfrac{dT_{nor}}{dT}=p_2\alpha T_{in}\left(T-\tau_3\right)F\left(T-\tau_3\right)-d_nT_{nor}+ \rho_2 I\left(T-\tau_2\right)T_{nor}\left(T-\tau_2\right), \vspace{4pt}\\
\dfrac{dT_{aut}}{dT}=(1-p_1-p_2)\alpha T_{in}\left(T-\tau_3\right)F\left(T-\tau_3\right)-d_aT_{aut}-\delta_1T_{reg}T_{aut}+\rho_3I\left(T-\tau_2\right)T_{aut}\left(T-\tau_2\right), \vspace{4pt}\\
\dfrac{dI}{dT}=\sigma_1T_{nor}+\sigma_2T_{aut}-\delta_2T_{reg}I-d_iI,
\end{array}
\end{myequation}
where all hats in variables and parameters have been dropped for simplicity of notation, and all parameters are assumed to be positive. It is easy to show that this system is well-posed, i.e., solutions with non-negative initial conditions remain non-negative for all $t\geq 0$.

As a first step in the analysis of model~(\ref{resc_model}), we look at its steady states
\[
S^{\ast}=\left(A^{\ast}, F^{\ast}, T^{\ast}_{in}, T^{\ast}_{reg}, T^{\ast}_{nor}, T^{\ast}_{aut}, I^{\ast}\right),
\]
that can be found by equating to zero the right-hand sides of Equation~(\ref{resc_model}) and solving the resulting system of algebraic equations, deferring the discussion of conditionally stable steady states to Section~\ref{S3}. First, we consider a situation where there are no infected cells at a steady state, i.e., $F^{\ast}=0$, which~immediately implies $T^{\ast}_{in}=1$. In this case, there are four possible combinations of steady states depending on whether $T^{\ast}_{nor}$ and $T^{\ast}_{aut}$ are each equal to zero or positive. If $T^{\ast}_{nor}=T^{\ast}_{aut}=0$, there~are two steady states
\[
S^{\ast}_1=\left(0, 0, 1, \dfrac{\lambda_r}{d_r}, 0, 0, 0\right),\quad
S^{\ast}_2=\left(1, 0, 1, \dfrac{\lambda_r}{d_r}, 0, 0, 0\right),
\]
of which $S^{\ast}_1$ is always unstable, and $S^{\ast}_2$ is a disease-free conditionally stable steady state.

For $T^{\ast}_{nor}\neq 0$ and $T^{\ast}_{aut}=0$, we again have two steady states \vspace{-4pt}
\[
S^{\ast}_3=\left(0, 0, 1, \dfrac{\lambda_r\rho_2}{\rho_2d_r-\rho_1d_n}, T_{nor}^{\ast}, 0, \dfrac{d_n}{\rho_2}\right),\quad
S^{\ast}_4=\left(1, 0, 1, \dfrac{\lambda_r\rho_2}{\rho_2d_r-\rho_1d_n}, T_{nor}^{\ast}, 0, \dfrac{d_n}{\rho_2}\right),
\] \vspace{-4pt}
where $T_{nor}^{\ast}=\dfrac{d_n\left(\lambda_r\delta_2\rho_2+d_id_r\rho_2-d_id_n\rho_1\right)}{\rho_2\sigma_1(\rho_2d_r-\rho_1d_n)}$, but they are both unstable for any values of parameters. In the case when $T^{\ast}_{nor}=0$ and $T^{\ast}_{aut}\neq 0$, we have two further steady states $S^{\ast}_5$ and $S^{\ast}_6$,
\[
\begin{array}{l}
\displaystyle{S^{\ast}_{5}=\left(0, 0, 1,T^{\ast}_{reg},0,\dfrac{\left(d_i+\delta_2T^{\ast}_{reg}\right)\left(d_a+\delta_1 T^{\ast}_{reg}\right)}{\rho_3\sigma_2},\dfrac{d_a+\delta_1T^{\ast}_{reg}}{\rho_3}\right),}\\\\
\displaystyle{S^{\ast}_{6}=\left(A^{\ast}, 0, 1,T^{\ast}_{reg},0,\dfrac{\left(d_i+\delta_2T^{\ast}_{reg}\right)\left(d_a+\delta_1 T^{\ast}_{reg}\right)}{\rho_3\sigma_2},\dfrac{d_a+\delta_1T^{\ast}_{reg}}{\rho_3}\right),}
\end{array}
\]
where $A^{\ast}=1-\dfrac{\mu_a\left(d_i+\delta_2T^{\ast}_{reg}\right)\left(d_a+\delta_1 T^{\ast}_{reg}\right)}{\rho_3\sigma_2}$, and
\[
\displaystyle{T^{\ast}_{reg}=\frac{d_r\rho_3-\rho_1d_a\pm\sqrt{\left(d_r\rho_3-\rho_1d_a\right)^2-4\rho_1\delta_1\lambda_r\rho_3}}{2\rho_1\delta}.}
\]

The steady state $S^{\ast}_{5}$ has $A^{\ast}=0$, which implies the death of host cells, whereas the steady state $S^{\ast}_{6}$ corresponds to an autoimmune regime. The steady state $S^{\ast}_{7}$ with $T^{\ast}_{nor}\neq 0$ and $T^{\ast}_{aut}\neq 0$ exists only for a particular combination of parameters, namely, when
\[
\delta_1\rho_2^2\lambda_r=(\rho_3d_n-\rho_2d_a)(\rho_2d_r-\rho_1d_n),
\]
and is always unstable. Finally, when $F^{\ast}\neq 0$, the system (\ref{resc_model}) can have a steady state $S^{\ast}_8$ with all of its components being positive, but it does not appear possible to find a closed form expression for this state.

In summary, besides the unconditionally unstable steady states, the model (\ref{resc_model}) has at most for conditionally stable steady states: the disease-free steady state $S^{\ast}_2$, the steady state with the death of host cells $S^{\ast}_5$, the autoimmune steady state $S^{\ast}_6$, and the persistent or chronic steady state $S^{\ast}_8$.

\section{Stability Analysis of the Steady States} \label{S3}

\subsection{Stability Analysis of the Disease-Free Steady State}

Linearising the system (\ref{resc_model}) near the disease-free steady state $S^{\ast}_2$ yields the following equation for characteristic roots $\lambda$
\begin{equation}\label{DF equation}
\lambda+d_F-\beta e^{-\lambda\tau_1}=0.
\end{equation}

If $d_F<\beta$, the above equation always has a real positive root for any value $\tau_1\geq 0$, implying~that the disease-free steady state is always unstable for any value of the time delays. If, however, the condition $d_F>\beta$ holds, the disease-free steady state is stable for $\tau_1=0$. To find out whether it can lose stability for $\tau_1>0$, we look for solutions of Equation (\ref{DF equation}) in the form $\lambda=i\omega$. Separating real and imaginary parts yields
\[
\begin{array}{l}
d_F=\beta\cos(\omega\tau_1),\\\\
\omega=-\beta\sin(\omega\tau_1).
\end{array}
\]

Squaring and adding these two equations gives the following equation for potential Hopf frequency $\omega$
\[
\omega^2+d_F^2-\beta^2=0.
\]
since $d_F>\beta$, this equation does not have real roots for $\omega$, suggesting that there can be no roots of the form $\lambda=i\omega$ of the characteristic Equation (\ref{DF equation}). This implies that in the case $d_F>\beta$ the disease-free steady state $S^{\ast}_2$ is stable for all values of the time delay $\tau_1\geq 0$.

\subsection{Stability Analysis of the Death, Autoimmune and Chronic Steady States}

The steady state $S^{\ast}_5$ (respectively, $S^{\ast}_6$) is stable if
\begin{equation}\label{S5 stability1}
	P<\dfrac{d_a+\delta_1T^{\ast}_{reg}}{\rho_3}<\dfrac{d_n}{\rho_2},
\end{equation}
and all roots of the following equation have negative real part
\begin{equation}\label{S2 equation}
	\Delta(\tau_2,\lambda)=p_2(\lambda)e^{-2\lambda\tau_2}+p_1(\lambda)e^{-\lambda\tau_2}+p_0(\lambda)=0,
\end{equation}
where
\begin{align*}
	p_2(\lambda)=&\dfrac{\rho_1\left(d_a+\delta_1T^{\ast}_{reg}\right)^2}{\rho_3}\left(\lambda+2d_i+\delta_2T^{\ast}_{reg}\right),\\
	p_1(\lambda)=&-\dfrac{\left(d_a+\delta_1T^{\ast}_{reg}\right)}{\rho_3}\Big\{(\rho_1+\rho_3)\lambda^2+\left[\rho_1\left(d_i+d_a+\delta_1 T^{\ast}_{reg}\right)+\rho_3\left(d_r+2d_i+2\delta_2T^{\ast}_{reg}\right)\right]\lambda\\
	&\hspace{3cm}+d_i(\rho_1d_a+2d_r\rho_3)+\delta_2T^{\ast}_{reg}\left(-\rho_1\delta_1T^{\ast}_{reg}+2d_r\rho_3\right)\Big\},\\
	p_0(\lambda)=&\left(\lambda+d_r\right)\left(\lambda+d_i+\delta_2T^{\ast}_{reg}\right)\left(\lambda+d_a+\delta_1T^{\ast}_{reg}\right),
\end{align*}
and
\[
P=\begin{cases}
\dfrac{\sigma_2}{\mu_a\left(d_i+\delta_2T^{\ast}_{reg}\right)}, &\mbox{ for } S^{\ast}_5,\\\\
\dfrac{\sigma_2\left(\beta-d_F\right)}{\mu_a\left(1+\beta\right)\left(d_i+\delta_2T^{\ast}_{reg}\right)}, &\mbox{ for } S^{\ast}_6.
\end{cases}
\]

This steady state undergoes a steady-state bifurcation if
\begin{equation}\label{SS_bif}
	\dfrac{d_a+\delta_1T^{\ast}_{reg}}{\rho_3}=P,\quad \mbox{or}\quad \dfrac{d_a+\delta_1T^{\ast}_{reg}}{\rho_3}=\dfrac{d_n}{\rho_2},\quad \mbox{or}\quad \delta_1\rho_1\left(T^{\ast}_{reg}\right)^2=\lambda_r\rho_3.
\end{equation}

For $\tau_2=0$ these steady states are stable if $T^{\ast}_{reg}$ satisfies (\ref{S5 stability1}) and
\begin{equation}\label{S5 stability2}
\begin{aligned}
&\delta_1\rho_1\left(T^{\ast}_{reg}\right)^2>\lambda_r\rho_3,\\
&a_5\left(T^{\ast}_{reg}\right)^5+a_4\left(T^{\ast}_{reg}\right)^4+a_3\left(T^{\ast}_{reg}\right)^3+a_2\left(T^{\ast}_{reg}\right)^2+a_1T^{\ast}_{reg}+a_0>0,
\end{aligned}
\end{equation}
where
\begin{align*}
	&a_5=-\delta_1\delta_2(\delta_1\rho_1-\delta_2\rho_1+\delta_2\rho_3),\; a_4=d_a\delta_2(\delta_2\rho_2-\delta_1\rho_1-\delta_2\rho_3)-d_i\delta_1(\delta_1\rho_1-\delta_2\rho_1+2\delta_2\rho_3),\\
	&a_3=-d_i\delta_1(d_a\rho_1+d_i\rho_3)+d_ad_i\delta_2(\rho_1-2\rho_3)+\lambda_r\delta_2(\delta_1\rho_1+\delta_2\rho_3),\\
	&a_2=-d_ad_i^2\rho_3+\lambda_r\delta_2(d_a\rho_1+2d_i\rho_3),\;a_1=\lambda_r\rho_3(d_i^2+\delta_2\lambda_r),\;a_0=d_i\rho_3\lambda_r^2.
\end{align*}

To investigate whether stability can be lost for $\tau_2>0$, we use an iterative procedure described in \cite{li2011,Rahman2015} to determine a function $F(\omega)$, whose roots give the Hopf frequency associated with purely imaginary roots of Equation (\ref{S2 equation}). Substituting $\lambda=i\omega$ into Equation (\ref{S2 equation}), we define $\Delta^{(1)}(\tau_2,\lambda)$ as
$$\Delta^{(1)}(\tau_2,\lambda)=\overline{p_0(i\omega)}\Delta(\tau_2,i\omega)-p_2(i\omega)e^{-2i\omega\tau_2}\overline{\Delta(\tau_2,i\omega)}=p_0^{(1)}(i\omega)+p_1^{(1)}(i\omega)e^{-i\omega\tau_2},$$
where
\begin{align*}
	p_0^{(1)}(i\omega)=&|p_0(i\omega)|^2-|p_2(i\omega)|^2,\\
	p_1^{(1)}(i\omega)=&\overline{p_0(i\omega)}p_1(i\omega)-\overline{p_1(i\omega)}p_2(i\omega),
\end{align*}
and the bar denotes the complex conjugate. If we define
$$F(\omega)=\left|p_0^{(1)}(i\omega)\right|^2-\left|p_1^{(1)}(i\omega)\right|^2,$$
then $\Delta(\tau_2,i\omega)=0$ whenever $\omega$ is a root of $F(\omega)=0$. The function $F(\omega)$ has the explicit form
\[
F(\omega)=\omega^{12}+b_{10}\omega^{10}+b_8\omega^{8}+b_6\omega^6+b_4\omega^4+b_2\omega^2+b_0,
\]
with
\begin{align*}
b_0=\dfrac{\left(\delta_{1}{d_{{a}}+T^{\ast}_{reg}}\right)^{4}}{{{\rho_{3}}^{4}}}&\left( d_{{i}}+\delta_{2}{T^{\ast}_{reg}}\right)\left(2{T^{\ast}_{reg}}\delta_{1}\rho_{1}+d_{{a}}\rho_{1}-d_{r}\rho_{3} \right)\\
&\Big[\left(d_i+\delta_2T^{\ast}_{reg}\right)\left(d_a\rho_1+3d_r\rho_3\right)+2d_i\rho_1\left(d_a+\delta_1T^{\ast}_{reg}\right)\big]\\
&\Big[\rho_1\left(d_a+\delta_1T^{\ast}_{reg}\right)\left(2d_i+\delta_2T^{\ast}_{reg}\right)-d_r\rho_3\left(d_i+\delta_2T^{\ast}_{reg}\right)\Big] ^{2}.
\end{align*}

The explicit formulae for other coefficients of $F(\omega)$ can be found in Appendix \ref{A1}. Introducing $s=\omega^2$, the~equation $F(\omega)=0$ can be equivalently rewritten as follows,
\begin{equation}\label{S2 equation hopf}
h(s)=s^6+b_{10}s^5+b_8s^4+b_6s^3+b_4s^2+b_2s+b_0=0.
\end{equation}

Without loss of generality, suppose that Equation (\ref{S2 equation hopf}) has six distinct positive roots denoted by $s_1$, $s_2$, ... , $s_6$, which means that the equation $F(\omega)=0$ has six positive roots
\[
\omega_i=\sqrt{s_i}, \quad i=1,2,...,6.
\]

Substituting $\lambda_k=i\omega_k$ into Equation (\ref{S2 equation}) gives
\begin{equation*}
	\tau_{k,j}=\dfrac{1}{\omega_k}\left[\arctan\left(\dfrac{\omega_k\left((\rho_1+\rho_3)\omega_k^4+f_2\omega_k^2+f_0\right)}{\left(\rho_3Z-d_r\rho_1-\rho_3^2I^{\ast}-\rho_1\delta_2T^{\ast}_{reg}\right)\omega_k^4+g_2\omega_k^2+g_0}\right)+j\pi\right],
\end{equation*}
for $k=1,2,...,6$, $j=0,1,2,...$, where
\begin{align*}
	f_0=&-{\rho_{1}}^{2}{\rho_{3}}^{2}{I^{\ast}}^{3}Z-\rho_1\rho_3\left(2{\rho_{1}}+3{\rho_{3}}\right){I^{\ast}}^{2}{Z}^{2}+\rho_1\rho_3T^{\ast}_{reg}\left(-\delta_{1}{\rho_{1}}+3\delta_{2}{\rho_{1}}+\delta_{2}{\rho_{3}}\right){I^{\ast}}^{2}Z\\
	&-{{T^{\ast}_{reg}}}^{2}{\delta_{2}}^{2}{\rho_{1}}^{2}\rho_{3}{I^{\ast}}^{2}-{T^{\ast}_{reg}}\delta_{1}\rho_{1}\rho_{3}I^{\ast}{Z}^{2}+d_r\rho_3\left(-\delta_{1}\rho_{1}T^{\ast}_{reg}+{d_{r}}{\rho_{3}}\right)I^{\ast}Z\\
	&+d_r\left(-{T^{\ast}_{reg}}\delta_{1}\rho_{1}+2{d_{r}}\rho_{3}\right){Z}^{2},\\
	f_2=&-{\rho_{1}}^{2}\rho_{3}{I^{\ast}}^{2}+{\rho_{3}}^{2}I^{\ast}Z+ \left( \rho_{1}+2\rho_{3} \right) {Z}^{2}+\rho_1T^{\ast}_{reg}\left(\delta_{1}-\delta_{2}\right) Z-d_r\left({T^{\ast}_{reg}}\delta_{2}\rho_{1}-{d_{r}}\rho_{3}\right),\\
	g_0=~&{\rho_{1}}^{2}{\rho_{3}}^{2}{I^{\ast}}^{3}\left(2{Z}^{2}-3{T^{\ast}_{reg}}\delta_{2}Z+{{T^{\ast}_{reg}}}^{2}{\delta_{2}}^{2}\right)+\rho_1\rho_3\left(-2{T^{\ast}_{reg}}\delta_{1}{\rho_{1}}+3d_{r}{\rho_{3}}\right){I^{\ast}}^{2}{Z}^{2}\\
	&+\rho_1\rho_3\delta_2T^{\ast}_{reg}\left({{T^{\ast}_{reg}}}\delta_{1}{\rho_{1}}-d_{r}{\rho_{3}}\right){I^{\ast}}^{2}Z+d_r\rho_3\left( {T^{\ast}_{reg}}\delta_{1}\rho_{1}-2{d_{r}}{\rho_{3}}\right)I^{\ast}{Z}^{2},\\
	g_2=~&\rho_3I^{\ast}\left({\rho_{1}}^{2}{\rho_{3}}{I^{\ast}}^{2}-{\rho_{1}}^{2}{I^{\ast}}Z-2{\rho_{3}}{Z}^{2}-{T^{\ast}_{reg}}\delta_{1}\rho_{1}Z-{d_{r}}^{2}{\rho_{3}}\right)-\rho_1\left(d_{r}+\delta_{1}{T^{\ast}_{reg}}\right){Z}^{2}\\
	&+d_r\left(-{T^{\ast}_{reg}}\delta_{1}\rho_{1}+{T^{\ast}_{reg}}\delta_{2}\rho_{1}+{d_{r}}\rho_{3}\right)Z,
\end{align*}
and
$$I^{\ast}=\dfrac{d_a+\delta_1T^{\ast}_{reg}}{\rho_3},\quad Z=d_i+\delta_2T^{\ast}_{reg}.$$

This allows us to find
\[
\tau^{\ast}=\tau_{k_0,0}=\min\limits_{1\leq k\leq 6}\{\tau_{k,0}\}, \quad \omega_0=\omega_{k_0},
\]
as the first time delay for which the roots of the characteristic Equation (\ref{S2 equation}) cross the imaginary axis. To determine whether these steady states actually undergo a Hopf bifurcation at $\tau_2=\tau^{\ast}$, we have to compute the sign of $d\mbox{Re}[\lambda(\tau^{\ast})]/d\tau_2$. For $\tau=\tau^{\ast}$, $\lambda(\tau^{\ast})=i\omega_0$, and we also define $s_0=\omega_0^2$.

\begin{Lemma}\label{lemma}
	Suppose $h^{\prime}(s_0)\neq 0$ and $p_0^{(1)}(i\omega_0)\neq 0$. Then the following transversality condition holds \vspace{-4pt}
	$$\emph{sgn}\left\{\left.\dfrac{d\,\emph{Re}(\lambda)}{d\,\tau_2}\right\vert_{\tau_2=\tau^{\ast}}\right\}=\emph{sgn}[p_0^{(1)}(i\omega_0)h^{\prime}(s_0)].$$
\end{Lemma}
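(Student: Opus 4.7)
The plan is to apply the implicit function theorem to the characteristic equation $\Delta(\tau_2,\lambda)=0$ near $(\tau^{\ast},i\omega_0)$, compute $d\lambda/d\tau_2$, and then exploit the two algebraic identities that are available at this point, namely $\Delta(\tau^{\ast},i\omega_0)=0$ together with the reduced relation $\Delta^{(1)}(\tau^{\ast},i\omega_0)=0$, to eliminate the exponential factors $e^{-i\omega_0\tau^{\ast}}$ and $e^{-2i\omega_0\tau^{\ast}}$. Since $\operatorname{sgn}(\operatorname{Re} z)=\operatorname{sgn}(\operatorname{Re} z^{-1})$ for any nonzero complex $z$, it is more convenient to work with $(d\lambda/d\tau_2)^{-1}=-\partial_\lambda\Delta/\partial_{\tau_2}\Delta$, evaluated at $(\tau^{\ast},i\omega_0)$.

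First I would write $\partial_\lambda\Delta$ and $\partial_{\tau_2}\Delta$ explicitly from the definition of $\Delta$ in \eqref{S2 equation}, so that $(d\lambda/d\tau_2)^{-1}$ splits naturally into a term proportional to $\tau_2/\lambda$, which is purely imaginary at $\lambda=i\omega_0$ and is therefore killed by taking the real part, and a remaining ratio involving $p_j(i\omega_0)$ and $p_j'(i\omega_0)$. From $\Delta^{(1)}(\tau^{\ast},i\omega_0)=0$ one reads off $e^{-i\omega_0\tau^{\ast}}=-p_0^{(1)}(i\omega_0)/p_1^{(1)}(i\omega_0)$, and then $\Delta(\tau^{\ast},i\omega_0)=0$ fixes $e^{-2i\omega_0\tau^{\ast}}$. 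After substituting these into the surviving ratio and multiplying by the conjugate of its denominator, the computation reduces to an algebraic identity in the polynomials $p_0,p_1,p_2$ and their derivatives at $i\omega_0$ alone.

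The crux is to recognise this expression as a positive multiple of $p_0^{(1)}(i\omega_0)F'(\omega_0)$, where $F(\omega)=|p_0^{(1)}(i\omega)|^2-|p_1^{(1)}(i\omega)|^2$. Differentiating $F$ directly and expanding $p_0^{(1)}$ and $p_1^{(1)}$ through their definitions in terms of $p_0,p_1,p_2$ produces exactly the combination that appears in $\operatorname{Re}(d\lambda/d\tau_2)^{-1}$, up to a factor that is a manifestly nonnegative sum of squared moduli (in particular, $|p_1^{(1)}(i\omega_0)|^2=|p_0^{(1)}(i\omega_0)|^2\neq 0$ by hypothesis). Finally $F(\omega)=h(\omega^2)$ gives $F'(\omega_0)=2\omega_0 h'(s_0)$, and since $\omega_0>0$ the factor $2\omega_0$ is absorbed into the positive denominator, producing the stated identity $\operatorname{sgn}\{d\operatorname{Re}\lambda/d\tau_2\}=\operatorname{sgn}[p_0^{(1)}(i\omega_0)h'(s_0)]$. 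The assumptions $h'(s_0)\neq 0$ and $p_0^{(1)}(i\omega_0)\neq 0$ also ensure $\partial_\lambda\Delta(\tau^{\ast},i\omega_0)\neq 0$, making the initial implicit differentiation legitimate. The main obstacle is purely algebraic: arranging the expansion so that $F'(\omega_0)$ emerges as a common factor. This is precisely where the iterative construction of $\Delta^{(1)}$ from \cite{li2011,Rahman2015} pays off, since attempting the same manipulation directly on $\Delta$ with its two independent exponentials would leave an expression in which the common factor $F'(\omega_0)$ is hidden.
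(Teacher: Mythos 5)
Your proposal is correct and follows essentially the same route as the paper: both compute $\mathrm{Re}\,(d\lambda/d\tau_2)^{-1}$, discard the purely imaginary $\tau_2/\lambda$ term, use $\Delta(\tau^{\ast},i\omega_0)=0$ together with $\Delta^{(1)}(\tau^{\ast},i\omega_0)=0$ to determine the exponentials (the paper does this by solving a real $4\times 4$ system for $\sin(\omega_0\tau^{\ast})$, $\cos(\omega_0\tau^{\ast})$, $\sin(2\omega_0\tau^{\ast})$, $\cos(2\omega_0\tau^{\ast})$, which is just the real--imaginary form of your complex elimination), and then identify the surviving expression as $h^{\prime}(s_0)/\bigl(\Lambda\,p_0^{(1)}(i\omega_0)\bigr)$ via $F^{\prime}(\omega_0)=2\omega_0 h^{\prime}(s_0)$. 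Your observation that $|p_1^{(1)}(i\omega_0)|=|p_0^{(1)}(i\omega_0)|\neq 0$ justifies the division and is consistent with the paper's hypotheses.
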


\begin{proof}
Considering $p_j(i\omega_0)=x_j(\omega_0)+iy_j(\omega_0)$ for $j=0,1,2$, we have \vspace{-4pt}
\begin{align*}
p_0^{(1)}(i\omega_0)=&x_0^2+y_0^2-x_2^2-y_2^2,\\
p_1^{(1)}(i\omega_0)=&(x_0x_1+y_0y_1-x_1x_2-y_1y_2)+(x_0y_1+x_2y_1-x_1y_0-x_1y_2)i,
\end{align*}
where all $x_j$ and $y_j$ are expressed in terms of system parameters and steady state values of the variables. Substituting these expressions into $\Delta(\tau_2,i\omega_0)=0$ and $\Delta^{(1)}(\tau_2,i\omega_0)=0$, and then separating real and imaginary parts gives
\begin{equation*}
\begin{cases}
x_2\cos(2\omega_0\tau^{\ast})+y_2\sin(2\omega_0\tau^{\ast})+x_1\cos(\omega_0\tau^{\ast})+y_1\sin(\omega_0\tau^{\ast})=-x_0,\\
y_2\cos(2\omega_0\tau^{\ast})-x_2\sin(2\omega_0\tau^{\ast})+y_1\cos(\omega_0\tau^{\ast})-x_1\sin(\omega_0\tau^{\ast})=-y_0,\\
(x_0x_1+y_0y_1-x_1x_2-y_1y_2)\cos(\omega_0\tau^{\ast})+(x_0y_1+x_2y_1-x_1y_0-x_1y_2)\sin(\omega_0\tau^{\ast})\\
\hspace{2cm}=-x_0^2-y_0^2+x_2^2+y_2^2,\\
(x_0y_1+x_2y_1-x_1y_0-x_1y_2)\cos(\omega_0\tau^{\ast})-(x_0x_1+y_0y_1-x_1x_2-y_1y_2)\sin(\omega_0\tau^{\ast})=0.
\end{cases}%$
%}
\end{equation*}

Solving this system of equations provides the values of $\sin(\omega_0\tau^{\ast})$, $\cos(\omega_0\tau^{\ast})$, $\sin(2\omega_0\tau^{\ast})$, and~$\cos(2\omega_0\tau^{\ast})$. Taking the derivative of Equation (\ref{S2 equation}) with respect to $\tau_2$, one finds
$$\left(\dfrac{d\,\lambda}{d\,\tau_2}\right)^{-1}=\dfrac{p_2^{\prime}(\lambda)e^{-2\lambda\tau_2}+p_1^{\prime}(\lambda)e^{-\lambda\tau_2}+p_0^{\prime}(\lambda)}{\lambda\left(2p_2(\lambda)e^{-2\lambda\tau_2}+p_1(\lambda)e^{-\lambda\tau_2}\right)}-\dfrac{\tau_2}{\lambda}.$$

Hence,
\begin{align*}
&\left(\left.\dfrac{d\,\mbox{Re}(\lambda)}{d\,\tau_2}\right\vert_{\tau_2=\tau^{\ast}}\right)^{-1}=\mbox{Re}\left\{\dfrac{p_2^{\prime}(\lambda)e^{-2\lambda\tau_2}+p_1^{\prime}(\lambda)e^{-\lambda\tau_2}+p_0^{\prime}(\lambda)}{\lambda\left(2p_2(\lambda)e^{-2\lambda\tau_2}+p_1(\lambda)e^{-\lambda\tau_2}\right)}\right\}_{\tau_2=\tau^{\ast}}-\mbox{Re}\left\{\dfrac{\tau_2}{\lambda}\right\}_{\tau_2=\tau^{\ast}}\\
&=\mbox{Re}\left\{\dfrac{p_2^{\prime}(i\omega_0)e^{-2i\omega_0\tau_2}+p_1^{\prime}(i\omega_0)e^{-i\omega_0\tau_2}+p_0^{\prime}(i\omega_0)}{i\omega_0\left(2p_2(i\omega_0)e^{-2i\omega_0\tau_2}+p_1(i\omega_0)e^{-i\omega_0\tau_2}\right)}\right\}\\
&=\dfrac{1}{\omega_0}\,\mbox{Im}\left\{\dfrac{p_2^{\prime}(i\omega_0)e^{-2i\omega_0\tau_2}+p_1^{\prime}(i\omega_0)e^{-i\omega_0\tau_2}+p_0^{\prime}(i\omega_0)}{2p_2(i\omega_0)e^{-2i\omega_0\tau_2}+p_1(i\omega_0)e^{-i\omega_0\tau_2}}\right\}\\
&=\dfrac{1}{\Lambda\omega_0}\Big[-x_2x_2^{\prime}-y_2y_2^{\prime}+x_0x_0^{\prime}+y_0y_0^{\prime}+(x_2y_1^{\prime}-y_2x_1^{\prime}+x_0y_1^{\prime}-x_1^{\prime}y_0)\sin(\omega_0\tau^{\ast})\\
&\hspace{1.55cm}+(x_0x_1^{\prime}+y_0y_1^{\prime}-x_1^{\prime}x_2-y_1^{\prime}y_2)\cos(\omega_0\tau^{\ast})+(x_2y_0^{\prime}-x_0^{\prime}y_2+x_0y_2^{\prime}-x_2^{\prime}y_0)\sin(2\omega_0\tau^{\ast})\\
&\hspace{1.55cm}+(x_0x_2^{\prime}+y_0y_2^{\prime}-x_0^{\prime}x_2-y_0^{\prime}y_2)\cos(2\omega_0\tau^{\ast})\Big],
\end{align*}
where
\[
\Lambda=\left|2p_2(i\omega_0)e^{-2i\omega_0\tau_2}+p_1(i\omega_0)e^{-i\omega_0\tau_2}\right|^2.
\]

\noindent Substituting the values of $\sin(\omega_0\tau^{\ast})$, $\cos(\omega_0\tau^{\ast})$, $\sin(2\omega_0\tau^{\ast})$, and $\cos(2\omega_0\tau^{\ast})$ found earlier gives
\begin{align*}
\left(\left.\dfrac{d\,\mbox{Re}(\lambda)}{d\,\tau_2}\right\vert_{\tau_2=\tau^{\ast}}\right)^{-1}=\dfrac{1}{\Lambda\omega_0}\,\dfrac{F^{\prime}(\omega_0)}{2\,p_0^{(1)}(i\omega_0)}=\dfrac{h^{\prime}(s_0)}{\Lambda\,p_0^{(1)}(i\omega_0)}.
\end{align*}

\begin{figure}%[H]
	\centering
	\includegraphics[width=0.8 \linewidth]{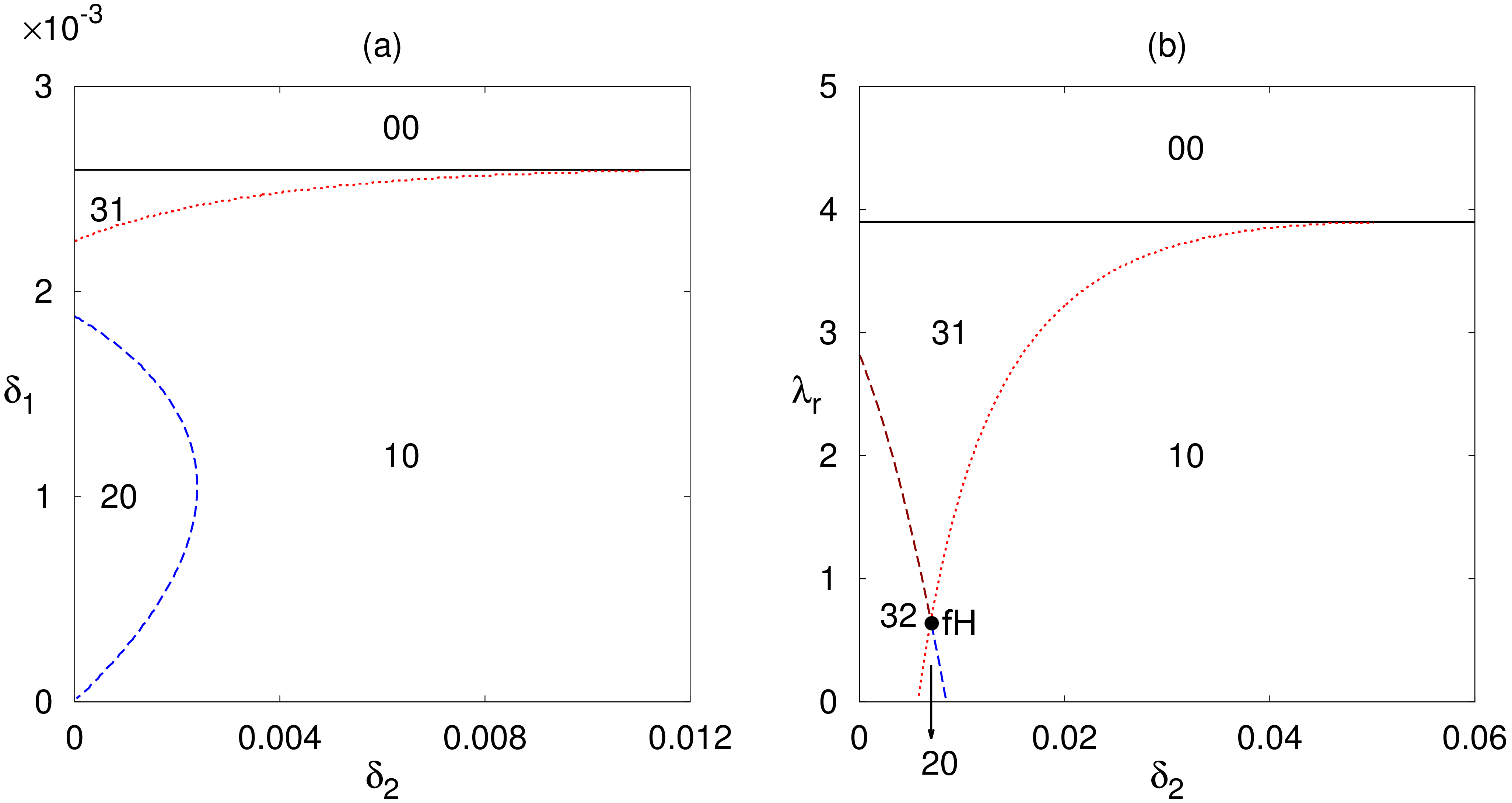}
	\caption{\small Regions of feasibility and stability of the steady states $S^{\ast}_5$ and $S^{\ast}_6$ with parameter values from Table \ref{parameter table} (\textbf{a}), and with $\mu_a=10$ (\textbf{b}). Black and red curves indicate the boundaries of feasibility and the steady-state bifurcation, whereas dashed lines (blue/brown) show the boundaries of Hopf bifurcation of the steady states $S^{\ast}_5$ and $S^{\ast}_6$, respectively, with `fH' indicating the fold-Hopf bifurcation. The first digit of the index refers to $S^{\ast}_5$, while the second corresponds to $S^{\ast}_6$, and they indicate that in that parameter region the respective steady state is unfeasible (index is `0'), stable (index is `1'), unstable via Hopf bifurcation with a periodic solution around this steady state (index is `2'), or unstable via a steady-state bifurcation (index is `3'). In all plots, the condition $\beta<d_F$ holds, so the disease-free steady state $S^{\ast}_2$ is also stable.}
	\label{S5S6}
\end{figure}

\noindent Therefore
\begin{align*}
\mbox{sgn}\left\{\left.\dfrac{d\,\mbox{Re}(\lambda)}{d\,\tau_2}\right\vert_{\tau_2=\tau^{\ast}}\right\}&=\mbox{sgn}\left\{\left(\left.\dfrac{d\,\mbox{Re}(\lambda)}{d\,\tau_2}\right\vert_{\tau_2=\tau^{\ast}}\right)^{-1}\right\}=\mbox{sgn}\left\{\dfrac{h^{\prime}(s_0)}{\Lambda\,p_0^{(1)}(i\omega_0)}\right\}\\
&=\mbox{sgn}[p_0^{(1)}(i\omega_0)h^{\prime}(s_0)],
\end{align*}
which completes the proof.
\end{proof}

We can now formulate the main result concerning stability of the steady states $S^{\ast}_5$ and $S^{\ast}_6$.

\begin{Theorem}\label{Hopf bifurcation}
	Suppose the value of $T_{reg}^{\ast}$ satisfies conditions (\ref{S5 stability1}) and (\ref{S5 stability2}). If Equation (\ref{S2 equation hopf}) has at least one positive root $s_0$, and $p_0^{(1)}(i\omega_0)h^{\prime}(s_0)>0$ with $\omega_0=\sqrt{s_0}$, then the steady state $S^{\ast}_5$ (respectively, $S^{\ast}_6$) is stable for $0\leq \tau_2<\tau^{\ast}$, unstable for $\tau_2>\tau^{\ast}$, and undergoes a Hopf bifurcation at $\tau_2=\tau^{\ast}$.
\end{Theorem}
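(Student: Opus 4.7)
The plan is to combine three ingredients already assembled in the excerpt: the stability of the steady state at $\tau_2=0$ guaranteed by conditions (\ref{S5 stability1}) and (\ref{S5 stability2}); the characterization of purely imaginary roots of the characteristic equation (\ref{S2 equation}) as the zeros of $F(\omega)$ (equivalently, positive roots of $h(s)$); and the transversality identity proved in the Lemma.

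First I would argue that conditions (\ref{S5 stability1}) and (\ref{S5 stability2}) imply that at $\tau_2=0$ every root of $\Delta(0,\lambda)=p_0(\lambda)+p_1(\lambda)+p_2(\lambda)=0$ has negative real part, so $S^{\ast}_5$ (respectively $S^{\ast}_6$) is locally asymptotically stable. Next I would invoke the fact that the roots of the transcendental characteristic equation $\Delta(\tau_2,\lambda)=0$ depend continuously on $\tau_2\geq 0$, so a root can move into the open right half-plane only by first crossing the imaginary axis. Thus stability is preserved on the interval $[0,\tau^{\ast})$ provided $\tau^{\ast}$ is indeed the smallest positive delay for which $\Delta$ has a purely imaginary root.

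To pin down this crossing, I would use the construction given just before the Lemma: substituting $\lambda=i\omega$ separates real and imaginary parts, and eliminating the $2\omega\tau_2$ factors via the auxiliary polynomial $\Delta^{(1)}$ reduces the existence of imaginary roots to the real equation $F(\omega)=0$, i.e.\ to positive roots of $h(s)=0$ with $s=\omega^2$. Under the hypothesis that $h$ admits at least one positive root $s_0$, setting $\omega_0=\sqrt{s_0}$ and reading off the explicit $\arctan$ formula gives the full sequence $\tau_{k,j}$ of candidate delays; the minimum $\tau^{\ast}=\min_{k}\tau_{k,0}$ is, by construction, the first positive $\tau_2$ at which any root of $\Delta(\tau_2,\cdot)$ touches the imaginary axis. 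So on $[0,\tau^{\ast})$ the steady state remains stable, and at $\tau_2=\tau^{\ast}$ a conjugate pair $\pm i\omega_0$ sits on the imaginary axis.

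Finally I would invoke the Lemma to control the direction of crossing: under the hypothesis $p_0^{(1)}(i\omega_0)\,h^{\prime}(s_0)>0$ (which also forces the non-degeneracy assumptions $h^{\prime}(s_0)\neq 0$ and $p_0^{(1)}(i\omega_0)\neq 0$ to hold), the Lemma gives
\[
\left.\frac{d\,\mathrm{Re}(\lambda)}{d\tau_2}\right|_{\tau_2=\tau^{\ast}}>0,
\]
so the pair of eigenvalues crosses strictly from left to right. This yields instability immediately for $\tau_2>\tau^{\ast}$ and, together with the standard Hopf-bifurcation theorem for delay differential equations (simple pair of purely imaginary eigenvalues, no resonant imaginary eigenvalue, nonzero crossing velocity), produces a Hopf bifurcation at $\tau_2=\tau^{\ast}$. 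The one genuinely delicate point I expect is confirming that $\tau^{\ast}$ is really the first crossing rather than a later one — i.e.\ that no root of $\Delta(\tau_2,\lambda)=0$ slips across the imaginary axis for $\tau_2\in(0,\tau^{\ast})$; this follows because any such crossing would force $F(\omega)=0$ for some $\omega>0$ with an associated delay strictly less than $\tau^{\ast}$, contradicting the definition of $\tau^{\ast}$ as the minimum over all $\omega_k$ and $j=0$. Everything else reduces to assembling the pieces already in place.
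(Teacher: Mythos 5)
Your proposal is correct and follows essentially the same route as the paper's own (much terser) argument: stability at $\tau_2=0$ from conditions (\ref{S5 stability1}) and (\ref{S5 stability2}), identification of $\tau^{\ast}$ as the first delay at which a purely imaginary root can appear via the positive roots of $h(s)$, and the transversality sign from Lemma \ref{lemma} to conclude the crossing is from left to right. Your added detail on continuity of the characteristic roots and on why no crossing can occur before $\tau^{\ast}$ simply makes explicit what the paper leaves implicit.
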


Since $T_{reg}^{\ast}$ satisfies conditions (\ref{S5 stability1}) and (\ref{S5 stability2}), the steady state $S^{\ast}_5/S^{\ast}_6$ is stable for $\tau_2=0$. Lemma \ref{lemma} then ensures that $\tau^{\ast}$ is the first positive value of the time delay $\tau_2$, for which the roots of the characteristic Equation (\ref{S2 equation}) cross the imaginary axis with positive speed. Hence, the steady state $S^{\ast}_5/S^{\ast}_6$ is stable for $0\leq \tau_2<\tau^{\ast}$, unstable for $\tau_2>\tau^{\ast}$, and undergoes a Hopf bifurcation at $\tau_2=\tau^{\ast}$.

\begin{Remark}
A similar result can be formulated for a subcritical Hopf bifurcation of the steady state $S^{\ast}_5/S^{\ast}_6$ at some higher value of $\tau_2$. 
\end{Remark}

The only remaining steady state is the persistent (chronic) equilibrium $S^{\ast}_8$ with all of its components being positive. Since it did not prove possible to find a closed form expression for this steady state, its~stability also has to be studied numerically.

\section{Numerical Stability Analysis and Simulations} \label{S4}

To investigate the role of different parameters in the dynamics of model (\ref{resc_model}), in this section we perform a detailed numerical bifurcation analysis and simulations of this model. Stability of different steady states is determined numerically by computing the largest real part of the characteristic eigenvalues, which is achieved by using a pseudospectral method implemented in a traceDDE suite in MATLAB \cite{Breda2006}.

Analytical results from the previous section suggest that at $\beta=d_F$, the disease-free steady state $S^{\ast}_2$ undergoes a transcritical bifurcation. For $\beta<d_F$, the disease-free steady state $S^{\ast}_2$ is stable, and the chronic steady state is infeasible. On the contrary, when $\beta>d_F$, the disease-free steady state $S^{\ast}_2$ is unstable, and in this case it makes sense to investigate stability of the chronic steady state. Therefore, these two cases are considered separately, and as a first step we fix the baseline values as given in Table~\ref{parameter table}. For this choice of parameters, we have $d_F-\beta>0$, implying that $S^{\ast}_2$ is always stable, and~Figure~\ref{S5S6} illustrates how the stability of $S^{\ast}_5$ and $S^{\ast}_6$ is affected by parameters. This figure indicates that the steady states $S^{\ast}_5$ and $S^{\ast}_6$ are only biologically feasible if the regulatory T cells do not grow too rapidly and do not clear autoreactive T cells too quickly. Importantly, Figure~\ref{S5S6} shows that the value of the rate $\delta_2$ of clearance of IL-2 by regulatory T cells does not have any effect on the thresholds of $\lambda_r$ and $\delta_1$, where the steady states $S^{\ast}_5$ and $S^{\ast}_6$ lose their feasibility. Moreover, if $\lambda_r$ and $\delta_1$ are small, then increasing the rate $\delta_2$ at which Tregs inhibit the production of IL-2 makes $S^{\ast}_6$ become unfeasible, resulting in a stable steady state $S^{\ast}_5$, which has the zero population of host cells $A$. On the other hand, the steady state $S^{\ast}_6$ associated with autoimmune responses is favoured for higher values of $\delta_1$ and $\lambda_r$. In~the case stable periodic solutions around these steady states, increasing $\delta_2$ results in the disappearance of oscillations and stabilisation of the associated steady state. At the intersection of the lines of Hopf bifurcation and the steady-state bifurcation, as determined by Theorem~\ref{Hopf bifurcation} and conditions (\ref{SS_bif}), one has the co-dimension two fold-Hopf (also known as zero-Hopf or saddle-node Hopf) bifurcation \cite{kuz98}.
\begin{table}
	\centering
	\caption{Table of parameter values.}
	\label{parameter table}
	\begin{tabular}{cccc}
		\toprule
		\textbf{Parameter}    & \textbf{Value}           & \textbf{Parameter}        & \textbf{Value}             \\ 
		\midrule
		$\beta$       & 1               & $\rho_3$         & 2                 \\
		$\mu_a$       & 20              & $d_n$            & 1                 \\ 
		$d_F$         & 1.1             & $d_a$            & 0.001             \\ 
		$\mu_F$       & 6               & $\delta_1$       & 0.0025             \\ 
		$d_{in}$      & 1               & $\delta_2$       & 0.001             \\ 
		$\alpha$      & 0.4             & $\sigma_1$       & 0.15              \\
		$\lambda_r$   & 3               & $\sigma_2$       & 0.33              \\ 
		$d_r$         & 0.4             & $d_i$            & 0.6               \\  
		$p_1$         & 0.4             & $\tau_1$         & 1.4               \\ 
		$p_2$         & 0.4             & $\tau_2$         & 0.6               \\
		$\rho_1$      & 10              & $\tau_3$         & 0.6               \\
		$\rho_2$      & 0.8             &                  &                   \\ 
	\bottomrule
	\end{tabular} 
\end{table}

Since our earlier analysis showed that stability of the steady states $S^{\ast}_5/S^{\ast}_6$ is affected by the time delay $\tau_2$, in Figure~\ref{trace DF} we consider stability of these equilibria depending on $\tau_2$ and the rate $\delta_2$. For the steady state $S^{\ast}_5$, if the effect of IL-2 on promoting proliferation of T cells is fast (i.e., $\tau_2$ is small), there~is a large range of $\delta_2$, starting with some very low values, for which $S^{\ast}_5$ is stable. Increasing the time delay $\tau_2$ results in the Hopf bifurcation of this steady state as described in Theorem~\ref{Hopf bifurcation}. One should note that for intermediate values of $\delta_2$, the~steady state $S^{\ast}_5$ undergoes stability switches, whereby~increasing the delay $\tau_2$ further results in a subcritical Hopf bifurcation, which stabilises $S^{\ast}_5$, but after some number of such stability switches eventually the steady state $S^{\ast}_5$ is unstable. For higher still values of $\delta_2$, the~steady state $S^{\ast}_5$ remains stable for an entire range of $\tau_2$ values, and the only way to lose its stability is via a steady state bifurcation as given by (\ref{SS_bif}). In the case of autoimmune steady state $S^{\ast}_6$, the~situation is somewhat different in that increasing $\delta_2$ beyond some critical values makes this steady state biologically infeasible. At the same time, for an entire range of $\delta_2$ values where it is feasible, this steady state exhibits a single loss of stability through a Hopf bifurcation for some critical value of the time delay $\tau_2$, in agreement with Theorem~\ref{Hopf bifurcation}.
\begin{figure}
	\centering
	\includegraphics[width=1 \linewidth]{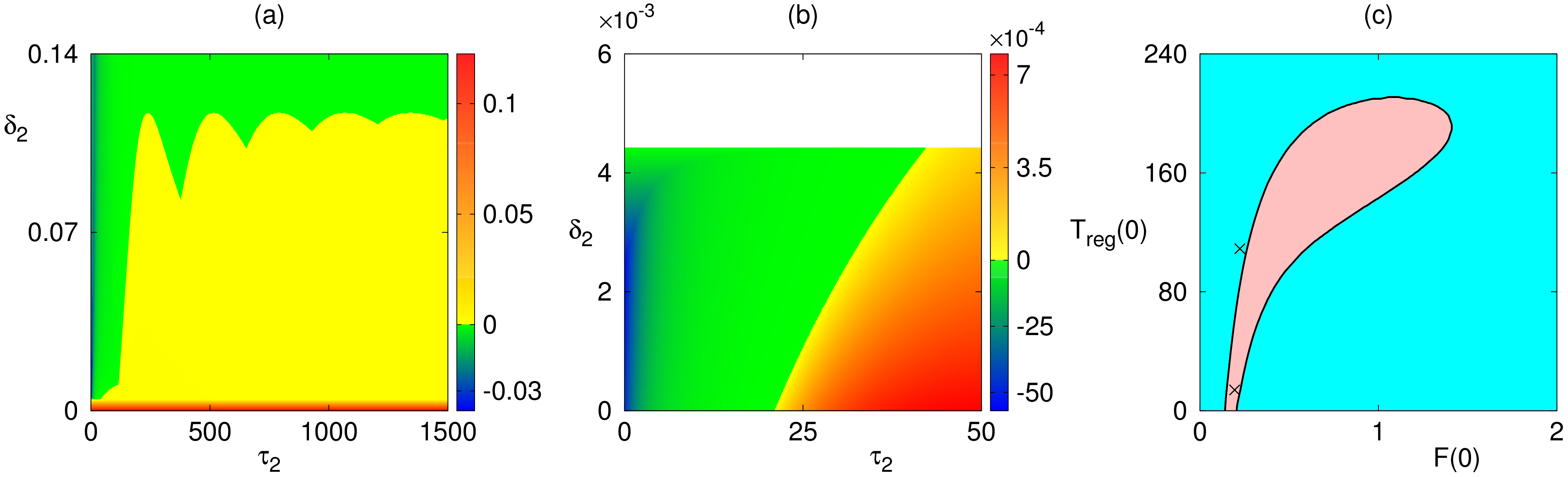}
	\caption{\small Stability of the steady states $S^{\ast}_5$ (\textbf{a}), and $S^{\ast}_6$ (\textbf{b}) with parameter values from Table~\ref{parameter table}. White~area shows the region where the steady state $S^{\ast}_6$ is infeasible. Colour code denotes max[Re($\lambda$)] for the steady states when they are feasible. In all plots the condition $d_F>\beta$ holds, so the disease-free steady state $S^{\ast}_2$ is stable. Basins of attraction of different steady states depending on the initial conditions (\textbf{c}), with~other initial conditions specified in~(\ref{initial condition}), and parameter values from Table \ref{parameter table}, except for $\tau_2=18$. Cyan and pink areas are the basins of attraction of $S^{\ast}_2$ and $S^{\ast}_6$, respectively.}
	\label{trace DF}
\end{figure}

As mentioned earlier, for parameter values used in Figure~\ref{trace DF}, the disease-free steady state $S^{\ast}_2$ is stable. Hence, the system exhibits a bi-stability between a disease-free state and either stable steady states $S^{\ast}_5/S^{\ast}_6$, or periodic solutions around these steady states. To investigate this bi-stability, we choose parameter values as in Table~\ref{parameter table} except for $\tau_2=18$, which corresponds to a stable steady state $S^{\ast}_6$, and we fix initial conditions for state variables as follows,
\begin{equation}\label{initial condition}
(A(0),T_{in}(0),T_{nor}(0),T_{aut}(0),I(0))=(0.9, 0.8, 0, 0, 0),
\end{equation} 
except for initial amounts of infected cells and regulatory T cells that are allowed to vary. Figure~\ref{trace DF}c illustrates the bi-stability between $S^{\ast}_2$ and $S^{\ast}_6$ in terms of their basins of attraction. It is worth noting that recently significant research in approximation theory and meshless interpolation has focused on developing techniques for detection and analysis of attraction basins \cite{cavoretto2015,cavoretto2016,de2016,francomano2016,cavoretto2017,francomano2018}. Figure~\ref{trace DF}c suggests that for very large initial amounts of regulatory T cells, the system converges to the disease-free steady state. It also indicates that if the initial amount of infected cells is very small or is bigger than some specific value, then the infection will be cleared.
Interestingly, increasing the initial amount of the regulatory T cells results in a larger range of initial amounts of infection, for which the system tends to a stable autoimmune state $S^{\ast}_6$. In Figure~\ref{trace DF}b we discovered that increasing $\tau_2$ makes the autoimmune steady state $S^{\ast}_6$ undergo a Hopf bifurcation, in which case the system will exhibit a bi-stability between
\noindent stable $S^{\ast}_2$ and a periodic solution around $S^{\ast}_6$. Our numerical investigation suggests that the shape of basins of attraction in this case is qualitatively similar to that shown in Figure~\ref{trace DF}c, with the basin of attraction of the stable steady state $S^{\ast}_6$ being replaced by the basin of attraction of the periodic solution around this steady state.

Figure \ref{simulation S1 S3} shows temporary evolution of the system (\ref{resc_model}) in the regime of bi-stability between a stable disease-free steady state and a stable autoimmune steady state $S^{\ast}_6$ (similar pattern of behaviour is exhibited in the case of bi-stability between $S^{\ast}_2$ and $S^{\ast}_5$). It also illustrates how the system develops a periodic solution around the steady state $S^{\ast}_6$ for a higher value of $\tau_2$. Periodic oscillations around the steady state $S^{\ast}_6$ biologically correspond to a genuine autoimmune state: after the initial infection is cleared, the system exhibits sustained endogenous oscillations, characterised by periods of significant reduction in the number of organ cells through a negative action of autoreactive T cells, separated by periods of quiescence. This type of behaviour is often observed in clinical manifestations of autoimmune disease \cite{blyu15,Bezra95,Davies97,Nyla12}. This result has substantial biological significance as effectively it suggests that even for the same kinetic parameters of immune response, the ultimate state of the system, which can be either a  successful clearance of infection without lasting consequences, or progression to autoimmunity, also depends on the strength of the initial infection and of the initial state of the immune system, as represented by the initial number of regulatory T cells.
\begin{figure}
	\centering
	\includegraphics[width=0.9 \linewidth]{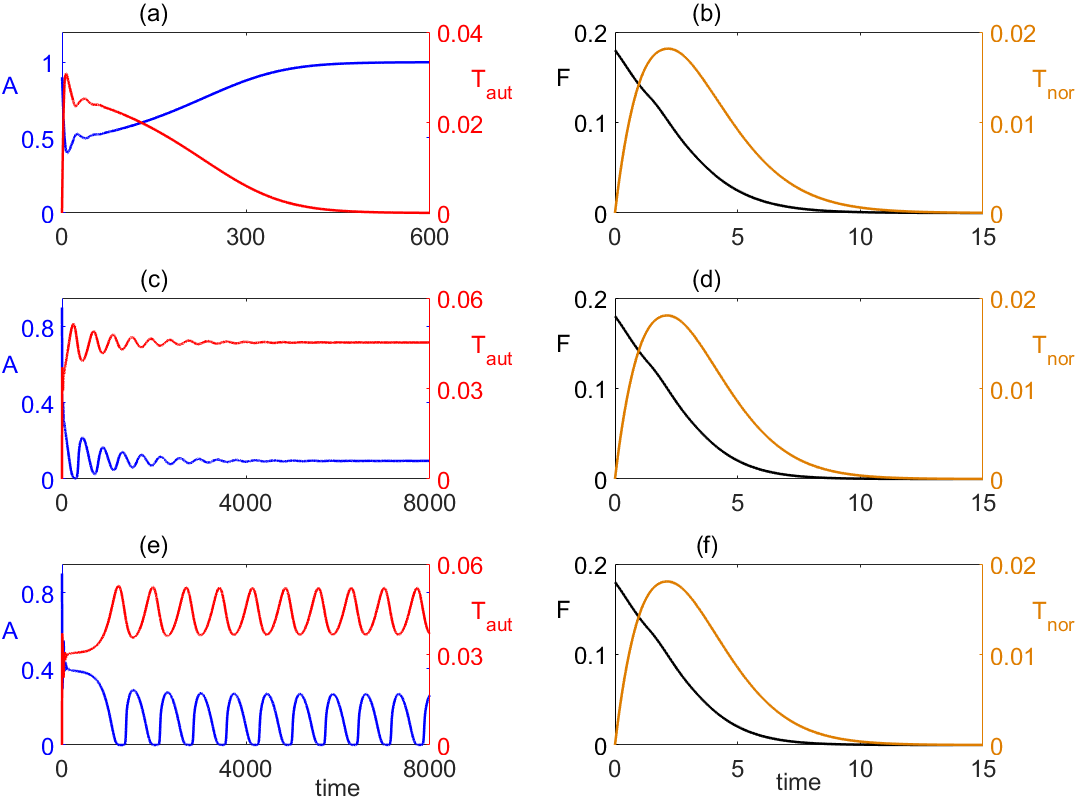}
	\caption{\small Numerical solutions of the model with parameters values from Table \ref{parameter table}, except for $\tau_2=18$. (\textbf{a},\textbf{b})~Stable disease-free steady state $S^{\ast}_2$ for $F(0)=0.18$, and $T_{reg}(0)=100$. (\textbf{c},\textbf{d}) Transient oscillations settling on a stable steady state $S^{\ast}_6$ for $F(0)=0.18$, and $T_{reg}(0)=10$. (\textbf{e},\textbf{f}) Autoimmune dynamics represented by periodic oscillations around the steady state $S^{\ast}_6$ for $\tau_2=32$, $F(0)=0.18$, and $T_{reg}(0)=10$.}
	\label{simulation S1 S3}
\end{figure}

Next we consider a situation where $\beta>d_F$, so the disease-free steady state is unstable, and~the system can have three steady states $S^{\ast}_5$, $S^{\ast}_6$ and $S^{\ast}_8$. Our earlier results \cite{Fatehi2018b} suggest that in the case where regulatory T cells do not inhibit the production of IL-2, i.e., for $\delta_2=0$, the steady state $S^{\ast}_6$ is stable.
\begin{figure}
	\centering
	\includegraphics[width=0.9 \linewidth]{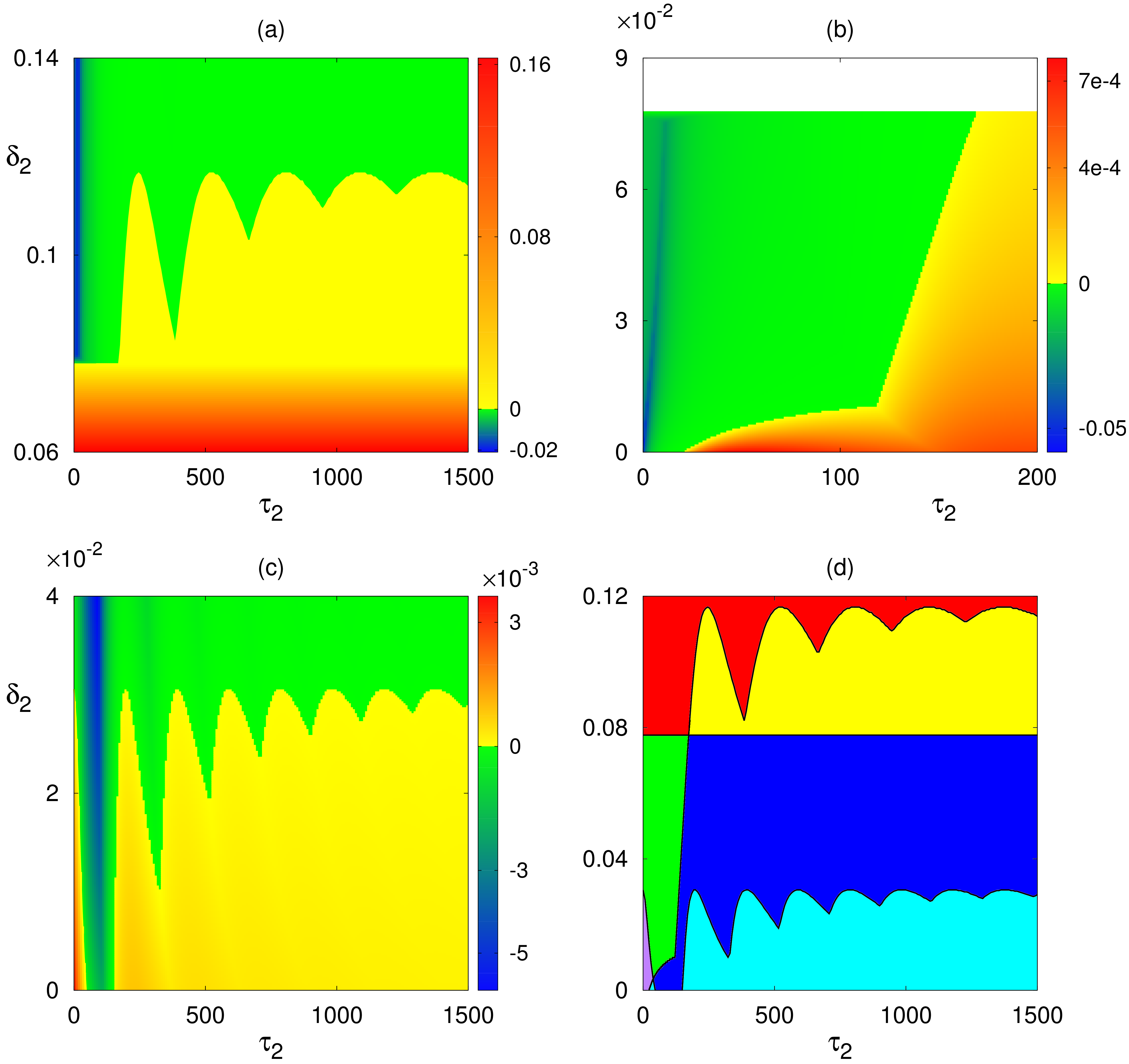} \vspace{-4pt}
	\caption{\small Stability of $S^{\ast}_5$ (\textbf{a}), $S^{\ast}_6$ (\textbf{b}), and $S^{\ast}_8$ (\textbf{c}), with parameter values from Table~\ref{parameter table}, except for $\beta=1.4$ and $\sigma_2=1$, so that $\beta>d_F$. White area shows the region where the steady state is infeasible. Colour code denotes max[Re($\lambda$)] for each steady states when it is feasible. (\textbf{d}) Summary of stability results. Green indicates the region where $S^{\ast}_6$ and $S^{\ast}_8$ are stable, and $S^{\ast}_5$ is unstable, whereas red is the area where $S^{\ast}_5$ and $S^{\ast}_8$ are stable, and $S^{\ast}_6$ is infeasible. Yellow is where $S^{\ast}_8$ is stable, $S^{\ast}_5$ is unstable, and $S^{\ast}_6$ is infeasible. Purple shows the region where $S^{\ast}_6$ is stable, but $S^{\ast}_5$ and $S^{\ast}_8$ are unstable. Blue and cyan indicate the regions where $S^{\ast}_5$ and $S^{\ast}_6$ are unstable, but $S^{\ast}_8$ is stable or unstable, respectively.}
	\label{trace endemic}
\end{figure}
Figure~\ref{trace endemic} shows regions of feasibility and stability of these steady states depending on $\delta_2$ and $\tau_2$ in this case. One observes that $S^{\ast}_5$ and $S^{\ast}_6$, whose stability boundaries are determined by Theorem~\ref{Hopf bifurcation}, exhibit~the same behaviour as in Figure~\ref{trace DF}, namely, for $S^{\ast}_5$ increasing $\tau_2$ causes multiple stability switches for smaller values of $\delta_2$, and the steady state is unstable for very small $\delta_2$ and stable for large $\delta_2$; in contrast, $S^{\ast}_5$ exhibits a single loss of stability via Hopf bifurcation at some critical value of the time delay $\tau_2$, which itself increases with $\delta_2$. Behaviour of $S^{\ast}_8$ is similar to that of $S^{\ast}_5$ in that there are multiple stability switches for increasing value of $\tau_2$ and small to intermediate values of $\delta_2$, while for high values of $\delta_2$, the chronic steady state $S^{\ast}_8$ is stable for all values of $\tau_2$. Figure~\ref{trace endemic}d divides the $\delta_2$-$\tau_2$ plane into different regions based on feasibility and stability of these steady states and shows that increasing $\delta_2$ makes the autoimmune steady state $S^{\ast}_6$ infeasible. In other regions, the system can exhibit a bi-stability between a stable steady state $S^{\ast}_8$ and either a stable steady state $S^{\ast}_5$, or a periodic solution around $S^{\ast}_5$.

Figure \ref{bistability endemic} illustrates the basins of attraction of the steady states $S^{\ast}_5$, $S^{\ast}_6$ and $S^{\ast}_8$, as well as periodic solutions around $S^{\ast}_8$. Figure~\ref{bistability endemic}a shows the basins of attraction of the steady states $S^{\ast}_5$ and $S^{\ast}_8$ and demonstrates that if the initial number of regulatory T cells or infected cells is sufficiently high, or the initial amount of infected cells is very low, the immune response neither eliminates infection nor clears autoreactive T cells, and the system approaches the stable steady state $S^{\ast}_5$. Figure~\ref{bistability endemic}b illustrates bi-stability between the stable steady state $S^{\ast}_6$ and a periodic solution around $S^{\ast}_8$, and has a different behaviour to than shown in Figure~\ref{bistability endemic}a. This figure suggests that for a specific range of $F(0)$ the system converges to a stable autoimmune state $S^{\ast}_6$ for all values of $T_{reg}(0)$. However, if the initial number of infected cells is very high or very low, the system instead develops a periodic solution around the steady state $S^{\ast}_8$ associated with chronic infection. 
\begin{figure}
	\centering
	\includegraphics[width=\linewidth]{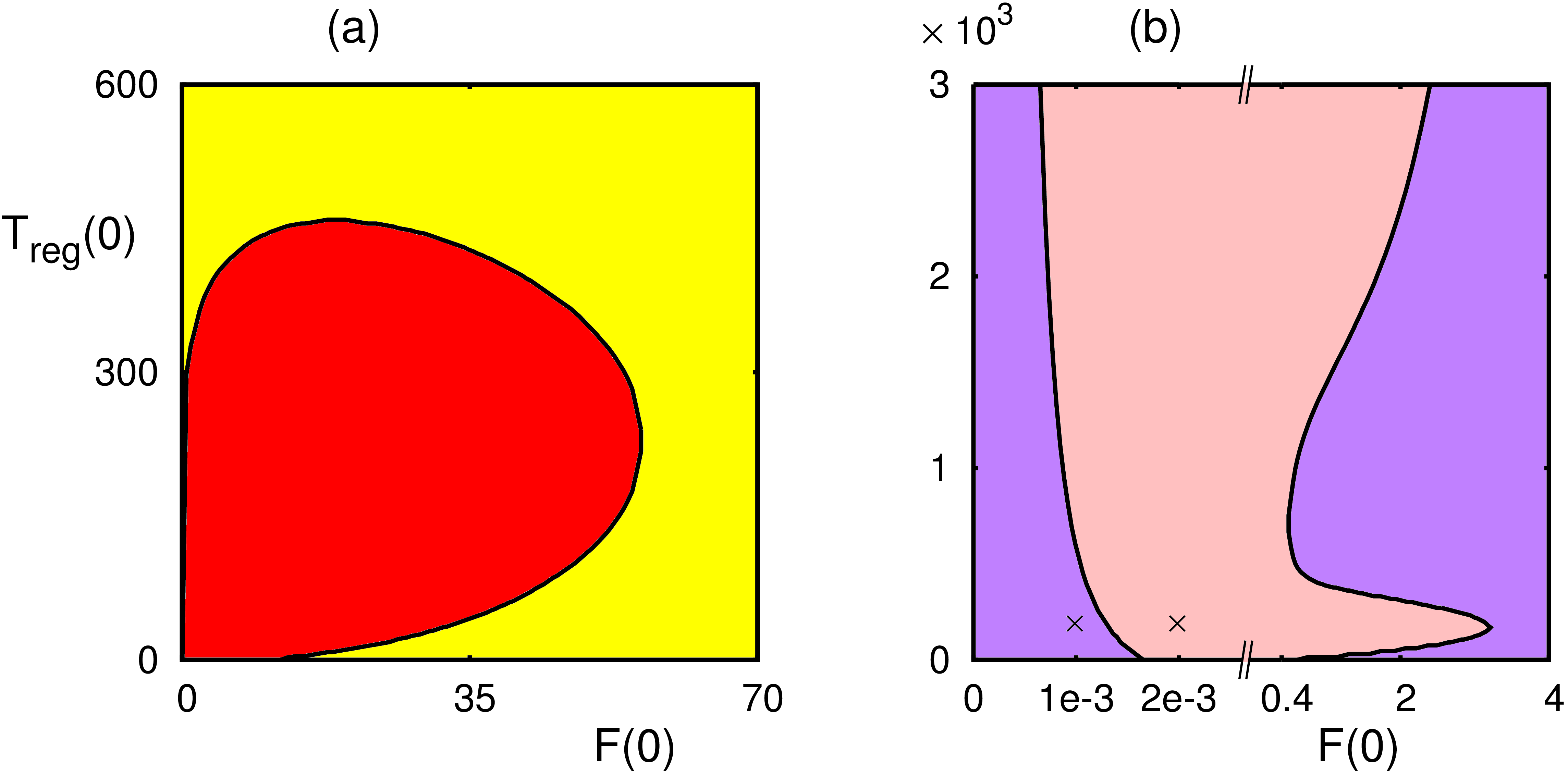}
	\caption{\small Bi-stability analysis of the steady states $S^{\ast}_5$, $S^{\ast}_6$, and $S^{\ast}_8$ with the same parameter values as in Figure~\ref{trace endemic}, except for (\textbf{a}) $\delta_2=0.1$, (\textbf{b}) $\delta_2=0.02$. Yellow indicates the basin of attraction of the chronic steady state $S^{\ast}_8$, purple is the basin of attraction of periodic solutions around $S^{\ast}_8$. Red and pink are the basins of attraction of the steady states $S^{\ast}_5$ and $S^{\ast}_6$, respectively.}
	\label{bistability endemic}
\end{figure}

Figure~\ref{simulation S3 S4} illustrates a regime of bi-stability between a stable steady state $S^{\ast}_6$ and a periodic solution around $S^{\ast}_8$ for combinations of initial conditions indicated by crossed in Figure~\ref{bistability endemic}b. It also illustrates how the system develops a stable solution around the steady state $S^{\ast}_8$ for a higher value of $\tau_2$. This~figure shows that by increasing the initial number of infected cells the behaviour of the system changes, as it then approaches the autoimmune steady state $S^{\ast}_6$. Interestingly, one can observe that for high values of $F(0)$ the system can eliminate the infection, but it cannot clear the autoreactive T cells, in which case the system converges to $S^{\ast}_6$. On the other hand, for a smaller number of infected cells the system develops a periodic solution around the endemic steady state.

Figure \ref{S4 taus} shows how the stability of the chronic infection steady state $S^{\ast}_8$ changes with respect to time delays. Figure~\ref{S4 taus}a indicates that for small values of $\tau_2$ (i.e., when the influence of IL-2 on proliferation of T cells is occurring quite rapidly), the steady state $S^{\ast}_8$ is stable, and increasing the time delay $\tau_1$ associated with viral eclipse phase does not have an effect on its stability. At the same time, {\bf if} $\tau_2$ exceeds some specific value, by increasing $\tau_1$ the chronic steady state switches between being stable or unstable. Figure \ref{S4 taus}b demonstrates a different behaviour, suggesting that for each value of $\tau_1$, there is small range of $\tau_3$ values where $S^{\ast}_8$ is stable, but for smaller and larger values of $\tau_3$ it is unstable. For intermediate values of the eclipse phase delay $\tau_1$, there is an additional narrow range of $\tau_3$ values where $S^{\ast}_8$ is stable. Figure~\ref{S4 taus}c illustrates that for very small, respectively very large, values of $\tau_3$, the~chronic infection steady state is stable, respectively unstable for any value of $\tau_2$; for intermediate values of $\tau_3$, this steady state undergoes a finite number of stability switches for increasing values of $\tau_2$ and eventually becomes unstable.

It should be noted Figure \ref{S4 taus} shows that unlike $\tau_1$ and $\tau_2$, once the steady state $S^{\ast}_8$ loses stability via Hopf bifurcation due to increasing $\tau_3$, it cannot regain stability for higher values of $\tau_3$.
\begin{figure}
	\centering
	\includegraphics[width=\linewidth]{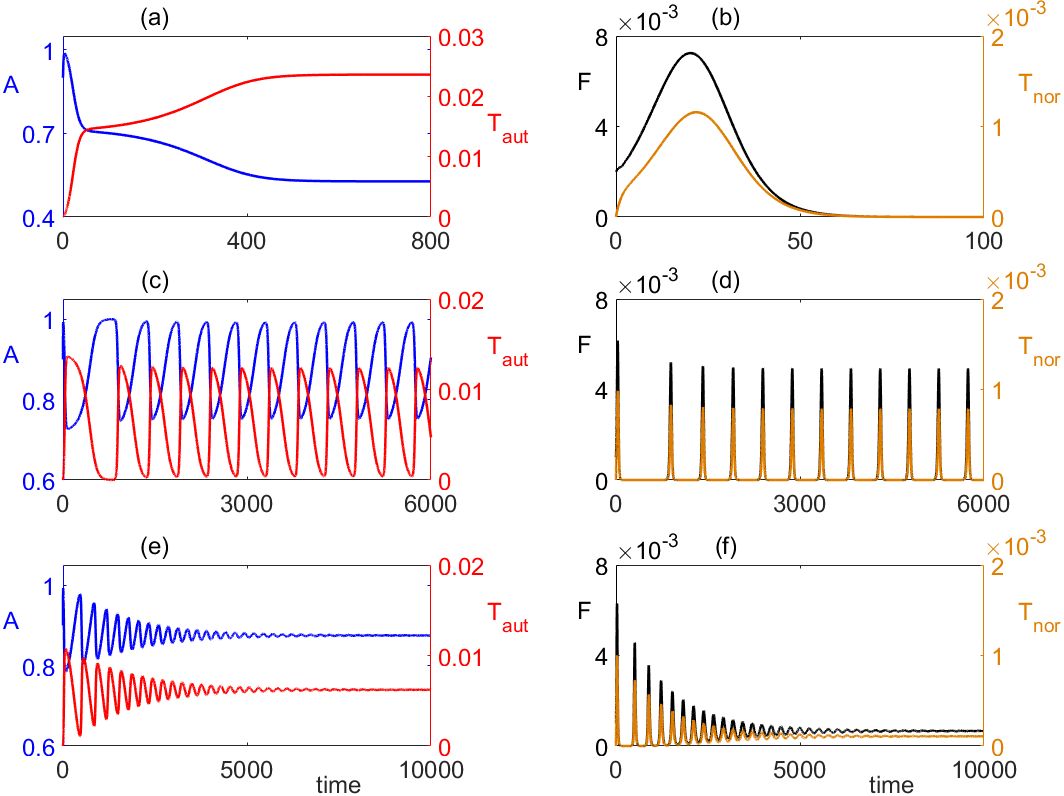}
	\caption{\small Numerical solutions of the model with same parameters values as Figure \ref{bistability endemic}b. (\textbf{a},\textbf{b}) Stable steady state $S^{\ast}_6$ for $F(0)=0.002$ and $T_{reg}(0)=200$. (\textbf{c},\textbf{d}) Periodic oscillations around the steady state $S^{\ast}_8$ for $F(0)=0.001$ and $T_{reg}(0)=200$. (\textbf{e},\textbf{f}) Transient oscillations settling on a stable steady state $S^{\ast}_8$ for $\tau_2=25$, $F(0)=0.001$ and $T_{reg}(0)=200$.}
	\label{simulation S3 S4}
\end{figure}
\unskip
\begin{figure}%[H]
	\centering
	\includegraphics[width=\linewidth]{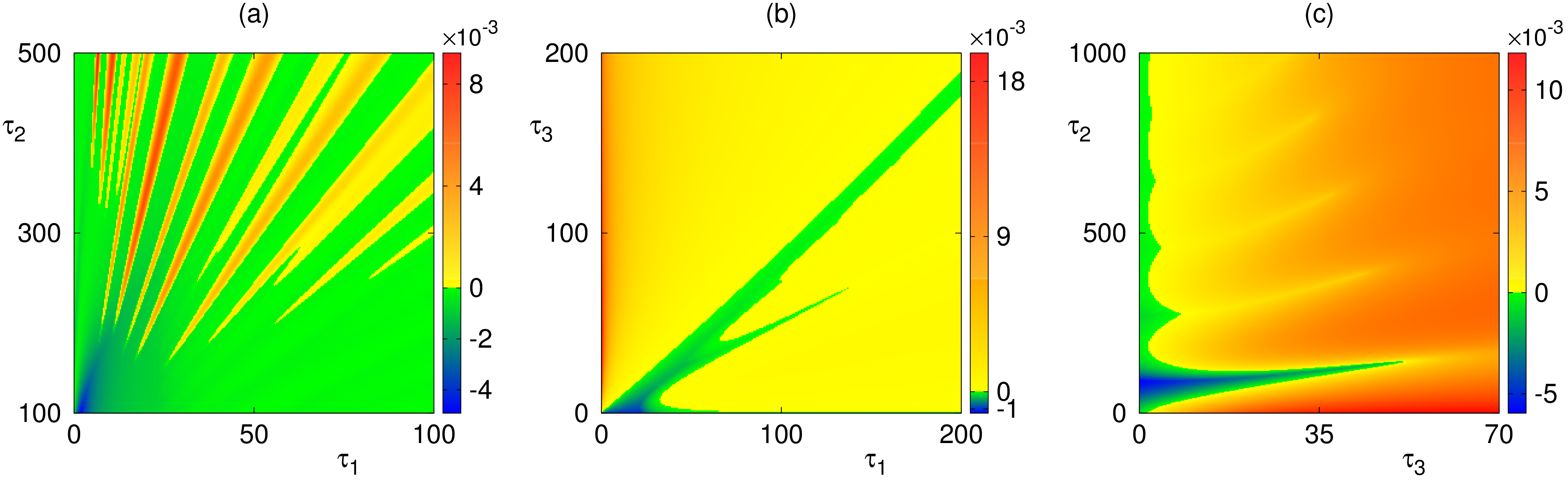}
	\caption{\small Colour code denotes max[Re($\lambda$)] for the endemic steady state $S^{\ast}_8$ depending on $\tau_1-\tau_2$ (\textbf{a}), $\tau_1-\tau_3$ (\textbf{b}), and $\tau_3-\tau_2$ (\textbf{c}), with the parameter values taken from Table \ref{parameter table}, except for $\beta=1.4$, $\sigma_2=1$, and $\delta_2=0.04$.}
	\label{S4 taus}
\end{figure}

%%%%%%%%%%%%%%%%%%%%%%%%%%%%%%%%%%%%%%%%%%
\section{Conclusions}  \label{S5}

In this paper, we have developed and analysed a time-delayed model of immune response to a viral infection, which accounts for T cells with different activation thresholds, a cytokine mediating T~cell proliferation, as well as regulatory T cells. Particular attention is payed to the dual suppressive role of regulatory T cells in terms of reducing the amount of autoreactive T cells, and also inhibiting IL-2. To~achieve better biological realism of the model, we have explicitly included time delays associated with the eclipse phase of the virus life cycle, stimulation/proliferation of T cells by IL-2, and suppression of IL-2 by regulatory T cells. Depending on the values of parameters, the system can have four steady states: the disease-free state, the state characterised by the death of host cells, the autoimmune state, and a state of chronic infection. We have established conditions for stability and steady-state or Hopf bifurcations of these steady states in terms of system parameters.

In the case where the natural death rate of infected cells exceeds the infection rate, the immune system is able to clear the infection, and the disease-fee steady state is stable. In this regime, the system can also support the autoimmune steady state or the steady state with the death of host cells, either of which can be stable, or give rise to a periodic solution emerging via Hopf bifurcation. In the opposite case, when the natural death rate of infected cells is smaller than the infection rate, the disease-fee steady state is unstable, but it is possible to have a bi-stability between the other three steady states or periodic solutions around them. To better understand bi-stability between different dynamical regimes, we have used numerical simulations to identify basins of attraction of different steady states and periodic solutions depending on the initial level of infection and the initial number of regulatory T cells. The fact that for the same parameter values the system can exhibit bi-stability between a disease-free steady state and an autoimmune state, represented by sustained periodic oscillations following the clearance of infection, is very important from a clinical point of view, as effectively it suggests that the progress and eventual outcome of viral infection is also determined by the strength of infection and the initial state of the immune system. One counter-intuitive observation is that in the case of bi-stability with a disease-free steady state, for higher initial numbers of regulatory T cells, the autoimmune steady state is actually stable for a wider range of initial levels of infection. In~this regime of bi-stability, increasing the time delay associated with the positive impact of IL-2 on proliferation of T cells results in the loss of stability of autoimmune steady state and emergence of autoimmune dynamics, characterised by stable periodic oscillations. On the contrary, in the case where the disease-free steady state is unstable, increasing this time delay results in stabilisation of the chronic infection.

There are several directions in which the work presented in this paper can be extended. One~direction is exploration of the contributions from other components of immune response, more specifically, antibodies and memory T cells, to the onset and progress of autoimmunity \cite{Skap05,Antia05}. This is particularly important from the perspective that clinically the onset of autoimmune disease is often taking place on a much longer scale than the timescale of a regular immune response to a viral infection, so~memory T cells can be expected to play a more substantial role. While our model has focused on one specific growth cytokine IL-2, a number of other cytokines, such as IL-7 \cite{Schluns00}, TNF$\beta$ and IL-10 \cite{Saka04}, are known to significantly affect homeostasis and proliferation of different types of T cells, as well as mediate their efficiency in eliminating the infection. Including these immune mediators explicitly in the model can provide further significant insights into the dynamics of immune response, as has been recently demonstrated on the example of a detailed model of immune response to hepatitis B \cite{Fatehi2018d}. Another~biologically relevant and mathematically challenging problem is the investigation of the interplay between stochasticity, which is known to be an intrinsic feature of immune response \cite{pere97,Fatehi2018a}, and effects of time delays associated with various aspects of immune dynamics.

\section*{Appendix}\label{A1}

Coefficients of Equation (\ref{S2 equation hopf}) for Hopf frequency are given below.\\

{\small
$b_2=\Big( -{\delta_{{1}}}^{4}{\delta_{{2}}}^{4}{\rho_{{1}}}^{2}{Y}^{2}+2
\,{d_{{r}}}^{2}{\delta_{{1}}}^{4}{\delta_{{2}}}^{4} \Big) {{T_{reg}^{\ast}}}
^{8}+ \Big( -2\,{\delta_{{1}}}^{3}{\delta_{{2}}}^{3}\rho_{{1}}
\Big( d_{{a}}\delta_{{2}}\rho_{{1}}+d_{{i}}\delta_{{1}}\rho_{{1}}-2
\,d_{{r}}\delta_{{2}}\rho_{{3}} \Big) {I^{\ast}}^{2}+8\,{d_{{r}}}^{2}{
	\delta_{{1}}}^{3}{\delta_{{2}}}^{3} \Big( d_{{a}}\delta_{{2}}+d_{{i}}
\delta_{{1}} \Big)  \Big) {{T_{reg}^{\ast}}}^{7}+ \Big( -2\,{\delta_{{1}}
}^{2}{\delta_{{2}}}^{3}{\rho_{{1}}}^{2}\rho_{{3}} \Big( 4\,\delta_{{1
}}\rho_{{1}}+\delta_{{2}}\rho_{{1}}+5\,\delta_{{2}}\rho_{{3}} \Big) 
{I^{\ast}}^{4}-{\delta_{{1}}}^{2}{\delta_{{2}}}^{2} \Big( {d_{{a}}}^{2}{
	\delta_{{2}}}^{2}{\rho_{{1}}}^{2}+2\,d_{{a}}d_{{i}}\delta_{{1}}\delta_
{{2}}{\rho_{{1}}}^{2}-8\,d_{{a}}d_{{r}}{\delta_{{2}}}^{2}\rho_{{1}}
\rho_{{3}}+{d_{{i}}}^{2}{\delta_{{1}}}^{2}{\rho_{{1}}}^{2}-12\,d_{{i}}
d_{{r}}\delta_{{1}}\delta_{{2}}\rho_{{1}}\rho_{{3}}+2\,{d_{{r}}}^{2}{
	\delta_{{1}}}^{2}{\rho_{{1}}}^{2}+2\,{d_{{r}}}^{2}\delta_{{1}}\delta_{
	{2}}{\rho_{{1}}}^{2}+6\,{d_{{r}}}^{2}\delta_{{1}}\delta_{{2}}\rho_{{1}
}\rho_{{3}}+{d_{{r}}}^{2}{\delta_{{2}}}^{2}{\rho_{{1}}}^{2}+8\,{d_{{r}
}}^{2}{\delta_{{2}}}^{2}{\rho_{{3}}}^{2} \Big) {I^{\ast}}^{2}+2\,{d_{{r}}}^
{2}{\delta_{{1}}}^{2}{\delta_{{2}}}^{2} \Big( 6\,{d_{{a}}}^{2}{\delta
	_{{2}}}^{2}+16\,d_{{a}}d_{{i}}\delta_{{1}}\delta_{{2}}+6\,{d_{{i}}}^{2
}{\delta_{{1}}}^{2}+{d_{{r}}}^{2}{\delta_{{1}}}^{2}+{d_{{r}}}^{2}{
	\delta_{{2}}}^{2} \Big)  \Big) {{T_{reg}^{\ast}}}^{6}+ \Big( -2\,\rho_{{3
}}{\delta_{{2}}}^{2}\rho_{{1}}\delta_{{1}} \Big( 6\,d_{{a}}\delta_{{1
}}\delta_{{2}}{\rho_{{1}}}^{2}+6\,d_{{a}}{\delta_{{2}}}^{2}\rho_{{1}}
\rho_{{3}}+9\,d_{{i}}{\delta_{{1}}}^{2}{\rho_{{1}}}^{2}+5\,d_{{i}}
\delta_{{1}}\delta_{{2}}{\rho_{{1}}}^{2}+22\,d_{{i}}\delta_{{1}}\delta
_{{2}}\rho_{{1}}\rho_{{3}}+6\,d_{{r}}{\delta_{{1}}}^{2}{\rho_{{1}}}^{2
}+2\,d_{{r}}\delta_{{1}}\delta_{{2}}{\rho_{{1}}}^{2}-4\,d_{{r}}{\delta
	_{{2}}}^{2}{\rho_{{3}}}^{2} \Big) {I^{\ast}}^{4}+2\,\delta_{{1}}\delta_{{2}
} \Big( {d_{{a}}}^{2}d_{{i}}\delta_{{1}}{\delta_{{2}}}^{2}{\rho_{{1}}
}^{2}+2\,{d_{{a}}}^{2}d_{{r}}{\delta_{{2}}}^{3}\rho_{{1}}\rho_{{3}}+d_
{{a}}{d_{{i}}}^{2}{\delta_{{1}}}^{2}\delta_{{2}}{\rho_{{1}}}^{2}+10\,d
_{{a}}d_{{i}}d_{{r}}\delta_{{1}}{\delta_{{2}}}^{2}\rho_{{1}}\rho_{{3}}
-3\,d_{{a}}{d_{{r}}}^{2}{\delta_{{1}}}^{2}\delta_{{2}}{\rho_{{1}}}^{2}
-2\,d_{{a}}{d_{{r}}}^{2}\delta_{{1}}{\delta_{{2}}}^{2}{\rho_{{1}}}^{2}
-8\,d_{{a}}{d_{{r}}}^{2}\delta_{{1}}{\delta_{{2}}}^{2}\rho_{{1}}\rho_{
	{3}}-8\,d_{{a}}{d_{{r}}}^{2}{\delta_{{2}}}^{3}{\rho_{{3}}}^{2}+6\,{d_{
		{i}}}^{2}d_{{r}}{\delta_{{1}}}^{2}\delta_{{2}}\rho_{{1}}\rho_{{3}}-d_{
	{i}}{d_{{r}}}^{2}{\delta_{{1}}}^{3}{\rho_{{1}}}^{2}-3\,d_{{i}}{d_{{r}}
}^{2}{\delta_{{1}}}^{2}\delta_{{2}}{\rho_{{1}}}^{2}-8\,d_{{i}}{d_{{r}}
}^{2}{\delta_{{1}}}^{2}\delta_{{2}}\rho_{{1}}\rho_{{3}}-d_{{i}}{d_{{r}
}}^{2}\delta_{{1}}{\delta_{{2}}}^{2}{\rho_{{1}}}^{2}-2\,d_{{i}}{d_{{r}
}}^{2}\delta_{{1}}{\delta_{{2}}}^{2}\rho_{{1}}\rho_{{3}}-16\,d_{{i}}{d
	_{{r}}}^{2}\delta_{{1}}{\delta_{{2}}}^{2}{\rho_{{3}}}^{2}+{d_{{r}}}^{3
}{\delta_{{1}}}^{2}\delta_{{2}}\rho_{{1}}\rho_{{3}}+2\,{d_{{r}}}^{3}
\delta_{{1}}{\delta_{{2}}}^{2}\rho_{{1}}\rho_{{3}}+2\,{d_{{r}}}^{3}{
	\delta_{{2}}}^{3}\rho_{{1}}\rho_{{3}} \Big) {I^{\ast}}^{2}+4\,{d_{{r}}}^{2}
\delta_{{1}}\delta_{{2}} \Big( 2\,{d_{{a}}}^{3}{\delta_{{2}}}^{3}+12
\,{d_{{a}}}^{2}d_{{i}}\delta_{{1}}{\delta_{{2}}}^{2}+12\,d_{{a}}{d_{{i
}}}^{2}{\delta_{{1}}}^{2}\delta_{{2}}+2\,d_{{a}}{d_{{r}}}^{2}{\delta_{
		{1}}}^{2}\delta_{{2}}+d_{{a}}{d_{{r}}}^{2}{\delta_{{2}}}^{3}+2\,{d_{{i
}}}^{3}{\delta_{{1}}}^{3}+d_{{i}}{d_{{r}}}^{2}{\delta_{{1}}}^{3}+2\,d_
{{i}}{d_{{r}}}^{2}\delta_{{1}}{\delta_{{2}}}^{2} \Big)  \Big) {{T_{reg}^{\ast}}}^{5}+ \Big( -2\,{\delta_{{2}}}^{2}{\rho_{{1}}}^{2}{\rho_{{3}}
}^{2} \Big( {\delta_{{1}}}^{2}{\rho_{{1}}}^{2}+\delta_{{1}}\delta_{{2
}}{\rho_{{1}}}^{2}+3\,\delta_{{1}}\delta_{{2}}\rho_{{1}}\rho_{{3}}+2\,
{\delta_{{2}}}^{2}{\rho_{{3}}}^{2} \Big) {I^{\ast}}^{6}-2\,\rho_{{3}}\delta
_{{2}}\rho_{{1}} \Big( 2\,{d_{{a}}}^{2}\delta_{{1}}{\delta_{{2}}}^{2}
{\rho_{{1}}}^{2}+{d_{{a}}}^{2}{\delta_{{2}}}^{3}\rho_{{1}}\rho_{{3}}+9
\,d_{{a}}d_{{i}}{\delta_{{1}}}^{2}\delta_{{2}}{\rho_{{1}}}^{2}+24\,d_{
	{a}}d_{{i}}\delta_{{1}}{\delta_{{2}}}^{2}\rho_{{1}}\rho_{{3}}+12\,d_{{
		a}}d_{{r}}{\delta_{{1}}}^{2}\delta_{{2}}{\rho_{{1}}}^{2}-4\,d_{{a}}d_{
	{r}}{\delta_{{2}}}^{3}{\rho_{{3}}}^{2}+4\,{d_{{i}}}^{2}{\delta_{{1}}}^
{3}{\rho_{{1}}}^{2}+8\,{d_{{i}}}^{2}{\delta_{{1}}}^{2}\delta_{{2}}{
	\rho_{{1}}}^{2}+33\,{d_{{i}}}^{2}{\delta_{{1}}}^{2}\delta_{{2}}\rho_{{
		1}}\rho_{{3}}+9\,d_{{i}}d_{{r}}{\delta_{{1}}}^{3}{\rho_{{1}}}^{2}+9\,d
_{{i}}d_{{r}}{\delta_{{1}}}^{2}\delta_{{2}}{\rho_{{1}}}^{2}+2\,d_{{i}}
d_{{r}}\delta_{{1}}{\delta_{{2}}}^{2}{\rho_{{1}}}^{2}-20\,d_{{i}}d_{{r
}}\delta_{{1}}{\delta_{{2}}}^{2}{\rho_{{3}}}^{2}-4\,{d_{{r}}}^{2}{
	\delta_{{1}}}^{2}\delta_{{2}}\rho_{{1}}\rho_{{3}}-6\,{d_{{r}}}^{2}
\delta_{{1}}{\delta_{{2}}}^{2}\rho_{{1}}\rho_{{3}}-4\,{d_{{r}}}^{2}
\delta_{{1}}{\delta_{{2}}}^{2}{\rho_{{3}}}^{2}+{d_{{r}}}^{2}{\delta_{{
			2}}}^{3}\rho_{{1}}\rho_{{3}}-4\,{d_{{r}}}^{2}{\delta_{{2}}}^{3}{\rho_{
		{3}}}^{2} \Big) {I^{\ast}}^{4}+ \Big( 2\,{d_{{a}}}^{3}d_{{i}}\delta_{{1}}{
	\delta_{{2}}}^{3}{\rho_{{1}}}^{2}+6\,{d_{{a}}}^{2}{d_{{i}}}^{2}{\delta
	_{{1}}}^{2}{\delta_{{2}}}^{2}{\rho_{{1}}}^{2}+4\,{d_{{a}}}^{2}d_{{i}}d
_{{r}}\delta_{{1}}{\delta_{{2}}}^{3}\rho_{{1}}\rho_{{3}}-7\,{d_{{a}}}^
{2}{d_{{r}}}^{2}{\delta_{{1}}}^{2}{\delta_{{2}}}^{2}{\rho_{{1}}}^{2}-2
\,{d_{{a}}}^{2}{d_{{r}}}^{2}\delta_{{1}}{\delta_{{2}}}^{3}{\rho_{{1}}}
^{2}-14\,{d_{{a}}}^{2}{d_{{r}}}^{2}\delta_{{1}}{\delta_{{2}}}^{3}\rho_
{{1}}\rho_{{3}}-8\,{d_{{a}}}^{2}{d_{{r}}}^{2}{\delta_{{2}}}^{4}{\rho_{
		{3}}}^{2}+2\,d_{{a}}{d_{{i}}}^{3}{\delta_{{1}}}^{3}\delta_{{2}}{\rho_{
		{1}}}^{2}+12\,d_{{a}}{d_{{i}}}^{2}d_{{r}}{\delta_{{1}}}^{2}{\delta_{{2
}}}^{2}\rho_{{1}}\rho_{{3}}-6\,d_{{a}}d_{{i}}{d_{{r}}}^{2}{\delta_{{1}
}}^{3}\delta_{{2}}{\rho_{{1}}}^{2}-12\,d_{{a}}d_{{i}}{d_{{r}}}^{2}{
	\delta_{{1}}}^{2}{\delta_{{2}}}^{2}{\rho_{{1}}}^{2}-42\,d_{{a}}d_{{i}}
{d_{{r}}}^{2}{\delta_{{1}}}^{2}{\delta_{{2}}}^{2}\rho_{{1}}\rho_{{3}}+
2\,d_{{a}}d_{{i}}{d_{{r}}}^{2}\delta_{{1}}{\delta_{{2}}}^{3}{\rho_{{1}
}}^{2}-8\,d_{{a}}d_{{i}}{d_{{r}}}^{2}\delta_{{1}}{\delta_{{2}}}^{3}
\rho_{{1}}\rho_{{3}}-64\,d_{{a}}d_{{i}}{d_{{r}}}^{2}\delta_{{1}}{
	\delta_{{2}}}^{3}{\rho_{{3}}}^{2}+2\,d_{{a}}{d_{{r}}}^{3}{\delta_{{1}}
}^{2}{\delta_{{2}}}^{2}\rho_{{1}}\rho_{{3}}+8\,d_{{a}}{d_{{r}}}^{3}
\delta_{{1}}{\delta_{{2}}}^{3}\rho_{{1}}\rho_{{3}}+4\,{d_{{i}}}^{3}d_{
	{r}}{\delta_{{1}}}^{3}\delta_{{2}}\rho_{{1}}\rho_{{3}}-{d_{{i}}}^{2}{d
	_{{r}}}^{2}{\delta_{{1}}}^{4}{\rho_{{1}}}^{2}-6\,{d_{{i}}}^{2}{d_{{r}}
}^{2}{\delta_{{1}}}^{3}\delta_{{2}}{\rho_{{1}}}^{2}-14\,{d_{{i}}}^{2}{
	d_{{r}}}^{2}{\delta_{{1}}}^{3}\delta_{{2}}\rho_{{1}}\rho_{{3}}-2\,{d_{
		{i}}}^{2}{d_{{r}}}^{2}{\delta_{{1}}}^{2}{\delta_{{2}}}^{2}{\rho_{{1}}}
^{2}-12\,{d_{{i}}}^{2}{d_{{r}}}^{2}{\delta_{{1}}}^{2}{\delta_{{2}}}^{2
}\rho_{{1}}\rho_{{3}}-48\,{d_{{i}}}^{2}{d_{{r}}}^{2}{\delta_{{1}}}^{2}
{\delta_{{2}}}^{2}{\rho_{{3}}}^{2}+10\,d_{{i}}{d_{{r}}}^{3}{\delta_{{1
}}}^{2}{\delta_{{2}}}^{2}\rho_{{1}}\rho_{{3}}+12\,d_{{i}}{d_{{r}}}^{3}
\delta_{{1}}{\delta_{{2}}}^{3}\rho_{{1}}\rho_{{3}}-5\,{d_{{r}}}^{4}{
	\delta_{{1}}}^{2}{\delta_{{2}}}^{2}{\rho_{{3}}}^{2}-4\,{d_{{r}}}^{4}{
	\delta_{{2}}}^{4}{\rho_{{3}}}^{2} \Big) {I^{\ast}}^{2}+2\,{d_{{r}}}^{2}
\Big( {d_{{a}}}^{4}{\delta_{{2}}}^{4}+16\,{d_{{a}}}^{3}d_{{i}}\delta
_{{1}}{\delta_{{2}}}^{3}+36\,{d_{{a}}}^{2}{d_{{i}}}^{2}{\delta_{{1}}}^
{2}{\delta_{{2}}}^{2}+6\,{d_{{a}}}^{2}{d_{{r}}}^{2}{\delta_{{1}}}^{2}{
	\delta_{{2}}}^{2}+{d_{{a}}}^{2}{d_{{r}}}^{2}{\delta_{{2}}}^{4}+16\,d_{
	{a}}{d_{{i}}}^{3}{\delta_{{1}}}^{3}\delta_{{2}}+8\,d_{{a}}d_{{i}}{d_{{
			r}}}^{2}{\delta_{{1}}}^{3}\delta_{{2}}+8\,d_{{a}}d_{{i}}{d_{{r}}}^{2}
\delta_{{1}}{\delta_{{2}}}^{3}+{d_{{i}}}^{4}{\delta_{{1}}}^{4}+{d_{{i}
}}^{2}{d_{{r}}}^{2}{\delta_{{1}}}^{4}+6\,{d_{{i}}}^{2}{d_{{r}}}^{2}{
	\delta_{{1}}}^{2}{\delta_{{2}}}^{2} \Big)  \Big) {{T_{reg}^{\ast}}}^{4}+
\Big( -2\,\delta_{{2}}{\rho_{{1}}}^{2}{\rho_{{3}}}^{2} \Big( d_{{a}
}\delta_{{1}}\delta_{{2}}{\rho_{{1}}}^{2}+2\,d_{{a}}{\delta_{{2}}}^{2}
\rho_{{1}}\rho_{{3}}+2\,d_{{i}}{\delta_{{1}}}^{2}{\rho_{{1}}}^{2}+5\,d
_{{i}}\delta_{{1}}\delta_{{2}}{\rho_{{1}}}^{2}+14\,d_{{i}}\delta_{{1}}
\delta_{{2}}\rho_{{1}}\rho_{{3}}+2\,d_{{i}}{\delta_{{2}}}^{2}\rho_{{1}
}\rho_{{3}}+12\,d_{{i}}{\delta_{{2}}}^{2}{\rho_{{3}}}^{2}-d_{{r}}
\delta_{{1}}\delta_{{2}}\rho_{{1}}\rho_{{3}}-2\,d_{{r}}{\delta_{{2}}}^
{2}\rho_{{1}}\rho_{{3}} \Big) {I^{\ast}}^{6}+2\,\rho_{{3}}\rho_{{1}}
\Big( 2\,{d_{{a}}}^{2}d_{{i}}\delta_{{1}}{\delta_{{2}}}^{2}{\rho_{{1
}}}^{2}-2\,{d_{{a}}}^{2}d_{{i}}{\delta_{{2}}}^{3}\rho_{{1}}\rho_{{3}}-
7\,{d_{{a}}}^{2}d_{{r}}\delta_{{1}}{\delta_{{2}}}^{2}{\rho_{{1}}}^{2}+
4\,d_{{a}}{d_{{i}}}^{2}{\delta_{{1}}}^{2}\delta_{{2}}{\rho_{{1}}}^{2}+
2\,d_{{a}}{d_{{i}}}^{2}\delta_{{1}}{\delta_{{2}}}^{2}{\rho_{{1}}}^{2}-
30\,d_{{a}}{d_{{i}}}^{2}\delta_{{1}}{\delta_{{2}}}^{2}\rho_{{1}}\rho_{
	{3}}-15\,d_{{a}}d_{{i}}d_{{r}}{\delta_{{1}}}^{2}\delta_{{2}}{\rho_{{1}
}}^{2}-4\,d_{{a}}d_{{i}}d_{{r}}\delta_{{1}}{\delta_{{2}}}^{2}{\rho_{{1
}}}^{2}+20\,d_{{a}}d_{{i}}d_{{r}}{\delta_{{2}}}^{3}{\rho_{{3}}}^{2}+8
\,d_{{a}}{d_{{r}}}^{2}\delta_{{1}}{\delta_{{2}}}^{2}\rho_{{1}}\rho_{{3
}}+4\,d_{{a}}{d_{{r}}}^{2}{\delta_{{2}}}^{3}{\rho_{{3}}}^{2}-4\,{d_{{i
}}}^{3}{\delta_{{1}}}^{2}\delta_{{2}}{\rho_{{1}}}^{2}-20\,{d_{{i}}}^{3
}{\delta_{{1}}}^{2}\delta_{{2}}\rho_{{1}}\rho_{{3}}-2\,{d_{{i}}}^{2}d_
{{r}}{\delta_{{1}}}^{3}{\rho_{{1}}}^{2}-12\,{d_{{i}}}^{2}d_{{r}}{
	\delta_{{1}}}^{2}\delta_{{2}}{\rho_{{1}}}^{2}-7\,{d_{{i}}}^{2}d_{{r}}
\delta_{{1}}{\delta_{{2}}}^{2}{\rho_{{1}}}^{2}+36\,{d_{{i}}}^{2}d_{{r}
}\delta_{{1}}{\delta_{{2}}}^{2}{\rho_{{3}}}^{2}+6\,d_{{i}}{d_{{r}}}^{2
}{\delta_{{1}}}^{2}\delta_{{2}}\rho_{{1}}\rho_{{3}}+20\,d_{{i}}{d_{{r}
}}^{2}\delta_{{1}}{\delta_{{2}}}^{2}\rho_{{1}}\rho_{{3}}+12\,d_{{i}}{d
	_{{r}}}^{2}\delta_{{1}}{\delta_{{2}}}^{2}{\rho_{{3}}}^{2}-2\,d_{{i}}{d
	_{{r}}}^{2}{\delta_{{2}}}^{3}\rho_{{1}}\rho_{{3}}+20\,d_{{i}}{d_{{r}}}
^{2}{\delta_{{2}}}^{3}{\rho_{{3}}}^{2}+3\,{d_{{r}}}^{3}\delta_{{1}}{
	\delta_{{2}}}^{2}{\rho_{{3}}}^{2}-4\,{d_{{r}}}^{3}{\delta_{{2}}}^{3}{
	\rho_{{3}}}^{2} \Big) {I^{\ast}}^{4}+ \Big( 2\,{d_{{a}}}^{3}{d_{{i}}}^{2}
\delta_{{1}}{\delta_{{2}}}^{2}{\rho_{{1}}}^{2}-4\,{d_{{a}}}^{3}d_{{i}}
d_{{r}}{\delta_{{2}}}^{3}\rho_{{1}}\rho_{{3}}-4\,{d_{{a}}}^{3}{d_{{r}}
}^{2}\delta_{{1}}{\delta_{{2}}}^{2}{\rho_{{1}}}^{2}-4\,{d_{{a}}}^{3}{d
	_{{r}}}^{2}{\delta_{{2}}}^{3}\rho_{{1}}\rho_{{3}}+2\,{d_{{a}}}^{2}{d_{
		{i}}}^{3}{\delta_{{1}}}^{2}\delta_{{2}}{\rho_{{1}}}^{2}-12\,{d_{{a}}}^
{2}{d_{{i}}}^{2}d_{{r}}\delta_{{1}}{\delta_{{2}}}^{2}\rho_{{1}}\rho_{{
		3}}-8\,{d_{{a}}}^{2}d_{{i}}{d_{{r}}}^{2}{\delta_{{1}}}^{2}\delta_{{2}}
{\rho_{{1}}}^{2}-6\,{d_{{a}}}^{2}d_{{i}}{d_{{r}}}^{2}\delta_{{1}}{
	\delta_{{2}}}^{2}{\rho_{{1}}}^{2}-36\,{d_{{a}}}^{2}d_{{i}}{d_{{r}}}^{2
}\delta_{{1}}{\delta_{{2}}}^{2}\rho_{{1}}\rho_{{3}}-4\,{d_{{a}}}^{2}d_
{{i}}{d_{{r}}}^{2}{\delta_{{2}}}^{3}\rho_{{1}}\rho_{{3}}-32\,{d_{{a}}}
^{2}d_{{i}}{d_{{r}}}^{2}{\delta_{{2}}}^{3}{\rho_{{3}}}^{2}-2\,{d_{{a}}
}^{2}{d_{{r}}}^{3}\delta_{{1}}{\delta_{{2}}}^{2}\rho_{{1}}\rho_{{3}}+4
\,{d_{{a}}}^{2}{d_{{r}}}^{3}{\delta_{{2}}}^{3}\rho_{{1}}\rho_{{3}}-4\,
d_{{a}}{d_{{i}}}^{3}d_{{r}}{\delta_{{1}}}^{2}\delta_{{2}}\rho_{{1}}
\rho_{{3}}-4\,d_{{a}}{d_{{i}}}^{2}{d_{{r}}}^{2}{\delta_{{1}}}^{3}{\rho
	_{{1}}}^{2}-12\,d_{{a}}{d_{{i}}}^{2}{d_{{r}}}^{2}{\delta_{{1}}}^{2}
\delta_{{2}}{\rho_{{1}}}^{2}-36\,d_{{a}}{d_{{i}}}^{2}{d_{{r}}}^{2}{
	\delta_{{1}}}^{2}\delta_{{2}}\rho_{{1}}\rho_{{3}}+2\,d_{{a}}{d_{{i}}}^
{2}{d_{{r}}}^{2}\delta_{{1}}{\delta_{{2}}}^{2}{\rho_{{1}}}^{2}-24\,d_{
	{a}}{d_{{i}}}^{2}{d_{{r}}}^{2}\delta_{{1}}{\delta_{{2}}}^{2}\rho_{{1}}
\rho_{{3}}-96\,d_{{a}}{d_{{i}}}^{2}{d_{{r}}}^{2}\delta_{{1}}{\delta_{{
			2}}}^{2}{\rho_{{3}}}^{2}-8\,d_{{a}}d_{{i}}{d_{{r}}}^{3}{\delta_{{1}}}^
{2}\delta_{{2}}\rho_{{1}}\rho_{{3}}+20\,d_{{a}}d_{{i}}{d_{{r}}}^{3}
\delta_{{1}}{\delta_{{2}}}^{2}\rho_{{1}}\rho_{{3}}-4\,d_{{a}}d_{{i}}{d
	_{{r}}}^{3}{\delta_{{2}}}^{3}\rho_{{1}}\rho_{{3}}-10\,d_{{a}}{d_{{r}}}
^{4}\delta_{{1}}{\delta_{{2}}}^{2}{\rho_{{3}}}^{2}-2\,{d_{{i}}}^{3}{d_
	{{r}}}^{2}{\delta_{{1}}}^{3}{\rho_{{1}}}^{2}-4\,{d_{{i}}}^{3}{d_{{r}}}
^{2}{\delta_{{1}}}^{3}\rho_{{1}}\rho_{{3}}-2\,{d_{{i}}}^{3}{d_{{r}}}^{
	2}{\delta_{{1}}}^{2}\delta_{{2}}{\rho_{{1}}}^{2}-12\,{d_{{i}}}^{3}{d_{
		{r}}}^{2}{\delta_{{1}}}^{2}\delta_{{2}}\rho_{{1}}\rho_{{3}}-32\,{d_{{i
}}}^{3}{d_{{r}}}^{2}{\delta_{{1}}}^{2}\delta_{{2}}{\rho_{{3}}}^{2}-2\,
{d_{{i}}}^{2}{d_{{r}}}^{3}{\delta_{{1}}}^{3}\rho_{{1}}\rho_{{3}}+8\,{d
	_{{i}}}^{2}{d_{{r}}}^{3}{\delta_{{1}}}^{2}\delta_{{2}}\rho_{{1}}\rho_{
	{3}}+12\,{d_{{i}}}^{2}{d_{{r}}}^{3}\delta_{{1}}{\delta_{{2}}}^{2}\rho_
{{1}}\rho_{{3}}-10\,d_{{i}}{d_{{r}}}^{4}{\delta_{{1}}}^{2}\delta_{{2}}
{\rho_{{3}}}^{2}-16\,d_{{i}}{d_{{r}}}^{4}{\delta_{{2}}}^{3}{\rho_{{3}}
}^{2} \Big) {I^{\ast}}^{2}+8\,{d_{{r}}}^{2} \Big( {d_{{a}}}^{4}d_{{i}}{
	\delta_{{2}}}^{3}+6\,{d_{{a}}}^{3}{d_{{i}}}^{2}\delta_{{1}}{\delta_{{2
}}}^{2}+{d_{{a}}}^{3}{d_{{r}}}^{2}\delta_{{1}}{\delta_{{2}}}^{2}+6\,{d
	_{{i}}}^{3}{\delta_{{1}}}^{2}\delta_{{2}}{d_{{a}}}^{2}+3\,{d_{{a}}}^{2
}d_{{i}}{d_{{r}}}^{2}{\delta_{{1}}}^{2}\delta_{{2}}+{d_{{a}}}^{2}d_{{i
}}{d_{{r}}}^{2}{\delta_{{2}}}^{3}+d_{{a}}{d_{{i}}}^{4}{\delta_{{1}}}^{
	3}+d_{{a}}{d_{{i}}}^{2}{\delta_{{1}}}^{3}{d_{{r}}}^{2}+3\,d_{{a}}{d_{{
			i}}}^{2}\delta_{{1}}{d_{{r}}}^{2}{\delta_{{2}}}^{2}+{d_{{i}}}^{3}{
	\delta_{{1}}}^{2}\delta_{{2}}{d_{{r}}}^{2} \Big)  \Big) {{T_{reg}^{\ast}}}
^{3}+ \Big( 2\,{\delta_{{2}}}^{2}{\rho_{{1}}}^{4}{\rho_{{3}}}^{4}{I^{\ast}}^
{8}-{\rho_{{1}}}^{2}{\rho_{{3}}}^{2} \Big( {d_{{a}}}^{2}{\delta_{{2}}
}^{2}{\rho_{{1}}}^{2}+6\,d_{{a}}d_{{i}}\delta_{{1}}\delta_{{2}}{\rho_{
		{1}}}^{2}+18\,d_{{a}}d_{{i}}{\delta_{{2}}}^{2}\rho_{{1}}\rho_{{3}}+2\,
d_{{a}}d_{{r}}{\delta_{{2}}}^{2}\rho_{{1}}\rho_{{3}}+4\,{d_{{i}}}^{2}{
	\delta_{{1}}}^{2}{\rho_{{1}}}^{2}+16\,{d_{{i}}}^{2}\delta_{{1}}\delta_
{{2}}{\rho_{{1}}}^{2}+40\,{d_{{i}}}^{2}\delta_{{1}}\delta_{{2}}\rho_{{
		1}}\rho_{{3}}+{d_{{i}}}^{2}{\delta_{{2}}}^{2}{\rho_{{1}}}^{2}+20\,{d_{
		{i}}}^{2}{\delta_{{2}}}^{2}\rho_{{1}}\rho_{{3}}+52\,{d_{{i}}}^{2}{
	\delta_{{2}}}^{2}{\rho_{{3}}}^{2}+4\,d_{{i}}d_{{r}}\delta_{{1}}\delta_
{{2}}\rho_{{1}}\rho_{{3}}-18\,d_{{i}}d_{{r}}{\delta_{{2}}}^{2}\rho_{{1
}}\rho_{{3}}+5\,{d_{{r}}}^{2}{\delta_{{2}}}^{2}{\rho_{{3}}}^{2}
\Big) {I^{\ast}}^{6}+2\,\rho_{{1}}\rho_{{3}} \Big( 2\,{d_{{a}}}^{3}d_{{i}
}{\delta_{{2}}}^{2}{\rho_{{1}}}^{2}-{d_{{a}}}^{3}d_{{r}}{\delta_{{2}}}
^{2}{\rho_{{1}}}^{2}+12\,{d_{{a}}}^{2}{d_{{i}}}^{2}\delta_{{1}}\delta_
{{2}}{\rho_{{1}}}^{2}+{d_{{a}}}^{2}{d_{{i}}}^{2}{\delta_{{2}}}^{2}{
	\rho_{{1}}}^{2}+3\,{d_{{a}}}^{2}{d_{{i}}}^{2}{\delta_{{2}}}^{2}\rho_{{
		1}}\rho_{{3}}-5\,{d_{{a}}}^{2}d_{{i}}d_{{r}}\delta_{{1}}\delta_{{2}}{
	\rho_{{1}}}^{2}-2\,{d_{{a}}}^{2}d_{{i}}d_{{r}}{\delta_{{2}}}^{2}{\rho_
	{{1}}}^{2}+4\,{d_{{a}}}^{2}{d_{{r}}}^{2}{\delta_{{2}}}^{2}\rho_{{1}}
\rho_{{3}}+4\,d_{{a}}{d_{{i}}}^{3}{\delta_{{1}}}^{2}{\rho_{{1}}}^{2}+6
\,d_{{a}}{d_{{i}}}^{3}\delta_{{1}}\delta_{{2}}{\rho_{{1}}}^{2}-12\,d_{
	{a}}{d_{{i}}}^{3}\delta_{{1}}\delta_{{2}}\rho_{{1}}\rho_{{3}}-8\,d_{{a
}}{d_{{i}}}^{2}d_{{r}}\delta_{{1}}\delta_{{2}}{\rho_{{1}}}^{2}+d_{{a}}
{d_{{i}}}^{2}d_{{r}}{\delta_{{2}}}^{2}{\rho_{{1}}}^{2}+36\,d_{{a}}{d_{
		{i}}}^{2}d_{{r}}{\delta_{{2}}}^{2}{\rho_{{3}}}^{2}+14\,d_{{a}}d_{{i}}{
	d_{{r}}}^{2}\delta_{{1}}\delta_{{2}}\rho_{{1}}\rho_{{3}}+12\,d_{{a}}d_
{{i}}{d_{{r}}}^{2}{\delta_{{2}}}^{2}{\rho_{{3}}}^{2}+3\,d_{{a}}{d_{{r}
}}^{3}{\delta_{{2}}}^{2}{\rho_{{3}}}^{2}-4\,{d_{{i}}}^{4}{\delta_{{1}}
}^{2}\rho_{{1}}\rho_{{3}}-4\,{d_{{i}}}^{3}d_{{r}}{\delta_{{1}}}^{2}{
	\rho_{{1}}}^{2}-7\,{d_{{i}}}^{3}d_{{r}}\delta_{{1}}\delta_{{2}}{\rho_{
		{1}}}^{2}+28\,{d_{{i}}}^{3}d_{{r}}\delta_{{1}}\delta_{{2}}{\rho_{{3}}}
^{2}+3\,{d_{{i}}}^{2}{d_{{r}}}^{2}{\delta_{{1}}}^{2}\rho_{{1}}\rho_{{3
}}+22\,{d_{{i}}}^{2}{d_{{r}}}^{2}\delta_{{1}}\delta_{{2}}\rho_{{1}}
\rho_{{3}}+12\,{d_{{i}}}^{2}{d_{{r}}}^{2}\delta_{{1}}\delta_{{2}}{\rho
	_{{3}}}^{2}+3\,{d_{{i}}}^{2}{d_{{r}}}^{2}{\delta_{{2}}}^{2}\rho_{{1}}
\rho_{{3}}+36\,{d_{{i}}}^{2}{d_{{r}}}^{2}{\delta_{{2}}}^{2}{\rho_{{3}}
}^{2}+9\,d_{{i}}{d_{{r}}}^{3}\delta_{{1}}\delta_{{2}}{\rho_{{3}}}^{2}-
12\,d_{{i}}{d_{{r}}}^{3}{\delta_{{2}}}^{2}{\rho_{{3}}}^{2} \Big) {I^{\ast}}
^{4}+ \Big( -{d_{{a}}}^{4}{d_{{i}}}^{2}{\delta_{{2}}}^{2}{\rho_{{1}}}
^{2}-{d_{{a}}}^{4}{d_{{r}}}^{2}{\delta_{{2}}}^{2}{\rho_{{1}}}^{2}-2\,{
	d_{{a}}}^{3}{d_{{i}}}^{3}\delta_{{1}}\delta_{{2}}{\rho_{{1}}}^{2}-12\,
{d_{{a}}}^{3}{d_{{i}}}^{2}d_{{r}}{\delta_{{2}}}^{2}\rho_{{1}}\rho_{{3}
}-6\,{d_{{a}}}^{3}d_{{i}}{d_{{r}}}^{2}\delta_{{1}}\delta_{{2}}{\rho_{{
			1}}}^{2}-10\,{d_{{a}}}^{3}d_{{i}}{d_{{r}}}^{2}{\delta_{{2}}}^{2}\rho_{
	{1}}\rho_{{3}}-2\,{d_{{a}}}^{3}{d_{{r}}}^{3}{\delta_{{2}}}^{2}\rho_{{1
}}\rho_{{3}}-{d_{{a}}}^{2}{d_{{i}}}^{4}{\delta_{{1}}}^{2}{\rho_{{1}}}^
{2}-20\,{d_{{a}}}^{2}{d_{{i}}}^{3}d_{{r}}\delta_{{1}}\delta_{{2}}\rho_
{{1}}\rho_{{3}}-7\,{d_{{a}}}^{2}{d_{{i}}}^{2}{d_{{r}}}^{2}{\delta_{{1}
}}^{2}{\rho_{{1}}}^{2}-6\,{d_{{a}}}^{2}{d_{{i}}}^{2}{d_{{r}}}^{2}
\delta_{{1}}\delta_{{2}}{\rho_{{1}}}^{2}-30\,{d_{{a}}}^{2}{d_{{i}}}^{2
}{d_{{r}}}^{2}\delta_{{1}}\delta_{{2}}\rho_{{1}}\rho_{{3}}-2\,{d_{{a}}
}^{2}{d_{{i}}}^{2}{d_{{r}}}^{2}{\delta_{{2}}}^{2}{\rho_{{1}}}^{2}-12\,
{d_{{a}}}^{2}{d_{{i}}}^{2}{d_{{r}}}^{2}{\delta_{{2}}}^{2}\rho_{{1}}
\rho_{{3}}-48\,{d_{{a}}}^{2}{d_{{i}}}^{2}{d_{{r}}}^{2}{\delta_{{2}}}^{
	2}{\rho_{{3}}}^{2}-16\,{d_{{a}}}^{2}d_{{i}}{d_{{r}}}^{3}\delta_{{1}}
\delta_{{2}}\rho_{{1}}\rho_{{3}}+10\,{d_{{a}}}^{2}d_{{i}}{d_{{r}}}^{3}
{\delta_{{2}}}^{2}\rho_{{1}}\rho_{{3}}-5\,{d_{{a}}}^{2}{d_{{r}}}^{4}{
	\delta_{{2}}}^{2}{\rho_{{3}}}^{2}-4\,d_{{a}}{d_{{i}}}^{4}d_{{r}}{
	\delta_{{1}}}^{2}\rho_{{1}}\rho_{{3}}-4\,d_{{a}}{d_{{i}}}^{3}{d_{{r}}}
^{2}{\delta_{{1}}}^{2}{\rho_{{1}}}^{2}-10\,d_{{a}}{d_{{i}}}^{3}{d_{{r}
}}^{2}{\delta_{{1}}}^{2}\rho_{{1}}\rho_{{3}}-2\,d_{{a}}{d_{{i}}}^{3}{d
	_{{r}}}^{2}\delta_{{1}}\delta_{{2}}{\rho_{{1}}}^{2}-24\,d_{{a}}{d_{{i}
}}^{3}{d_{{r}}}^{2}\delta_{{1}}\delta_{{2}}\rho_{{1}}\rho_{{3}}-64\,d_
{{a}}{d_{{i}}}^{3}{d_{{r}}}^{2}\delta_{{1}}\delta_{{2}}{\rho_{{3}}}^{2
}-10\,d_{{a}}{d_{{i}}}^{2}{d_{{r}}}^{3}{\delta_{{1}}}^{2}\rho_{{1}}
\rho_{{3}}+16\,d_{{a}}{d_{{i}}}^{2}{d_{{r}}}^{3}\delta_{{1}}\delta_{{2
}}\rho_{{1}}\rho_{{3}}-12\,d_{{a}}{d_{{i}}}^{2}{d_{{r}}}^{3}{\delta_{{
			2}}}^{2}\rho_{{1}}\rho_{{3}}-20\,d_{{a}}d_{{i}}{d_{{r}}}^{4}\delta_{{1
}}\delta_{{2}}{\rho_{{3}}}^{2}-{d_{{i}}}^{4}{d_{{r}}}^{2}{\delta_{{1}}
}^{2}{\rho_{{1}}}^{2}-4\,{d_{{i}}}^{4}{d_{{r}}}^{2}{\delta_{{1}}}^{2}
\rho_{{1}}\rho_{{3}}-8\,{d_{{i}}}^{4}{d_{{r}}}^{2}{\delta_{{1}}}^{2}{
	\rho_{{3}}}^{2}+2\,{d_{{i}}}^{3}{d_{{r}}}^{3}{\delta_{{1}}}^{2}\rho_{{
		1}}\rho_{{3}}+4\,{d_{{i}}}^{3}{d_{{r}}}^{3}\delta_{{1}}\delta_{{2}}
\rho_{{1}}\rho_{{3}}-5\,{d_{{i}}}^{2}{d_{{r}}}^{4}{\delta_{{1}}}^{2}{
	\rho_{{3}}}^{2}-24\,{d_{{i}}}^{2}{d_{{r}}}^{4}{\delta_{{2}}}^{2}{\rho_
	{{3}}}^{2} \Big) {I^{\ast}}^{2}+2\,{d_{{r}}}^{2} \Big( 6\,{d_{{a}}}^{4}{d_
	{{i}}}^{2}{\delta_{{2}}}^{2}+{d_{{a}}}^{4}{d_{{r}}}^{2}{\delta_{{2}}}^
{2}+16\,{d_{{a}}}^{3}{d_{{i}}}^{3}\delta_{{1}}\delta_{{2}}+8\,{d_{{a}}
}^{3}d_{{i}}{d_{{r}}}^{2}\delta_{{1}}\delta_{{2}}+6\,{d_{{a}}}^{2}{d_{
		{i}}}^{4}{\delta_{{1}}}^{2}+6\,{d_{{a}}}^{2}{d_{{i}}}^{2}{d_{{r}}}^{2}
{\delta_{{1}}}^{2}+6\,{d_{{a}}}^{2}{d_{{i}}}^{2}{d_{{r}}}^{2}{\delta_{
		{2}}}^{2}+8\,d_{{a}}{d_{{i}}}^{3}{d_{{r}}}^{2}\delta_{{1}}\delta_{{2}}
+{d_{{i}}}^{4}{d_{{r}}}^{2}{\delta_{{1}}}^{2} \Big)  \Big) {{T_{reg}^{\ast}}}^{2}+ \Big( 8\,d_{{i}}\delta_{{2}}{\rho_{{1}}}^{4}{\rho_{{3}}}^{4
}{I^{\ast}}^{8}-4\,d_{{i}}{\rho_{{1}}}^{2}{\rho_{{3}}}^{2} \Big( {d_{{a}}}^{
	2}\delta_{{2}}{\rho_{{1}}}^{2}+2\,d_{{a}}d_{{i}}\delta_{{1}}{\rho_{{1}
}}^{2}+6\,d_{{a}}d_{{i}}\delta_{{2}}\rho_{{1}}\rho_{{3}}+3\,d_{{a}}d_{
	{r}}\delta_{{2}}\rho_{{1}}\rho_{{3}}+2\,{d_{{i}}}^{2}\delta_{{1}}{\rho
	_{{1}}}^{2}+4\,{d_{{i}}}^{2}\delta_{{1}}\rho_{{1}}\rho_{{3}}+{d_{{i}}}
^{2}\delta_{{2}}{\rho_{{1}}}^{2}+8\,{d_{{i}}}^{2}\delta_{{2}}\rho_{{1}
}\rho_{{3}}+12\,{d_{{i}}}^{2}\delta_{{2}}{\rho_{{3}}}^{2}+2\,d_{{i}}d_
{{r}}\delta_{{1}}\rho_{{1}}\rho_{{3}}-6\,d_{{i}}d_{{r}}\delta_{{2}}
\rho_{{1}}\rho_{{3}}+3\,{d_{{r}}}^{2}\delta_{{2}}{\rho_{{3}}}^{2}
\Big) {I^{\ast}}^{6}+2\,d_{{i}}\rho_{{1}}\rho_{{3}} \Big( 4\,{d_{{a}}}^{3
}d_{{i}}\delta_{{2}}{\rho_{{1}}}^{2}+{d_{{a}}}^{3}d_{{r}}\delta_{{2}}{
	\rho_{{1}}}^{2}+5\,{d_{{a}}}^{2}{d_{{i}}}^{2}\delta_{{1}}{\rho_{{1}}}^
{2}+3\,{d_{{a}}}^{2}{d_{{i}}}^{2}\delta_{{2}}{\rho_{{1}}}^{2}+8\,{d_{{
			a}}}^{2}{d_{{i}}}^{2}\delta_{{2}}\rho_{{1}}\rho_{{3}}+5\,{d_{{a}}}^{2}
d_{{i}}d_{{r}}\delta_{{1}}{\rho_{{1}}}^{2}-4\,{d_{{a}}}^{2}d_{{i}}d_{{
		r}}\delta_{{2}}{\rho_{{1}}}^{2}+8\,{d_{{a}}}^{2}{d_{{r}}}^{2}\delta_{{
		2}}\rho_{{1}}\rho_{{3}}+4\,d_{{a}}{d_{{i}}}^{3}\delta_{{1}}{\rho_{{1}}
}^{2}-2\,d_{{a}}{d_{{i}}}^{2}d_{{r}}\delta_{{1}}{\rho_{{1}}}^{2}+3\,d_
{{a}}{d_{{i}}}^{2}d_{{r}}\delta_{{2}}{\rho_{{1}}}^{2}+28\,d_{{a}}{d_{{
			i}}}^{2}d_{{r}}\delta_{{2}}{\rho_{{3}}}^{2}+8\,d_{{a}}d_{{i}}{d_{{r}}}
^{2}\delta_{{1}}\rho_{{1}}\rho_{{3}}+12\,d_{{a}}d_{{i}}{d_{{r}}}^{2}
\delta_{{2}}{\rho_{{3}}}^{2}+9\,d_{{a}}{d_{{r}}}^{3}\delta_{{2}}{\rho_
	{{3}}}^{2}-2\,{d_{{i}}}^{3}d_{{r}}\delta_{{1}}{\rho_{{1}}}^{2}+8\,{d_{
		{i}}}^{3}d_{{r}}\delta_{{1}}{\rho_{{3}}}^{2}+8\,{d_{{i}}}^{2}{d_{{r}}}
^{2}\delta_{{1}}\rho_{{1}}\rho_{{3}}+4\,{d_{{i}}}^{2}{d_{{r}}}^{2}
\delta_{{1}}{\rho_{{3}}}^{2}+8\,{d_{{i}}}^{2}{d_{{r}}}^{2}\delta_{{2}}
\rho_{{1}}\rho_{{3}}+28\,{d_{{i}}}^{2}{d_{{r}}}^{2}\delta_{{2}}{\rho_{
		{3}}}^{2}+6\,d_{{i}}{d_{{r}}}^{3}\delta_{{1}}{\rho_{{3}}}^{2}-12\,d_{{
		i}}{d_{{r}}}^{3}\delta_{{2}}{\rho_{{3}}}^{2} \Big) {I^{\ast}}^{4}-2\,d_{{i}
} \Big( {d_{{a}}}^{4}{d_{{i}}}^{2}\delta_{{2}}{\rho_{{1}}}^{2}+{d_{{a
}}}^{4}{d_{{r}}}^{2}\delta_{{2}}{\rho_{{1}}}^{2}+{d_{{a}}}^{3}{d_{{i}}
}^{3}\delta_{{1}}{\rho_{{1}}}^{2}+6\,{d_{{a}}}^{3}{d_{{i}}}^{2}d_{{r}}
\delta_{{2}}\rho_{{1}}\rho_{{3}}+3\,{d_{{a}}}^{3}d_{{i}}{d_{{r}}}^{2}
\delta_{{1}}{\rho_{{1}}}^{2}+4\,{d_{{a}}}^{3}d_{{i}}{d_{{r}}}^{2}
\delta_{{2}}\rho_{{1}}\rho_{{3}}+4\,{d_{{a}}}^{3}{d_{{r}}}^{3}\delta_{
	{2}}\rho_{{1}}\rho_{{3}}+4\,{d_{{a}}}^{2}{d_{{i}}}^{3}d_{{r}}\delta_{{
		1}}\rho_{{1}}\rho_{{3}}+{d_{{a}}}^{2}{d_{{i}}}^{2}{d_{{r}}}^{2}\delta_
{{1}}{\rho_{{1}}}^{2}+4\,{d_{{a}}}^{2}{d_{{i}}}^{2}{d_{{r}}}^{2}\delta
_{{1}}\rho_{{1}}\rho_{{3}}+2\,{d_{{a}}}^{2}{d_{{i}}}^{2}{d_{{r}}}^{2}
\delta_{{2}}{\rho_{{1}}}^{2}+6\,{d_{{a}}}^{2}{d_{{i}}}^{2}{d_{{r}}}^{2
}\delta_{{2}}\rho_{{1}}\rho_{{3}}+16\,{d_{{a}}}^{2}{d_{{i}}}^{2}{d_{{r
}}}^{2}\delta_{{2}}{\rho_{{3}}}^{2}+7\,{d_{{a}}}^{2}d_{{i}}{d_{{r}}}^{
	3}\delta_{{1}}\rho_{{1}}\rho_{{3}}-4\,{d_{{a}}}^{2}d_{{i}}{d_{{r}}}^{3
}\delta_{{2}}\rho_{{1}}\rho_{{3}}+5\,{d_{{a}}}^{2}{d_{{r}}}^{4}\delta_
{{2}}{\rho_{{3}}}^{2}+d_{{a}}{d_{{i}}}^{3}{d_{{r}}}^{2}\delta_{{1}}{
	\rho_{{1}}}^{2}+4\,d_{{a}}{d_{{i}}}^{3}{d_{{r}}}^{2}\delta_{{1}}\rho_{
	{1}}\rho_{{3}}+8\,d_{{a}}{d_{{i}}}^{3}{d_{{r}}}^{2}\delta_{{1}}{\rho_{
		{3}}}^{2}-2\,d_{{a}}{d_{{i}}}^{2}{d_{{r}}}^{3}\delta_{{1}}\rho_{{1}}
\rho_{{3}}+6\,d_{{a}}{d_{{i}}}^{2}{d_{{r}}}^{3}\delta_{{2}}\rho_{{1}}
\rho_{{3}}+5\,d_{{a}}d_{{i}}{d_{{r}}}^{4}\delta_{{1}}{\rho_{{3}}}^{2}+
8\,{d_{{i}}}^{2}{d_{{r}}}^{4}\delta_{{2}}{\rho_{{3}}}^{2} \Big) {I^{\ast}}^
{2}+4\,d_{{a}}d_{{i}}{d_{{r}}}^{2} \Big( 2\,{d_{{a}}}^{3}{d_{{i}}}^{2
}\delta_{{2}}+{d_{{a}}}^{3}{d_{{r}}}^{2}\delta_{{2}}+2\,{d_{{a}}}^{2}{
	d_{{i}}}^{3}\delta_{{1}}+2\,{d_{{a}}}^{2}d_{{i}}{d_{{r}}}^{2}\delta_{{
		1}}+2\,d_{{a}}{d_{{i}}}^{2}{d_{{r}}}^{2}\delta_{{2}}+{d_{{i}}}^{3}{d_{
		{r}}}^{2}\delta_{{1}} \Big)  \Big) {T_{reg}^{\ast}}+8\,{d_{{i}}}^{2}{\rho_
	{{1}}}^{4}{\rho_{{3}}}^{4}{I^{\ast}}^{8}-{d_{{i}}}^{2}{\rho_{{1}}}^{2}{\rho_{
		{3}}}^{2} \Big( 5\,{d_{{a}}}^{2}{\rho_{{1}}}^{2}+8\,d_{{a}}d_{{i}}
\rho_{{1}}\rho_{{3}}+12\,d_{{a}}d_{{r}}\rho_{{1}}\rho_{{3}}+4\,{d_{{i}
}}^{2}{\rho_{{1}}}^{2}+16\,{d_{{i}}}^{2}\rho_{{1}}\rho_{{3}}+16\,{d_{{
			i}}}^{2}{\rho_{{3}}}^{2}-8\,d_{{i}}d_{{r}}\rho_{{1}}\rho_{{3}}+8\,{d_{
		{r}}}^{2}{\rho_{{3}}}^{2} \Big) {I^{\ast}}^{6}+2\,{d_{{i}}}^{2}\rho_{{1}}
\rho_{{3}} \Big( {d_{{a}}}^{3}d_{{i}}{\rho_{{1}}}^{2}+3\,{d_{{a}}}^{3
}d_{{r}}{\rho_{{1}}}^{2}+2\,{d_{{a}}}^{2}{d_{{i}}}^{2}{\rho_{{1}}}^{2}
+4\,{d_{{a}}}^{2}{d_{{i}}}^{2}\rho_{{1}}\rho_{{3}}-{d_{{a}}}^{2}d_{{i}
}d_{{r}}{\rho_{{1}}}^{2}+5\,{d_{{a}}}^{2}{d_{{r}}}^{2}\rho_{{1}}\rho_{
	{3}}+2\,d_{{a}}{d_{{i}}}^{2}d_{{r}}{\rho_{{1}}}^{2}+8\,d_{{a}}{d_{{i}}
}^{2}d_{{r}}{\rho_{{3}}}^{2}+4\,d_{{a}}d_{{i}}{d_{{r}}}^{2}{\rho_{{3}}
}^{2}+6\,d_{{a}}{d_{{r}}}^{3}{\rho_{{3}}}^{2}+4\,{d_{{i}}}^{2}{d_{{r}}
}^{2}\rho_{{1}}\rho_{{3}}+8\,{d_{{i}}}^{2}{d_{{r}}}^{2}{\rho_{{3}}}^{2
}-4\,d_{{i}}{d_{{r}}}^{3}{\rho_{{3}}}^{2} \Big) {I^{\ast}}^{4}-{d_{{i}}}^{2
} \Big( {d_{{a}}}^{4}{d_{{i}}}^{2}{\rho_{{1}}}^{2}+2\,{d_{{a}}}^{4}{d
	_{{r}}}^{2}{\rho_{{1}}}^{2}+4\,{d_{{a}}}^{3}{d_{{i}}}^{2}d_{{r}}\rho_{
	{1}}\rho_{{3}}+2\,{d_{{a}}}^{3}d_{{i}}{d_{{r}}}^{2}\rho_{{1}}\rho_{{3}
}+6\,{d_{{a}}}^{3}{d_{{r}}}^{3}\rho_{{1}}\rho_{{3}}+2\,{d_{{a}}}^{2}{d
	_{{i}}}^{2}{d_{{r}}}^{2}{\rho_{{1}}}^{2}+4\,{d_{{a}}}^{2}{d_{{i}}}^{2}
{d_{{r}}}^{2}\rho_{{1}}\rho_{{3}}+8\,{d_{{a}}}^{2}{d_{{i}}}^{2}{d_{{r}
}}^{2}{\rho_{{3}}}^{2}-2\,{d_{{a}}}^{2}d_{{i}}{d_{{r}}}^{3}\rho_{{1}}
\rho_{{3}}+5\,{d_{{a}}}^{2}{d_{{r}}}^{4}{\rho_{{3}}}^{2}+4\,d_{{a}}{d_
	{{i}}}^{2}{d_{{r}}}^{3}\rho_{{1}}\rho_{{3}}+4\,{d_{{i}}}^{2}{d_{{r}}}^
{4}{\rho_{{3}}}^{2} \Big) {I^{\ast}}^{2}+2\,{d_{{a}}}^{2}{d_{{i}}}^{2}{d_{{
			r}}}^{2} \Big( {d_{{a}}}^{2}{d_{{i}}}^{2}+{d_{{a}}}^{2}{d_{{r}}}^{2}+
{d_{{i}}}^{2}{d_{{r}}}^{2} \Big)$.\vspace{6pt}

$b_4={{T_{reg}^{\ast}}}^{8}{\delta_{{1}}}^{4}{\delta_{{2}}}^{4}+4\,{\delta_{{1}}}^{
	3}{\delta_{{2}}}^{3} \left( d_{{a}}\delta_{{2}}+d_{{i}}\delta_{{1}}
\right) {{T_{reg}^{\ast}}}^{7}+ \Big( -{\delta_{{1}}}^{2}{\delta_{{2}}}^{2}
\Big( 2\,{\delta_{{1}}}^{2}{\rho_{{1}}}^{2}+2\,\delta_{{1}}\delta_{{
		2}}{\rho_{{1}}}^{2}+6\,\delta_{{1}}\delta_{{2}}\rho_{{1}}\rho_{{3}}+{
	\delta_{{2}}}^{2}{\rho_{{1}}}^{2}+4\,{\delta_{{2}}}^{2}{\rho_{{3}}}^{2
} \Big) {I^{\ast}}^{2}+2\,{\delta_{{1}}}^{2}{\delta_{{2}}}^{2} \Big( 3\,{d
	_{{a}}}^{2}{\delta_{{2}}}^{2}+8\,d_{{a}}d_{{i}}\delta_{{1}}\delta_{{2}
}+3\,{d_{{i}}}^{2}{\delta_{{1}}}^{2}+2\,{d_{{r}}}^{2}{\delta_{{1}}}^{2
}+2\,{d_{{r}}}^{2}{\delta_{{2}}}^{2} \Big)  \Big) {{T_{reg}^{\ast}}}^{6}+
\Big( -2\,\delta_{{1}}\delta_{{2}} \Big( 3\,d_{{a}}{\delta_{{1}}}^{
	2}\delta_{{2}}{\rho_{{1}}}^{2}+2\,d_{{a}}\delta_{{1}}{\delta_{{2}}}^{2
}{\rho_{{1}}}^{2}+8\,d_{{a}}\delta_{{1}}{\delta_{{2}}}^{2}\rho_{{1}}
\rho_{{3}}+4\,d_{{a}}{\delta_{{2}}}^{3}{\rho_{{3}}}^{2}+d_{{i}}{\delta
	_{{1}}}^{3}{\rho_{{1}}}^{2}+3\,d_{{i}}{\delta_{{1}}}^{2}\delta_{{2}}{
	\rho_{{1}}}^{2}+8\,d_{{i}}{\delta_{{1}}}^{2}\delta_{{2}}\rho_{{1}}\rho
_{{3}}+d_{{i}}\delta_{{1}}{\delta_{{2}}}^{2}{\rho_{{1}}}^{2}+2\,d_{{i}
}\delta_{{1}}{\delta_{{2}}}^{2}\rho_{{1}}\rho_{{3}}+8\,d_{{i}}\delta_{
	{1}}{\delta_{{2}}}^{2}{\rho_{{3}}}^{2}-d_{{r}}{\delta_{{1}}}^{2}\delta
_{{2}}\rho_{{1}}\rho_{{3}}-2\,d_{{r}}\delta_{{1}}{\delta_{{2}}}^{2}
\rho_{{1}}\rho_{{3}}-2\,d_{{r}}{\delta_{{2}}}^{3}\rho_{{1}}\rho_{{3}}
\Big) {I^{\ast}}^{2}+4\,\delta_{{1}}\delta_{{2}} \Big( {d_{{a}}}^{3}{
	\delta_{{2}}}^{3}+6\,{d_{{a}}}^{2}d_{{i}}\delta_{{1}}{\delta_{{2}}}^{2
}+6\,d_{{a}}{d_{{i}}}^{2}{\delta_{{1}}}^{2}\delta_{{2}}+4\,d_{{a}}{d_{
		{r}}}^{2}{\delta_{{1}}}^{2}\delta_{{2}}+2\,d_{{a}}{d_{{r}}}^{2}{\delta
	_{{2}}}^{3}+{d_{{i}}}^{3}{\delta_{{1}}}^{3}+2\,d_{{i}}{d_{{r}}}^{2}{
	\delta_{{1}}}^{3}+4\,d_{{i}}{d_{{r}}}^{2}\delta_{{1}}{\delta_{{2}}}^{2
} \Big)  \Big) {{T_{reg}^{\ast}}}^{5}+ \Big( 2\,\rho_{{3}}\delta_{{2}}
\rho_{{1}} \Big( 4\,{\delta_{{1}}}^{3}{\rho_{{1}}}^{2}+4\,{\delta_{{1
}}}^{2}\delta_{{2}}\rho_{{1}}\rho_{{3}}-2\,\delta_{{1}}{\delta_{{2}}}^
{2}{\rho_{{1}}}^{2}+6\,\delta_{{1}}{\delta_{{2}}}^{2}\rho_{{1}}\rho_{{
		3}}+4\,\delta_{{1}}{\delta_{{2}}}^{2}{\rho_{{3}}}^{2}-{\delta_{{2}}}^{
	3}\rho_{{1}}\rho_{{3}}+4\,{\delta_{{2}}}^{3}{\rho_{{3}}}^{2} \Big) {
	Y}^{4}+ \Big( -7\,{d_{{a}}}^{2}{\delta_{{1}}}^{2}{\delta_{{2}}}^{2}{
	\rho_{{1}}}^{2}-2\,{d_{{a}}}^{2}\delta_{{1}}{\delta_{{2}}}^{3}{\rho_{{
			1}}}^{2}-14\,{d_{{a}}}^{2}\delta_{{1}}{\delta_{{2}}}^{3}\rho_{{1}}\rho
_{{3}}-4\,{d_{{a}}}^{2}{\delta_{{2}}}^{4}{\rho_{{3}}}^{2}-6\,d_{{a}}d_
{{i}}{\delta_{{1}}}^{3}\delta_{{2}}{\rho_{{1}}}^{2}-12\,d_{{a}}d_{{i}}
{\delta_{{1}}}^{2}{\delta_{{2}}}^{2}{\rho_{{1}}}^{2}-42\,d_{{a}}d_{{i}
}{\delta_{{1}}}^{2}{\delta_{{2}}}^{2}\rho_{{1}}\rho_{{3}}+2\,d_{{a}}d_
{{i}}\delta_{{1}}{\delta_{{2}}}^{3}{\rho_{{1}}}^{2}-8\,d_{{a}}d_{{i}}
\delta_{{1}}{\delta_{{2}}}^{3}\rho_{{1}}\rho_{{3}}-32\,d_{{a}}d_{{i}}
\delta_{{1}}{\delta_{{2}}}^{3}{\rho_{{3}}}^{2}+2\,d_{{a}}d_{{r}}{
	\delta_{{1}}}^{2}{\delta_{{2}}}^{2}\rho_{{1}}\rho_{{3}}+8\,d_{{a}}d_{{
		r}}\delta_{{1}}{\delta_{{2}}}^{3}\rho_{{1}}\rho_{{3}}-{d_{{i}}}^{2}{
	\delta_{{1}}}^{4}{\rho_{{1}}}^{2}-6\,{d_{{i}}}^{2}{\delta_{{1}}}^{3}
\delta_{{2}}{\rho_{{1}}}^{2}-14\,{d_{{i}}}^{2}{\delta_{{1}}}^{3}\delta
_{{2}}\rho_{{1}}\rho_{{3}}-2\,{d_{{i}}}^{2}{\delta_{{1}}}^{2}{\delta_{
		{2}}}^{2}{\rho_{{1}}}^{2}-12\,{d_{{i}}}^{2}{\delta_{{1}}}^{2}{\delta_{
		{2}}}^{2}\rho_{{1}}\rho_{{3}}-24\,{d_{{i}}}^{2}{\delta_{{1}}}^{2}{
	\delta_{{2}}}^{2}{\rho_{{3}}}^{2}+10\,d_{{i}}d_{{r}}{\delta_{{1}}}^{2}
{\delta_{{2}}}^{2}\rho_{{1}}\rho_{{3}}+12\,d_{{i}}d_{{r}}\delta_{{1}}{
	\delta_{{2}}}^{3}\rho_{{1}}\rho_{{3}}-{d_{{r}}}^{2}{\delta_{{1}}}^{4}{
	\rho_{{1}}}^{2}-2\,{d_{{r}}}^{2}{\delta_{{1}}}^{3}\delta_{{2}}{\rho_{{
			1}}}^{2}-6\,{d_{{r}}}^{2}{\delta_{{1}}}^{3}\delta_{{2}}\rho_{{1}}\rho_
{{3}}-3\,{d_{{r}}}^{2}{\delta_{{1}}}^{2}{\delta_{{2}}}^{2}{\rho_{{1}}}
^{2}-2\,{d_{{r}}}^{2}{\delta_{{1}}}^{2}{\delta_{{2}}}^{2}\rho_{{1}}
\rho_{{3}}-10\,{d_{{r}}}^{2}{\delta_{{1}}}^{2}{\delta_{{2}}}^{2}{\rho_
	{{3}}}^{2}-2\,{d_{{r}}}^{2}\delta_{{1}}{\delta_{{2}}}^{3}{\rho_{{1}}}^
{2}-6\,{d_{{r}}}^{2}\delta_{{1}}{\delta_{{2}}}^{3}\rho_{{1}}\rho_{{3}}
-8\,{d_{{r}}}^{2}{\delta_{{2}}}^{4}{\rho_{{3}}}^{2} \Big) {I^{\ast}}^{2}+{d
	_{{a}}}^{4}{\delta_{{2}}}^{4}+16\,{d_{{a}}}^{3}d_{{i}}\delta_{{1}}{
	\delta_{{2}}}^{3}+36\,{d_{{a}}}^{2}{d_{{i}}}^{2}{\delta_{{1}}}^{2}{
	\delta_{{2}}}^{2}+24\,{d_{{a}}}^{2}{d_{{r}}}^{2}{\delta_{{1}}}^{2}{
	\delta_{{2}}}^{2}+4\,{d_{{a}}}^{2}{d_{{r}}}^{2}{\delta_{{2}}}^{4}+16\,
d_{{a}}{d_{{i}}}^{3}{\delta_{{1}}}^{3}\delta_{{2}}+32\,d_{{a}}d_{{i}}{
	d_{{r}}}^{2}{\delta_{{1}}}^{3}\delta_{{2}}+32\,d_{{a}}d_{{i}}{d_{{r}}}
^{2}\delta_{{1}}{\delta_{{2}}}^{3}+{d_{{i}}}^{4}{\delta_{{1}}}^{4}+4\,
{d_{{i}}}^{2}{d_{{r}}}^{2}{\delta_{{1}}}^{4}+24\,{d_{{i}}}^{2}{d_{{r}}
}^{2}{\delta_{{1}}}^{2}{\delta_{{2}}}^{2}+{d_{{r}}}^{4}{\delta_{{1}}}^
{4}+4\,{d_{{r}}}^{4}{\delta_{{1}}}^{2}{\delta_{{2}}}^{2}+{d_{{r}}}^{4}
{\delta_{{2}}}^{4} \Big) {{T_{reg}^{\ast}}}^{4}+ \Big( 2\,\rho_{{3}}\rho_{{
		1}} \Big( 10\,d_{{a}}{\delta_{{1}}}^{2}\delta_{{2}}{\rho_{{1}}}^{2}-2
\,d_{{a}}\delta_{{1}}{\delta_{{2}}}^{2}{\rho_{{1}}}^{2}+8\,d_{{a}}
\delta_{{1}}{\delta_{{2}}}^{2}\rho_{{1}}\rho_{{3}}+4\,d_{{a}}{\delta_{
		{2}}}^{3}{\rho_{{3}}}^{2}+3\,d_{{i}}{\delta_{{1}}}^{3}{\rho_{{1}}}^{2}
+3\,d_{{i}}{\delta_{{1}}}^{2}\delta_{{2}}{\rho_{{1}}}^{2}+6\,d_{{i}}{
	\delta_{{1}}}^{2}\delta_{{2}}\rho_{{1}}\rho_{{3}}-2\,d_{{i}}\delta_{{1
}}{\delta_{{2}}}^{2}{\rho_{{1}}}^{2}+20\,d_{{i}}\delta_{{1}}{\delta_{{
			2}}}^{2}\rho_{{1}}\rho_{{3}}+12\,d_{{i}}\delta_{{1}}{\delta_{{2}}}^{2}
{\rho_{{3}}}^{2}-2\,d_{{i}}{\delta_{{2}}}^{3}\rho_{{1}}\rho_{{3}}+20\,
d_{{i}}{\delta_{{2}}}^{3}{\rho_{{3}}}^{2}+d_{{r}}{\delta_{{1}}}^{3}{
	\rho_{{1}}}^{2}-2\,d_{{r}}{\delta_{{1}}}^{2}\delta_{{2}}{\rho_{{1}}}^{
	2}-5\,d_{{r}}\delta_{{1}}{\delta_{{2}}}^{2}{\rho_{{1}}}^{2}+3\,d_{{r}}
\delta_{{1}}{\delta_{{2}}}^{2}{\rho_{{3}}}^{2}-4\,d_{{r}}{\delta_{{2}}
}^{3}{\rho_{{3}}}^{2} \Big) {I^{\ast}}^{4}+ \Big( -4\,{d_{{a}}}^{3}\delta_
{{1}}{\delta_{{2}}}^{2}{\rho_{{1}}}^{2}-4\,{d_{{a}}}^{3}{\delta_{{2}}}
^{3}\rho_{{1}}\rho_{{3}}-8\,{d_{{a}}}^{2}d_{{i}}{\delta_{{1}}}^{2}
\delta_{{2}}{\rho_{{1}}}^{2}-6\,{d_{{a}}}^{2}d_{{i}}\delta_{{1}}{
	\delta_{{2}}}^{2}{\rho_{{1}}}^{2}-36\,{d_{{a}}}^{2}d_{{i}}\delta_{{1}}
{\delta_{{2}}}^{2}\rho_{{1}}\rho_{{3}}-4\,{d_{{a}}}^{2}d_{{i}}{\delta_
	{{2}}}^{3}\rho_{{1}}\rho_{{3}}-16\,{d_{{a}}}^{2}d_{{i}}{\delta_{{2}}}^
{3}{\rho_{{3}}}^{2}-2\,{d_{{a}}}^{2}d_{{r}}\delta_{{1}}{\delta_{{2}}}^
{2}\rho_{{1}}\rho_{{3}}+4\,{d_{{a}}}^{2}d_{{r}}{\delta_{{2}}}^{3}\rho_
{{1}}\rho_{{3}}-4\,d_{{a}}{d_{{i}}}^{2}{\delta_{{1}}}^{3}{\rho_{{1}}}^
{2}-12\,d_{{a}}{d_{{i}}}^{2}{\delta_{{1}}}^{2}\delta_{{2}}{\rho_{{1}}}
^{2}-36\,d_{{a}}{d_{{i}}}^{2}{\delta_{{1}}}^{2}\delta_{{2}}\rho_{{1}}
\rho_{{3}}+2\,d_{{a}}{d_{{i}}}^{2}\delta_{{1}}{\delta_{{2}}}^{2}{\rho_
	{{1}}}^{2}-24\,d_{{a}}{d_{{i}}}^{2}\delta_{{1}}{\delta_{{2}}}^{2}\rho_
{{1}}\rho_{{3}}-48\,d_{{a}}{d_{{i}}}^{2}\delta_{{1}}{\delta_{{2}}}^{2}
{\rho_{{3}}}^{2}-8\,d_{{a}}d_{{i}}d_{{r}}{\delta_{{1}}}^{2}\delta_{{2}
}\rho_{{1}}\rho_{{3}}+20\,d_{{a}}d_{{i}}d_{{r}}\delta_{{1}}{\delta_{{2
}}}^{2}\rho_{{1}}\rho_{{3}}-4\,d_{{a}}d_{{i}}d_{{r}}{\delta_{{2}}}^{3}
\rho_{{1}}\rho_{{3}}-4\,d_{{a}}{d_{{r}}}^{2}{\delta_{{1}}}^{3}{\rho_{{
			1}}}^{2}-4\,d_{{a}}{d_{{r}}}^{2}{\delta_{{1}}}^{2}\delta_{{2}}{\rho_{{
			1}}}^{2}-16\,d_{{a}}{d_{{r}}}^{2}{\delta_{{1}}}^{2}\delta_{{2}}\rho_{{
		1}}\rho_{{3}}-4\,d_{{a}}{d_{{r}}}^{2}\delta_{{1}}{\delta_{{2}}}^{2}{
	\rho_{{1}}}^{2}-4\,d_{{a}}{d_{{r}}}^{2}\delta_{{1}}{\delta_{{2}}}^{2}
\rho_{{1}}\rho_{{3}}-20\,d_{{a}}{d_{{r}}}^{2}\delta_{{1}}{\delta_{{2}}
}^{2}{\rho_{{3}}}^{2}-4\,d_{{a}}{d_{{r}}}^{2}{\delta_{{2}}}^{3}\rho_{{
		1}}\rho_{{3}}-2\,{d_{{i}}}^{3}{\delta_{{1}}}^{3}{\rho_{{1}}}^{2}-4\,{d
	_{{i}}}^{3}{\delta_{{1}}}^{3}\rho_{{1}}\rho_{{3}}-2\,{d_{{i}}}^{3}{
	\delta_{{1}}}^{2}\delta_{{2}}{\rho_{{1}}}^{2}-12\,{d_{{i}}}^{3}{\delta
	_{{1}}}^{2}\delta_{{2}}\rho_{{1}}\rho_{{3}}-16\,{d_{{i}}}^{3}{\delta_{
		{1}}}^{2}\delta_{{2}}{\rho_{{3}}}^{2}-2\,{d_{{i}}}^{2}d_{{r}}{\delta_{
		{1}}}^{3}\rho_{{1}}\rho_{{3}}+8\,{d_{{i}}}^{2}d_{{r}}{\delta_{{1}}}^{2
}\delta_{{2}}\rho_{{1}}\rho_{{3}}+12\,{d_{{i}}}^{2}d_{{r}}\delta_{{1}}
{\delta_{{2}}}^{2}\rho_{{1}}\rho_{{3}}-2\,d_{{i}}{d_{{r}}}^{2}{\delta_
	{{1}}}^{3}{\rho_{{1}}}^{2}-4\,d_{{i}}{d_{{r}}}^{2}{\delta_{{1}}}^{3}
\rho_{{1}}\rho_{{3}}-4\,d_{{i}}{d_{{r}}}^{2}{\delta_{{1}}}^{2}\delta_{
	{2}}{\rho_{{1}}}^{2}-8\,d_{{i}}{d_{{r}}}^{2}{\delta_{{1}}}^{2}\delta_{
	{2}}\rho_{{1}}\rho_{{3}}-20\,d_{{i}}{d_{{r}}}^{2}{\delta_{{1}}}^{2}
\delta_{{2}}{\rho_{{3}}}^{2}-6\,d_{{i}}{d_{{r}}}^{2}\delta_{{1}}{
	\delta_{{2}}}^{2}{\rho_{{1}}}^{2}-16\,d_{{i}}{d_{{r}}}^{2}\delta_{{1}}
{\delta_{{2}}}^{2}\rho_{{1}}\rho_{{3}}-4\,d_{{i}}{d_{{r}}}^{2}{\delta_
	{{2}}}^{3}\rho_{{1}}\rho_{{3}}-32\,d_{{i}}{d_{{r}}}^{2}{\delta_{{2}}}^
{3}{\rho_{{3}}}^{2}-2\,{d_{{r}}}^{3}{\delta_{{1}}}^{3}\rho_{{1}}\rho_{
	{3}}+4\,{d_{{r}}}^{3}{\delta_{{1}}}^{2}\delta_{{2}}\rho_{{1}}\rho_{{3}
}+2\,{d_{{r}}}^{3}\delta_{{1}}{\delta_{{2}}}^{2}\rho_{{1}}\rho_{{3}}+4
\,{d_{{r}}}^{3}{\delta_{{2}}}^{3}\rho_{{1}}\rho_{{3}} \Big) {I^{\ast}}^{2}+
4\,{d_{{a}}}^{4}d_{{i}}{\delta_{{2}}}^{3}+24\,{d_{{a}}}^{3}{d_{{i}}}^{
	2}\delta_{{1}}{\delta_{{2}}}^{2}+16\,{d_{{a}}}^{3}{d_{{r}}}^{2}\delta_
{{1}}{\delta_{{2}}}^{2}+24\,{d_{{i}}}^{3}{\delta_{{1}}}^{2}\delta_{{2}
}{d_{{a}}}^{2}+48\,{d_{{a}}}^{2}d_{{i}}{d_{{r}}}^{2}{\delta_{{1}}}^{2}
\delta_{{2}}+16\,{d_{{a}}}^{2}d_{{i}}{d_{{r}}}^{2}{\delta_{{2}}}^{3}+4
\,d_{{a}}{d_{{i}}}^{4}{\delta_{{1}}}^{3}+16\,d_{{a}}{d_{{i}}}^{2}{
	\delta_{{1}}}^{3}{d_{{r}}}^{2}+48\,d_{{a}}{d_{{i}}}^{2}\delta_{{1}}{d_
	{{r}}}^{2}{\delta_{{2}}}^{2}+4\,d_{{a}}{d_{{r}}}^{4}{\delta_{{1}}}^{3}
+8\,d_{{a}}{d_{{r}}}^{4}\delta_{{1}}{\delta_{{2}}}^{2}+16\,{d_{{i}}}^{
	3}{\delta_{{1}}}^{2}\delta_{{2}}{d_{{r}}}^{2}+8\,d_{{i}}{d_{{r}}}^{4}{
	\delta_{{1}}}^{2}\delta_{{2}}+4\,d_{{i}}{d_{{r}}}^{4}{\delta_{{2}}}^{3
} \Big) {{T_{reg}^{\ast}}}^{3}+ \Big( -{\rho_{{1}}}^{2}{\rho_{{3}}}^{2}
\Big( {\delta_{{1}}}^{2}{\rho_{{1}}}^{2}+2\,\delta_{{1}}\delta_{{2}}
{\rho_{{1}}}^{2}+6\,\delta_{{1}}\delta_{{2}}\rho_{{1}}\rho_{{3}}+{
	\delta_{{2}}}^{2}{\rho_{{1}}}^{2}+2\,{\delta_{{2}}}^{2}\rho_{{1}}\rho_
{{3}}+5\,{\delta_{{2}}}^{2}{\rho_{{3}}}^{2} \Big) {I^{\ast}}^{6}+2\,\rho_{{
		3}}\rho_{{1}} \Big( 8\,{d_{{a}}}^{2}\delta_{{1}}\delta_{{2}}{\rho_{{1
}}}^{2}-{d_{{a}}}^{2}{\delta_{{2}}}^{2}{\rho_{{1}}}^{2}+4\,{d_{{a}}}^{
	2}{\delta_{{2}}}^{2}\rho_{{1}}\rho_{{3}}+7\,d_{{a}}d_{{i}}{\delta_{{1}
}}^{2}{\rho_{{1}}}^{2}+2\,d_{{a}}d_{{i}}\delta_{{1}}\delta_{{2}}{\rho_
	{{1}}}^{2}+14\,d_{{a}}d_{{i}}\delta_{{1}}\delta_{{2}}\rho_{{1}}\rho_{{
		3}}+2\,d_{{a}}d_{{i}}{\delta_{{2}}}^{2}{\rho_{{1}}}^{2}+12\,d_{{a}}d_{
	{i}}{\delta_{{2}}}^{2}{\rho_{{3}}}^{2}+3\,d_{{a}}d_{{r}}{\delta_{{1}}}
^{2}{\rho_{{1}}}^{2}-4\,d_{{a}}d_{{r}}\delta_{{1}}\delta_{{2}}{\rho_{{
			1}}}^{2}-d_{{a}}d_{{r}}{\delta_{{2}}}^{2}{\rho_{{1}}}^{2}+3\,d_{{a}}d_
{{r}}{\delta_{{2}}}^{2}{\rho_{{3}}}^{2}+4\,{d_{{i}}}^{2}{\delta_{{1}}}
^{2}{\rho_{{1}}}^{2}+3\,{d_{{i}}}^{2}{\delta_{{1}}}^{2}\rho_{{1}}\rho_
{{3}}+4\,{d_{{i}}}^{2}\delta_{{1}}\delta_{{2}}{\rho_{{1}}}^{2}+22\,{d_
	{{i}}}^{2}\delta_{{1}}\delta_{{2}}\rho_{{1}}\rho_{{3}}+12\,{d_{{i}}}^{
	2}\delta_{{1}}\delta_{{2}}{\rho_{{3}}}^{2}+{d_{{i}}}^{2}{\delta_{{2}}}
^{2}{\rho_{{1}}}^{2}+3\,{d_{{i}}}^{2}{\delta_{{2}}}^{2}\rho_{{1}}\rho_
{{3}}+36\,{d_{{i}}}^{2}{\delta_{{2}}}^{2}{\rho_{{3}}}^{2}-d_{{i}}d_{{r
}}{\delta_{{1}}}^{2}{\rho_{{1}}}^{2}-7\,d_{{i}}d_{{r}}\delta_{{1}}
\delta_{{2}}{\rho_{{1}}}^{2}+9\,d_{{i}}d_{{r}}\delta_{{1}}\delta_{{2}}
{\rho_{{3}}}^{2}-2\,d_{{i}}d_{{r}}{\delta_{{2}}}^{2}{\rho_{{1}}}^{2}-
12\,d_{{i}}d_{{r}}{\delta_{{2}}}^{2}{\rho_{{3}}}^{2}+{d_{{r}}}^{2}{
	\delta_{{1}}}^{2}\rho_{{1}}\rho_{{3}}+2\,{d_{{r}}}^{2}\delta_{{1}}
\delta_{{2}}\rho_{{1}}\rho_{{3}}+2\,{d_{{r}}}^{2}\delta_{{1}}\delta_{{
		2}}{\rho_{{3}}}^{2}+4\,{d_{{r}}}^{2}{\delta_{{2}}}^{2}\rho_{{1}}\rho_{
	{3}}+3\,{d_{{r}}}^{2}{\delta_{{2}}}^{2}{\rho_{{3}}}^{2} \Big) {I^{\ast}}^{4
}+ \Big( -{d_{{a}}}^{4}{\delta_{{2}}}^{2}{\rho_{{1}}}^{2}-6\,{d_{{a}}
}^{3}d_{{i}}\delta_{{1}}\delta_{{2}}{\rho_{{1}}}^{2}-10\,{d_{{a}}}^{3}
d_{{i}}{\delta_{{2}}}^{2}\rho_{{1}}\rho_{{3}}-2\,{d_{{a}}}^{3}d_{{r}}{
	\delta_{{2}}}^{2}\rho_{{1}}\rho_{{3}}-7\,{d_{{a}}}^{2}{d_{{i}}}^{2}{
	\delta_{{1}}}^{2}{\rho_{{1}}}^{2}-6\,{d_{{a}}}^{2}{d_{{i}}}^{2}\delta_
{{1}}\delta_{{2}}{\rho_{{1}}}^{2}-30\,{d_{{a}}}^{2}{d_{{i}}}^{2}\delta
_{{1}}\delta_{{2}}\rho_{{1}}\rho_{{3}}-2\,{d_{{a}}}^{2}{d_{{i}}}^{2}{
	\delta_{{2}}}^{2}{\rho_{{1}}}^{2}-12\,{d_{{a}}}^{2}{d_{{i}}}^{2}{
	\delta_{{2}}}^{2}\rho_{{1}}\rho_{{3}}-24\,{d_{{a}}}^{2}{d_{{i}}}^{2}{
	\delta_{{2}}}^{2}{\rho_{{3}}}^{2}-16\,{d_{{a}}}^{2}d_{{i}}d_{{r}}
\delta_{{1}}\delta_{{2}}\rho_{{1}}\rho_{{3}}+10\,{d_{{a}}}^{2}d_{{i}}d
_{{r}}{\delta_{{2}}}^{2}\rho_{{1}}\rho_{{3}}-6\,{d_{{a}}}^{2}{d_{{r}}}
^{2}{\delta_{{1}}}^{2}{\rho_{{1}}}^{2}-2\,{d_{{a}}}^{2}{d_{{r}}}^{2}
\delta_{{1}}\delta_{{2}}{\rho_{{1}}}^{2}-14\,{d_{{a}}}^{2}{d_{{r}}}^{2
}\delta_{{1}}\delta_{{2}}\rho_{{1}}\rho_{{3}}-2\,{d_{{a}}}^{2}{d_{{r}}
}^{2}{\delta_{{2}}}^{2}{\rho_{{1}}}^{2}-2\,{d_{{a}}}^{2}{d_{{r}}}^{2}{
	\delta_{{2}}}^{2}\rho_{{1}}\rho_{{3}}-10\,{d_{{a}}}^{2}{d_{{r}}}^{2}{
	\delta_{{2}}}^{2}{\rho_{{3}}}^{2}-4\,d_{{a}}{d_{{i}}}^{3}{\delta_{{1}}
}^{2}{\rho_{{1}}}^{2}-10\,d_{{a}}{d_{{i}}}^{3}{\delta_{{1}}}^{2}\rho_{
	{1}}\rho_{{3}}-2\,d_{{a}}{d_{{i}}}^{3}\delta_{{1}}\delta_{{2}}{\rho_{{
			1}}}^{2}-24\,d_{{a}}{d_{{i}}}^{3}\delta_{{1}}\delta_{{2}}\rho_{{1}}
\rho_{{3}}-32\,d_{{a}}{d_{{i}}}^{3}\delta_{{1}}\delta_{{2}}{\rho_{{3}}
}^{2}-10\,d_{{a}}{d_{{i}}}^{2}d_{{r}}{\delta_{{1}}}^{2}\rho_{{1}}\rho_
{{3}}+16\,d_{{a}}{d_{{i}}}^{2}d_{{r}}\delta_{{1}}\delta_{{2}}\rho_{{1}
}\rho_{{3}}-12\,d_{{a}}{d_{{i}}}^{2}d_{{r}}{\delta_{{2}}}^{2}\rho_{{1}
}\rho_{{3}}-4\,d_{{a}}d_{{i}}{d_{{r}}}^{2}{\delta_{{1}}}^{2}{\rho_{{1}
}}^{2}-10\,d_{{a}}d_{{i}}{d_{{r}}}^{2}{\delta_{{1}}}^{2}\rho_{{1}}\rho
_{{3}}-6\,d_{{a}}d_{{i}}{d_{{r}}}^{2}\delta_{{1}}\delta_{{2}}{\rho_{{1
}}}^{2}-16\,d_{{a}}d_{{i}}{d_{{r}}}^{2}\delta_{{1}}\delta_{{2}}\rho_{{
		1}}\rho_{{3}}-40\,d_{{a}}d_{{i}}{d_{{r}}}^{2}\delta_{{1}}\delta_{{2}}{
	\rho_{{3}}}^{2}-10\,d_{{a}}d_{{i}}{d_{{r}}}^{2}{\delta_{{2}}}^{2}\rho_
{{1}}\rho_{{3}}-6\,d_{{a}}{d_{{r}}}^{3}{\delta_{{1}}}^{2}\rho_{{1}}
\rho_{{3}}+8\,d_{{a}}{d_{{r}}}^{3}\delta_{{1}}\delta_{{2}}\rho_{{1}}
\rho_{{3}}-2\,d_{{a}}{d_{{r}}}^{3}{\delta_{{2}}}^{2}\rho_{{1}}\rho_{{3
}}-{d_{{i}}}^{4}{\delta_{{1}}}^{2}{\rho_{{1}}}^{2}-4\,{d_{{i}}}^{4}{
	\delta_{{1}}}^{2}\rho_{{1}}\rho_{{3}}-4\,{d_{{i}}}^{4}{\delta_{{1}}}^{
	2}{\rho_{{3}}}^{2}+2\,{d_{{i}}}^{3}d_{{r}}{\delta_{{1}}}^{2}\rho_{{1}}
\rho_{{3}}+4\,{d_{{i}}}^{3}d_{{r}}\delta_{{1}}\delta_{{2}}\rho_{{1}}
\rho_{{3}}-3\,{d_{{i}}}^{2}{d_{{r}}}^{2}{\delta_{{1}}}^{2}{\rho_{{1}}}
^{2}-6\,{d_{{i}}}^{2}{d_{{r}}}^{2}{\delta_{{1}}}^{2}\rho_{{1}}\rho_{{3
}}-10\,{d_{{i}}}^{2}{d_{{r}}}^{2}{\delta_{{1}}}^{2}{\rho_{{3}}}^{2}-6
\,{d_{{i}}}^{2}{d_{{r}}}^{2}\delta_{{1}}\delta_{{2}}{\rho_{{1}}}^{2}-
14\,{d_{{i}}}^{2}{d_{{r}}}^{2}\delta_{{1}}\delta_{{2}}\rho_{{1}}\rho_{
	{3}}-{d_{{i}}}^{2}{d_{{r}}}^{2}{\delta_{{2}}}^{2}{\rho_{{1}}}^{2}-12\,
{d_{{i}}}^{2}{d_{{r}}}^{2}{\delta_{{2}}}^{2}\rho_{{1}}\rho_{{3}}-48\,{
	d_{{i}}}^{2}{d_{{r}}}^{2}{\delta_{{2}}}^{2}{\rho_{{3}}}^{2}+2\,d_{{i}}
{d_{{r}}}^{3}{\delta_{{1}}}^{2}\rho_{{1}}\rho_{{3}}+10\,d_{{i}}{d_{{r}
}}^{3}{\delta_{{2}}}^{2}\rho_{{1}}\rho_{{3}}-{d_{{r}}}^{4}{\delta_{{1}
}}^{2}{\rho_{{3}}}^{2}-5\,{d_{{r}}}^{4}{\delta_{{2}}}^{2}{\rho_{{3}}}^
{2} \Big) {I^{\ast}}^{2}+6\,{d_{{a}}}^{4}{d_{{i}}}^{2}{\delta_{{2}}}^{2}+4
\,{d_{{a}}}^{4}{d_{{r}}}^{2}{\delta_{{2}}}^{2}+16\,{d_{{a}}}^{3}{d_{{i
}}}^{3}\delta_{{1}}\delta_{{2}}+32\,{d_{{a}}}^{3}d_{{i}}{d_{{r}}}^{2}
\delta_{{1}}\delta_{{2}}+6\,{d_{{a}}}^{2}{d_{{i}}}^{4}{\delta_{{1}}}^{
	2}+24\,{d_{{a}}}^{2}{d_{{i}}}^{2}{d_{{r}}}^{2}{\delta_{{1}}}^{2}+24\,{
	d_{{a}}}^{2}{d_{{i}}}^{2}{d_{{r}}}^{2}{\delta_{{2}}}^{2}+6\,{d_{{a}}}^
{2}{d_{{r}}}^{4}{\delta_{{1}}}^{2}+4\,{d_{{a}}}^{2}{d_{{r}}}^{4}{
	\delta_{{2}}}^{2}+32\,d_{{a}}{d_{{i}}}^{3}{d_{{r}}}^{2}\delta_{{1}}
\delta_{{2}}+16\,d_{{a}}d_{{i}}{d_{{r}}}^{4}\delta_{{1}}\delta_{{2}}+4
\,{d_{{i}}}^{4}{d_{{r}}}^{2}{\delta_{{1}}}^{2}+4\,{d_{{i}}}^{2}{d_{{r}
}}^{4}{\delta_{{1}}}^{2}+6\,{d_{{i}}}^{2}{d_{{r}}}^{4}{\delta_{{2}}}^{
	2} \Big) {{T_{reg}^{\ast}}}^{2}+ \Big( -2\,{\rho_{{1}}}^{2}{\rho_{{3}}}^{2}
\Big( d_{{a}}\delta_{{1}}{\rho_{{1}}}^{2}+2\,d_{{a}}\delta_{{2}}\rho
_{{1}}\rho_{{3}}+d_{{i}}\delta_{{1}}{\rho_{{1}}}^{2}+2\,d_{{i}}\delta_
{{1}}\rho_{{1}}\rho_{{3}}+2\,d_{{i}}\delta_{{2}}{\rho_{{1}}}^{2}+6\,d_
{{i}}\delta_{{2}}\rho_{{1}}\rho_{{3}}+6\,d_{{i}}\delta_{{2}}{\rho_{{3}
}}^{2}+d_{{r}}\delta_{{1}}\rho_{{1}}\rho_{{3}}-2\,d_{{r}}\delta_{{2}}
\rho_{{1}}\rho_{{3}} \Big) {I^{\ast}}^{6}+2\,\rho_{{3}}\rho_{{1}} \Big( 2
\,{d_{{a}}}^{3}\delta_{{2}}{\rho_{{1}}}^{2}+5\,{d_{{a}}}^{2}d_{{i}}
\delta_{{1}}{\rho_{{1}}}^{2}+{d_{{a}}}^{2}d_{{i}}\delta_{{2}}{\rho_{{1
}}}^{2}+8\,{d_{{a}}}^{2}d_{{i}}\delta_{{2}}\rho_{{1}}\rho_{{3}}+3\,{d_
	{{a}}}^{2}d_{{r}}\delta_{{1}}{\rho_{{1}}}^{2}-2\,{d_{{a}}}^{2}d_{{r}}
\delta_{{2}}{\rho_{{1}}}^{2}+6\,d_{{a}}{d_{{i}}}^{2}\delta_{{1}}{\rho_
	{{1}}}^{2}+8\,d_{{a}}{d_{{i}}}^{2}\delta_{{1}}\rho_{{1}}\rho_{{3}}+4\,
d_{{a}}{d_{{i}}}^{2}\delta_{{2}}{\rho_{{1}}}^{2}+12\,d_{{a}}{d_{{i}}}^
{2}\delta_{{2}}{\rho_{{3}}}^{2}-2\,d_{{a}}d_{{i}}d_{{r}}\delta_{{1}}{
	\rho_{{1}}}^{2}+d_{{a}}d_{{i}}d_{{r}}\delta_{{2}}{\rho_{{1}}}^{2}+9\,d
_{{a}}d_{{i}}d_{{r}}\delta_{{2}}{\rho_{{3}}}^{2}+2\,d_{{a}}{d_{{r}}}^{
	2}\delta_{{1}}\rho_{{1}}\rho_{{3}}+2\,d_{{a}}{d_{{r}}}^{2}\delta_{{2}}
{\rho_{{3}}}^{2}+3\,{d_{{i}}}^{3}\delta_{{1}}{\rho_{{1}}}^{2}+8\,{d_{{
			i}}}^{3}\delta_{{1}}\rho_{{1}}\rho_{{3}}+4\,{d_{{i}}}^{3}\delta_{{1}}{
	\rho_{{3}}}^{2}+3\,{d_{{i}}}^{3}\delta_{{2}}{\rho_{{1}}}^{2}+8\,{d_{{i
}}}^{3}\delta_{{2}}\rho_{{1}}\rho_{{3}}+28\,{d_{{i}}}^{3}\delta_{{2}}{
	\rho_{{3}}}^{2}-{d_{{i}}}^{2}d_{{r}}\delta_{{1}}{\rho_{{1}}}^{2}+6\,{d
	_{{i}}}^{2}d_{{r}}\delta_{{1}}{\rho_{{3}}}^{2}-4\,{d_{{i}}}^{2}d_{{r}}
\delta_{{2}}{\rho_{{1}}}^{2}-12\,{d_{{i}}}^{2}d_{{r}}\delta_{{2}}{\rho
	_{{3}}}^{2}+2\,d_{{i}}{d_{{r}}}^{2}\delta_{{1}}\rho_{{1}}\rho_{{3}}+d_
{{i}}{d_{{r}}}^{2}\delta_{{1}}{\rho_{{3}}}^{2}+8\,d_{{i}}{d_{{r}}}^{2}
\delta_{{2}}\rho_{{1}}\rho_{{3}}+9\,d_{{i}}{d_{{r}}}^{2}\delta_{{2}}{
	\rho_{{3}}}^{2}+{d_{{r}}}^{3}\delta_{{1}}{\rho_{{3}}}^{2}-2\,{d_{{r}}}
^{3}\delta_{{2}}{\rho_{{3}}}^{2} \Big) {I^{\ast}}^{4}+ \Big( -2\,{d_{{a}}}
^{4}d_{{i}}\delta_{{2}}{\rho_{{1}}}^{2}-6\,{d_{{a}}}^{3}{d_{{i}}}^{2}
\delta_{{1}}{\rho_{{1}}}^{2}-8\,{d_{{a}}}^{3}{d_{{i}}}^{2}\delta_{{2}}
\rho_{{1}}\rho_{{3}}-8\,{d_{{a}}}^{3}d_{{i}}d_{{r}}\delta_{{2}}\rho_{{
		1}}\rho_{{3}}-4\,{d_{{a}}}^{3}{d_{{r}}}^{2}\delta_{{1}}{\rho_{{1}}}^{2
}-4\,{d_{{a}}}^{3}{d_{{r}}}^{2}\delta_{{2}}\rho_{{1}}\rho_{{3}}-2\,{d_
	{{a}}}^{2}{d_{{i}}}^{3}\delta_{{1}}{\rho_{{1}}}^{2}-8\,{d_{{a}}}^{2}{d
	_{{i}}}^{3}\delta_{{1}}\rho_{{1}}\rho_{{3}}-4\,{d_{{a}}}^{2}{d_{{i}}}^
{3}\delta_{{2}}{\rho_{{1}}}^{2}-12\,{d_{{a}}}^{2}{d_{{i}}}^{3}\delta_{
	{2}}\rho_{{1}}\rho_{{3}}-16\,{d_{{a}}}^{2}{d_{{i}}}^{3}\delta_{{2}}{
	\rho_{{3}}}^{2}-14\,{d_{{a}}}^{2}{d_{{i}}}^{2}d_{{r}}\delta_{{1}}\rho_
{{1}}\rho_{{3}}+8\,{d_{{a}}}^{2}{d_{{i}}}^{2}d_{{r}}\delta_{{2}}\rho_{
	{1}}\rho_{{3}}-2\,{d_{{a}}}^{2}d_{{i}}{d_{{r}}}^{2}\delta_{{1}}{\rho_{
		{1}}}^{2}-8\,{d_{{a}}}^{2}d_{{i}}{d_{{r}}}^{2}\delta_{{1}}\rho_{{1}}
\rho_{{3}}-4\,{d_{{a}}}^{2}d_{{i}}{d_{{r}}}^{2}\delta_{{2}}{\rho_{{1}}
}^{2}-8\,{d_{{a}}}^{2}d_{{i}}{d_{{r}}}^{2}\delta_{{2}}\rho_{{1}}\rho_{
	{3}}-20\,{d_{{a}}}^{2}d_{{i}}{d_{{r}}}^{2}\delta_{{2}}{\rho_{{3}}}^{2}
-6\,{d_{{a}}}^{2}{d_{{r}}}^{3}\delta_{{1}}\rho_{{1}}\rho_{{3}}+4\,{d_{
		{a}}}^{2}{d_{{r}}}^{3}\delta_{{2}}\rho_{{1}}\rho_{{3}}-2\,d_{{a}}{d_{{
			i}}}^{4}\delta_{{1}}{\rho_{{1}}}^{2}-8\,d_{{a}}{d_{{i}}}^{4}\delta_{{1
}}\rho_{{1}}\rho_{{3}}-8\,d_{{a}}{d_{{i}}}^{4}\delta_{{1}}{\rho_{{3}}}
^{2}+4\,d_{{a}}{d_{{i}}}^{3}d_{{r}}\delta_{{1}}\rho_{{1}}\rho_{{3}}-12
\,d_{{a}}{d_{{i}}}^{3}d_{{r}}\delta_{{2}}\rho_{{1}}\rho_{{3}}-6\,d_{{a
}}{d_{{i}}}^{2}{d_{{r}}}^{2}\delta_{{1}}{\rho_{{1}}}^{2}-12\,d_{{a}}{d
	_{{i}}}^{2}{d_{{r}}}^{2}\delta_{{1}}\rho_{{1}}\rho_{{3}}-20\,d_{{a}}{d
	_{{i}}}^{2}{d_{{r}}}^{2}\delta_{{1}}{\rho_{{3}}}^{2}-8\,d_{{a}}{d_{{i}
}}^{2}{d_{{r}}}^{2}\delta_{{2}}\rho_{{1}}\rho_{{3}}+4\,d_{{a}}d_{{i}}{
	d_{{r}}}^{3}\delta_{{1}}\rho_{{1}}\rho_{{3}}-8\,d_{{a}}d_{{i}}{d_{{r}}
}^{3}\delta_{{2}}\rho_{{1}}\rho_{{3}}-2\,d_{{a}}{d_{{r}}}^{4}\delta_{{
		1}}{\rho_{{3}}}^{2}-2\,{d_{{i}}}^{3}{d_{{r}}}^{2}\delta_{{1}}{\rho_{{1
}}}^{2}-4\,{d_{{i}}}^{3}{d_{{r}}}^{2}\delta_{{1}}\rho_{{1}}\rho_{{3}}-
2\,{d_{{i}}}^{3}{d_{{r}}}^{2}\delta_{{2}}{\rho_{{1}}}^{2}-12\,{d_{{i}}
}^{3}{d_{{r}}}^{2}\delta_{{2}}\rho_{{1}}\rho_{{3}}-32\,{d_{{i}}}^{3}{d
	_{{r}}}^{2}\delta_{{2}}{\rho_{{3}}}^{2}-2\,{d_{{i}}}^{2}{d_{{r}}}^{3}
\delta_{{1}}\rho_{{1}}\rho_{{3}}+8\,{d_{{i}}}^{2}{d_{{r}}}^{3}\delta_{
	{2}}\rho_{{1}}\rho_{{3}}-10\,d_{{i}}{d_{{r}}}^{4}\delta_{{2}}{\rho_{{3
}}}^{2} \Big) {I^{\ast}}^{2}+4\,{d_{{a}}}^{4}{d_{{i}}}^{3}\delta_{{2}}+8\,{
	d_{{a}}}^{4}d_{{i}}{d_{{r}}}^{2}\delta_{{2}}+4\,{d_{{a}}}^{3}{d_{{i}}}
^{4}\delta_{{1}}+16\,{d_{{a}}}^{3}{d_{{i}}}^{2}{d_{{r}}}^{2}\delta_{{1
}}+4\,{d_{{a}}}^{3}{d_{{r}}}^{4}\delta_{{1}}+16\,{d_{{a}}}^{2}{d_{{i}}
}^{3}{d_{{r}}}^{2}\delta_{{2}}+8\,{d_{{a}}}^{2}d_{{i}}{d_{{r}}}^{4}
\delta_{{2}}+8\,d_{{a}}{d_{{i}}}^{4}{d_{{r}}}^{2}\delta_{{1}}+8\,d_{{a
}}{d_{{i}}}^{2}{d_{{r}}}^{4}\delta_{{1}}+4\,{d_{{i}}}^{3}{d_{{r}}}^{4}
\delta_{{2}} \Big) {T_{reg}^{\ast}}+{I^{\ast}}^{8}{\rho_{{1}}}^{4}{\rho_{{3}}}^{4}-
{\rho_{{1}}}^{2}{\rho_{{3}}}^{2} \Big( {d_{{a}}}^{2}{\rho_{{1}}}^{2}+
2\,d_{{a}}d_{{i}}\rho_{{1}}\rho_{{3}}+2\,d_{{a}}d_{{r}}\rho_{{1}}\rho_
{{3}}+5\,{d_{{i}}}^{2}{\rho_{{1}}}^{2}+12\,{d_{{i}}}^{2}\rho_{{1}}\rho
_{{3}}+8\,{d_{{i}}}^{2}{\rho_{{3}}}^{2}-2\,d_{{i}}d_{{r}}\rho_{{1}}
\rho_{{3}}+{d_{{r}}}^{2}{\rho_{{3}}}^{2} \Big) {I^{\ast}}^{6}+2\,\rho_{{3}}
\rho_{{1}} \Big( {d_{{a}}}^{3}d_{{i}}{\rho_{{1}}}^{2}+{d_{{a}}}^{3}d_
{{r}}{\rho_{{1}}}^{2}+3\,{d_{{a}}}^{2}{d_{{i}}}^{2}{\rho_{{1}}}^{2}+5
\,{d_{{a}}}^{2}{d_{{i}}}^{2}\rho_{{1}}\rho_{{3}}-{d_{{a}}}^{2}d_{{i}}d
_{{r}}{\rho_{{1}}}^{2}+{d_{{a}}}^{2}{d_{{r}}}^{2}\rho_{{1}}\rho_{{3}}+
d_{{a}}{d_{{i}}}^{3}{\rho_{{1}}}^{2}+4\,d_{{a}}{d_{{i}}}^{3}{\rho_{{3}
}}^{2}+3\,d_{{a}}{d_{{i}}}^{2}d_{{r}}{\rho_{{1}}}^{2}+6\,d_{{a}}{d_{{i
}}}^{2}d_{{r}}{\rho_{{3}}}^{2}+d_{{a}}d_{{i}}{d_{{r}}}^{2}{\rho_{{3}}}
^{2}+d_{{a}}{d_{{r}}}^{3}{\rho_{{3}}}^{2}+2\,{d_{{i}}}^{4}{\rho_{{1}}}
^{2}+4\,{d_{{i}}}^{4}\rho_{{1}}\rho_{{3}}+8\,{d_{{i}}}^{4}{\rho_{{3}}}
^{2}-{d_{{i}}}^{3}d_{{r}}{\rho_{{1}}}^{2}-4\,{d_{{i}}}^{3}d_{{r}}{\rho
	_{{3}}}^{2}+5\,{d_{{i}}}^{2}{d_{{r}}}^{2}\rho_{{1}}\rho_{{3}}+6\,{d_{{
			i}}}^{2}{d_{{r}}}^{2}{\rho_{{3}}}^{2}-d_{{i}}{d_{{r}}}^{3}{\rho_{{3}}}
^{2} \Big) {I^{\ast}}^{4}+ \Big( -2\,{d_{{a}}}^{4}{d_{{i}}}^{2}{\rho_{{1}}
}^{2}-{d_{{a}}}^{4}{d_{{r}}}^{2}{\rho_{{1}}}^{2}-2\,{d_{{a}}}^{3}{d_{{
			i}}}^{3}\rho_{{1}}\rho_{{3}}-6\,{d_{{a}}}^{3}{d_{{i}}}^{2}d_{{r}}\rho_
{{1}}\rho_{{3}}-2\,{d_{{a}}}^{3}d_{{i}}{d_{{r}}}^{2}\rho_{{1}}\rho_{{3
}}-2\,{d_{{a}}}^{3}{d_{{r}}}^{3}\rho_{{1}}\rho_{{3}}-2\,{d_{{a}}}^{2}{
	d_{{i}}}^{4}{\rho_{{1}}}^{2}-4\,{d_{{a}}}^{2}{d_{{i}}}^{4}\rho_{{1}}
\rho_{{3}}-4\,{d_{{a}}}^{2}{d_{{i}}}^{4}{\rho_{{3}}}^{2}+2\,{d_{{a}}}^
{2}{d_{{i}}}^{3}d_{{r}}\rho_{{1}}\rho_{{3}}-4\,{d_{{a}}}^{2}{d_{{i}}}^
{2}{d_{{r}}}^{2}{\rho_{{1}}}^{2}-6\,{d_{{a}}}^{2}{d_{{i}}}^{2}{d_{{r}}
}^{2}\rho_{{1}}\rho_{{3}}-10\,{d_{{a}}}^{2}{d_{{i}}}^{2}{d_{{r}}}^{2}{
	\rho_{{3}}}^{2}+2\,{d_{{a}}}^{2}d_{{i}}{d_{{r}}}^{3}\rho_{{1}}\rho_{{3
}}-{d_{{a}}}^{2}{d_{{r}}}^{4}{\rho_{{3}}}^{2}-4\,d_{{a}}{d_{{i}}}^{4}d
_{{r}}\rho_{{1}}\rho_{{3}}-2\,d_{{a}}{d_{{i}}}^{3}{d_{{r}}}^{2}\rho_{{
		1}}\rho_{{3}}-6\,d_{{a}}{d_{{i}}}^{2}{d_{{r}}}^{3}\rho_{{1}}\rho_{{3}}
-{d_{{i}}}^{4}{d_{{r}}}^{2}{\rho_{{1}}}^{2}-4\,{d_{{i}}}^{4}{d_{{r}}}^
{2}\rho_{{1}}\rho_{{3}}-8\,{d_{{i}}}^{4}{d_{{r}}}^{2}{\rho_{{3}}}^{2}+
2\,{d_{{i}}}^{3}{d_{{r}}}^{3}\rho_{{1}}\rho_{{3}}-5\,{d_{{i}}}^{2}{d_{
		{r}}}^{4}{\rho_{{3}}}^{2} \Big) {I^{\ast}}^{2}+{d_{{a}}}^{4}{d_{{i}}}^{4}+4
\,{d_{{a}}}^{4}{d_{{i}}}^{2}{d_{{r}}}^{2}+{d_{{a}}}^{4}{d_{{r}}}^{4}+4
\,{d_{{a}}}^{2}{d_{{i}}}^{4}{d_{{r}}}^{2}+4\,{d_{{a}}}^{2}{d_{{i}}}^{2
}{d_{{r}}}^{4}+{d_{{i}}}^{4}{d_{{r}}}^{4}$.\vspace{6pt}

$b_6=2\,{\delta_{{1}}}^{2}{\delta_{{2}}}^{2} \Big( {\delta_{{1}}}^{2}+{
	\delta_{{2}}}^{2} \Big) {{T_{reg}^{\ast}}}^{6}+4\,\delta_{{1}}\delta_{{2}}
\Big( 2\,d_{{a}}{\delta_{{1}}}^{2}\delta_{{2}}+d_{{a}}{\delta_{{2}}}
^{3}+d_{{i}}{\delta_{{1}}}^{3}+2\,d_{{i}}\delta_{{1}}{\delta_{{2}}}^{2
} \Big) {{T_{reg}^{\ast}}}^{5}+ \Big(  \Big( -{\delta_{{1}}}^{4}{\rho_{{1}
}}^{2}-2\,{\delta_{{1}}}^{3}\delta_{{2}}{\rho_{{1}}}^{2}-6\,{\delta_{{
			1}}}^{3}\delta_{{2}}\rho_{{1}}\rho_{{3}}-3\,{\delta_{{1}}}^{2}{\delta_
	{{2}}}^{2}{\rho_{{1}}}^{2}-2\,{\delta_{{1}}}^{2}{\delta_{{2}}}^{2}\rho
_{{1}}\rho_{{3}}-5\,{\delta_{{1}}}^{2}{\delta_{{2}}}^{2}{\rho_{{3}}}^{
	2}-2\,\delta_{{1}}{\delta_{{2}}}^{3}{\rho_{{1}}}^{2}-6\,\delta_{{1}}{
	\delta_{{2}}}^{3}\rho_{{1}}\rho_{{3}}-4\,{\delta_{{2}}}^{4}{\rho_{{3}}
}^{2} \Big) {I^{\ast}}^{2}+12\,{d_{{a}}}^{2}{\delta_{{1}}}^{2}{\delta_{{2}}
}^{2}+2\,{d_{{a}}}^{2}{\delta_{{2}}}^{4}+16\,d_{{a}}d_{{i}}{\delta_{{1
}}}^{3}\delta_{{2}}+16\,d_{{a}}d_{{i}}\delta_{{1}}{\delta_{{2}}}^{3}+2
\,{d_{{i}}}^{2}{\delta_{{1}}}^{4}+12\,{d_{{i}}}^{2}{\delta_{{1}}}^{2}{
	\delta_{{2}}}^{2}+2\,{d_{{r}}}^{2}{\delta_{{1}}}^{4}+8\,{d_{{r}}}^{2}{
	\delta_{{1}}}^{2}{\delta_{{2}}}^{2}+2\,{d_{{r}}}^{2}{\delta_{{2}}}^{4}
\Big) {{T_{reg}^{\ast}}}^{4}+ \Big(  \Big( -4\,d_{{a}}{\delta_{{1}}}^{3}{
	\rho_{{1}}}^{2}-4\,d_{{a}}{\delta_{{1}}}^{2}\delta_{{2}}{\rho_{{1}}}^{
	2}-16\,d_{{a}}{\delta_{{1}}}^{2}\delta_{{2}}\rho_{{1}}\rho_{{3}}-4\,d_
{{a}}\delta_{{1}}{\delta_{{2}}}^{2}{\rho_{{1}}}^{2}-4\,d_{{a}}\delta_{
	{1}}{\delta_{{2}}}^{2}\rho_{{1}}\rho_{{3}}-10\,d_{{a}}\delta_{{1}}{
	\delta_{{2}}}^{2}{\rho_{{3}}}^{2}-4\,d_{{a}}{\delta_{{2}}}^{3}\rho_{{1
}}\rho_{{3}}-2\,d_{{i}}{\delta_{{1}}}^{3}{\rho_{{1}}}^{2}-4\,d_{{i}}{
	\delta_{{1}}}^{3}\rho_{{1}}\rho_{{3}}-4\,d_{{i}}{\delta_{{1}}}^{2}
\delta_{{2}}{\rho_{{1}}}^{2}-8\,d_{{i}}{\delta_{{1}}}^{2}\delta_{{2}}
\rho_{{1}}\rho_{{3}}-10\,d_{{i}}{\delta_{{1}}}^{2}\delta_{{2}}{\rho_{{
			3}}}^{2}-6\,d_{{i}}\delta_{{1}}{\delta_{{2}}}^{2}{\rho_{{1}}}^{2}-16\,
d_{{i}}\delta_{{1}}{\delta_{{2}}}^{2}\rho_{{1}}\rho_{{3}}-4\,d_{{i}}{
	\delta_{{2}}}^{3}\rho_{{1}}\rho_{{3}}-16\,d_{{i}}{\delta_{{2}}}^{3}{
	\rho_{{3}}}^{2}-2\,d_{{r}}{\delta_{{1}}}^{3}\rho_{{1}}\rho_{{3}}+4\,d_
{{r}}{\delta_{{1}}}^{2}\delta_{{2}}\rho_{{1}}\rho_{{3}}+2\,d_{{r}}
\delta_{{1}}{\delta_{{2}}}^{2}\rho_{{1}}\rho_{{3}}+4\,d_{{r}}{\delta_{
		{2}}}^{3}\rho_{{1}}\rho_{{3}} \Big) {I^{\ast}}^{2}+8\,{d_{{a}}}^{3}\delta_{
	{1}}{\delta_{{2}}}^{2}+24\,{d_{{a}}}^{2}d_{{i}}{\delta_{{1}}}^{2}
\delta_{{2}}+8\,{d_{{a}}}^{2}d_{{i}}{\delta_{{2}}}^{3}+8\,d_{{a}}{d_{{
			i}}}^{2}{\delta_{{1}}}^{3}+24\,d_{{a}}{d_{{i}}}^{2}\delta_{{1}}{\delta
	_{{2}}}^{2}+8\,d_{{a}}{d_{{r}}}^{2}{\delta_{{1}}}^{3}+16\,d_{{a}}{d_{{
			r}}}^{2}\delta_{{1}}{\delta_{{2}}}^{2}+8\,{d_{{i}}}^{3}{\delta_{{1}}}^
{2}\delta_{{2}}+16\,d_{{i}}{d_{{r}}}^{2}{\delta_{{1}}}^{2}\delta_{{2}}
+8\,d_{{i}}{d_{{r}}}^{2}{\delta_{{2}}}^{3} \Big) {{T_{reg}^{\ast}}}^{3}+
\Big( 2\,\rho_{{1}}\rho_{{3}} \Big( {\delta_{{1}}}^{2}{\rho_{{1}}}^
{2}+{\delta_{{1}}}^{2}\rho_{{1}}\rho_{{3}}+4\,\delta_{{1}}\delta_{{2}}
{\rho_{{1}}}^{2}+2\,\delta_{{1}}\delta_{{2}}\rho_{{1}}\rho_{{3}}+2\,
\delta_{{1}}\delta_{{2}}{\rho_{{3}}}^{2}-{\delta_{{2}}}^{2}{\rho_{{1}}
}^{2}+4\,{\delta_{{2}}}^{2}\rho_{{1}}\rho_{{3}}+3\,{\delta_{{2}}}^{2}{
	\rho_{{3}}}^{2} \Big) {I^{\ast}}^{4}+ \Big( -6\,{d_{{a}}}^{2}{\delta_{{1}}
}^{2}{\rho_{{1}}}^{2}-2\,{d_{{a}}}^{2}\delta_{{1}}\delta_{{2}}{\rho_{{
			1}}}^{2}-14\,{d_{{a}}}^{2}\delta_{{1}}\delta_{{2}}\rho_{{1}}\rho_{{3}}
-2\,{d_{{a}}}^{2}{\delta_{{2}}}^{2}{\rho_{{1}}}^{2}-2\,{d_{{a}}}^{2}{
	\delta_{{2}}}^{2}\rho_{{1}}\rho_{{3}}-5\,{d_{{a}}}^{2}{\delta_{{2}}}^{
	2}{\rho_{{3}}}^{2}-4\,d_{{a}}d_{{i}}{\delta_{{1}}}^{2}{\rho_{{1}}}^{2}
-10\,d_{{a}}d_{{i}}{\delta_{{1}}}^{2}\rho_{{1}}\rho_{{3}}-6\,d_{{a}}d_
{{i}}\delta_{{1}}\delta_{{2}}{\rho_{{1}}}^{2}-16\,d_{{a}}d_{{i}}\delta
_{{1}}\delta_{{2}}\rho_{{1}}\rho_{{3}}-20\,d_{{a}}d_{{i}}\delta_{{1}}
\delta_{{2}}{\rho_{{3}}}^{2}-10\,d_{{a}}d_{{i}}{\delta_{{2}}}^{2}\rho_
{{1}}\rho_{{3}}-6\,d_{{a}}d_{{r}}{\delta_{{1}}}^{2}\rho_{{1}}\rho_{{3}
}+8\,d_{{a}}d_{{r}}\delta_{{1}}\delta_{{2}}\rho_{{1}}\rho_{{3}}-2\,d_{
	{a}}d_{{r}}{\delta_{{2}}}^{2}\rho_{{1}}\rho_{{3}}-3\,{d_{{i}}}^{2}{
	\delta_{{1}}}^{2}{\rho_{{1}}}^{2}-6\,{d_{{i}}}^{2}{\delta_{{1}}}^{2}
\rho_{{1}}\rho_{{3}}-5\,{d_{{i}}}^{2}{\delta_{{1}}}^{2}{\rho_{{3}}}^{2
}-6\,{d_{{i}}}^{2}\delta_{{1}}\delta_{{2}}{\rho_{{1}}}^{2}-14\,{d_{{i}
}}^{2}\delta_{{1}}\delta_{{2}}\rho_{{1}}\rho_{{3}}-{d_{{i}}}^{2}{
	\delta_{{2}}}^{2}{\rho_{{1}}}^{2}-12\,{d_{{i}}}^{2}{\delta_{{2}}}^{2}
\rho_{{1}}\rho_{{3}}-24\,{d_{{i}}}^{2}{\delta_{{2}}}^{2}{\rho_{{3}}}^{
	2}+2\,d_{{i}}d_{{r}}{\delta_{{1}}}^{2}\rho_{{1}}\rho_{{3}}+10\,d_{{i}}
d_{{r}}{\delta_{{2}}}^{2}\rho_{{1}}\rho_{{3}}-2\,{d_{{r}}}^{2}{\delta_
	{{1}}}^{2}{\rho_{{1}}}^{2}-2\,{d_{{r}}}^{2}{\delta_{{1}}}^{2}\rho_{{1}
}\rho_{{3}}-2\,{d_{{r}}}^{2}{\delta_{{1}}}^{2}{\rho_{{3}}}^{2}-2\,{d_{
		{r}}}^{2}\delta_{{1}}\delta_{{2}}{\rho_{{1}}}^{2}-6\,{d_{{r}}}^{2}
\delta_{{1}}\delta_{{2}}\rho_{{1}}\rho_{{3}}-{d_{{r}}}^{2}{\delta_{{2}
}}^{2}{\rho_{{1}}}^{2}-2\,{d_{{r}}}^{2}{\delta_{{2}}}^{2}\rho_{{1}}
\rho_{{3}}-10\,{d_{{r}}}^{2}{\delta_{{2}}}^{2}{\rho_{{3}}}^{2}
\Big) {I^{\ast}}^{2}+2\,{d_{{a}}}^{4}{\delta_{{2}}}^{2}+16\,{d_{{a}}}^{3}d
_{{i}}\delta_{{1}}\delta_{{2}}+12\,{d_{{a}}}^{2}{d_{{i}}}^{2}{\delta_{
		{1}}}^{2}+12\,{d_{{a}}}^{2}{d_{{i}}}^{2}{\delta_{{2}}}^{2}+12\,{d_{{a}
}}^{2}{d_{{r}}}^{2}{\delta_{{1}}}^{2}+8\,{d_{{a}}}^{2}{d_{{r}}}^{2}{
	\delta_{{2}}}^{2}+16\,d_{{a}}{d_{{i}}}^{3}\delta_{{1}}\delta_{{2}}+32
\,d_{{a}}d_{{i}}{d_{{r}}}^{2}\delta_{{1}}\delta_{{2}}+2\,{d_{{i}}}^{4}
{\delta_{{1}}}^{2}+8\,{d_{{i}}}^{2}{d_{{r}}}^{2}{\delta_{{1}}}^{2}+12
\,{d_{{i}}}^{2}{d_{{r}}}^{2}{\delta_{{2}}}^{2}+2\,{d_{{r}}}^{4}{\delta
	_{{1}}}^{2}+2\,{d_{{r}}}^{4}{\delta_{{2}}}^{2} \Big) {{T_{reg}^{\ast}}}^{2}+
\Big( 2\,\rho_{{1}}\rho_{{3}} \Big( 2\,d_{{a}}\delta_{{1}}{\rho_{{1
}}}^{2}+2\,d_{{a}}\delta_{{1}}\rho_{{1}}\rho_{{3}}+2\,d_{{a}}\delta_{{
		2}}{\rho_{{1}}}^{2}+2\,d_{{a}}\delta_{{2}}{\rho_{{3}}}^{2}+3\,d_{{i}}
\delta_{{1}}{\rho_{{1}}}^{2}+2\,d_{{i}}\delta_{{1}}\rho_{{1}}\rho_{{3}
}+d_{{i}}\delta_{{1}}{\rho_{{3}}}^{2}+d_{{i}}\delta_{{2}}{\rho_{{1}}}^
{2}+8\,d_{{i}}\delta_{{2}}\rho_{{1}}\rho_{{3}}+9\,d_{{i}}\delta_{{2}}{
	\rho_{{3}}}^{2}+d_{{r}}\delta_{{1}}{\rho_{{1}}}^{2}+d_{{r}}\delta_{{1}
}{\rho_{{3}}}^{2}-2\,d_{{r}}\delta_{{2}}{\rho_{{1}}}^{2}-2\,d_{{r}}
\delta_{{2}}{\rho_{{3}}}^{2} \Big) {I^{\ast}}^{4}+ \Big( -4\,{d_{{a}}}^{3}
\delta_{{1}}{\rho_{{1}}}^{2}-4\,{d_{{a}}}^{3}\delta_{{2}}\rho_{{1}}
\rho_{{3}}-2\,{d_{{a}}}^{2}d_{{i}}\delta_{{1}}{\rho_{{1}}}^{2}-8\,{d_{
		{a}}}^{2}d_{{i}}\delta_{{1}}\rho_{{1}}\rho_{{3}}-4\,{d_{{a}}}^{2}d_{{i
}}\delta_{{2}}{\rho_{{1}}}^{2}-8\,{d_{{a}}}^{2}d_{{i}}\delta_{{2}}\rho
_{{1}}\rho_{{3}}-10\,{d_{{a}}}^{2}d_{{i}}\delta_{{2}}{\rho_{{3}}}^{2}-
6\,{d_{{a}}}^{2}d_{{r}}\delta_{{1}}\rho_{{1}}\rho_{{3}}+4\,{d_{{a}}}^{
	2}d_{{r}}\delta_{{2}}\rho_{{1}}\rho_{{3}}-6\,d_{{a}}{d_{{i}}}^{2}
\delta_{{1}}{\rho_{{1}}}^{2}-12\,d_{{a}}{d_{{i}}}^{2}\delta_{{1}}\rho_
{{1}}\rho_{{3}}-10\,d_{{a}}{d_{{i}}}^{2}\delta_{{1}}{\rho_{{3}}}^{2}-8
\,d_{{a}}{d_{{i}}}^{2}\delta_{{2}}\rho_{{1}}\rho_{{3}}+4\,d_{{a}}d_{{i
}}d_{{r}}\delta_{{1}}\rho_{{1}}\rho_{{3}}-8\,d_{{a}}d_{{i}}d_{{r}}
\delta_{{2}}\rho_{{1}}\rho_{{3}}-4\,d_{{a}}{d_{{r}}}^{2}\delta_{{1}}{
	\rho_{{1}}}^{2}-4\,d_{{a}}{d_{{r}}}^{2}\delta_{{1}}\rho_{{1}}\rho_{{3}
}-4\,d_{{a}}{d_{{r}}}^{2}\delta_{{1}}{\rho_{{3}}}^{2}-4\,d_{{a}}{d_{{r
}}}^{2}\delta_{{2}}\rho_{{1}}\rho_{{3}}-2\,{d_{{i}}}^{3}\delta_{{1}}{
	\rho_{{1}}}^{2}-4\,{d_{{i}}}^{3}\delta_{{1}}\rho_{{1}}\rho_{{3}}-2\,{d
	_{{i}}}^{3}\delta_{{2}}{\rho_{{1}}}^{2}-12\,{d_{{i}}}^{3}\delta_{{2}}
\rho_{{1}}\rho_{{3}}-16\,{d_{{i}}}^{3}\delta_{{2}}{\rho_{{3}}}^{2}-2\,
{d_{{i}}}^{2}d_{{r}}\delta_{{1}}\rho_{{1}}\rho_{{3}}+8\,{d_{{i}}}^{2}d
_{{r}}\delta_{{2}}\rho_{{1}}\rho_{{3}}-2\,d_{{i}}{d_{{r}}}^{2}\delta_{
	{1}}{\rho_{{1}}}^{2}-4\,d_{{i}}{d_{{r}}}^{2}\delta_{{1}}\rho_{{1}}\rho
_{{3}}-2\,d_{{i}}{d_{{r}}}^{2}\delta_{{2}}{\rho_{{1}}}^{2}-8\,d_{{i}}{
	d_{{r}}}^{2}\delta_{{2}}\rho_{{1}}\rho_{{3}}-20\,d_{{i}}{d_{{r}}}^{2}
\delta_{{2}}{\rho_{{3}}}^{2}-2\,{d_{{r}}}^{3}\delta_{{1}}\rho_{{1}}
\rho_{{3}}+4\,{d_{{r}}}^{3}\delta_{{2}}\rho_{{1}}\rho_{{3}} \Big) {Y
}^{2}+4\,{d_{{a}}}^{4}d_{{i}}\delta_{{2}}+8\,{d_{{a}}}^{3}{d_{{i}}}^{2
}\delta_{{1}}+8\,{d_{{a}}}^{3}{d_{{r}}}^{2}\delta_{{1}}+8\,{d_{{a}}}^{
	2}{d_{{i}}}^{3}\delta_{{2}}+16\,{d_{{a}}}^{2}d_{{i}}{d_{{r}}}^{2}
\delta_{{2}}+4\,d_{{a}}{d_{{i}}}^{4}\delta_{{1}}+16\,d_{{a}}{d_{{i}}}^
{2}{d_{{r}}}^{2}\delta_{{1}}+4\,d_{{a}}{d_{{r}}}^{4}\delta_{{1}}+8\,{d
	_{{i}}}^{3}{d_{{r}}}^{2}\delta_{{2}}+4\,d_{{i}}{d_{{r}}}^{4}\delta_{{2
}} \Big) {T_{reg}^{\ast}}-{\rho_{{1}}}^{2}{\rho_{{3}}}^{2} \Big( \rho_{{1}}
+\rho_{{3}} \Big) ^{2}{I^{\ast}}^{6}+2\,\rho_{{1}}\rho_{{3}} \Big( {d_{{a}
}}^{2}{\rho_{{1}}}^{2}+{d_{{a}}}^{2}\rho_{{1}}\rho_{{3}}+d_{{a}}d_{{i}
}{\rho_{{1}}}^{2}+d_{{a}}d_{{i}}{\rho_{{3}}}^{2}+d_{{a}}d_{{r}}{\rho_{
		{1}}}^{2}+d_{{a}}d_{{r}}{\rho_{{3}}}^{2}+3\,{d_{{i}}}^{2}{\rho_{{1}}}^
{2}+5\,{d_{{i}}}^{2}\rho_{{1}}\rho_{{3}}+6\,{d_{{i}}}^{2}{\rho_{{3}}}^
{2}-d_{{i}}d_{{r}}{\rho_{{1}}}^{2}-d_{{i}}d_{{r}}{\rho_{{3}}}^{2}+{d_{
		{r}}}^{2}\rho_{{1}}\rho_{{3}}+{d_{{r}}}^{2}{\rho_{{3}}}^{2} \Big) {Y
}^{4}+ \Big( -{d_{{a}}}^{4}{\rho_{{1}}}^{2}-2\,{d_{{a}}}^{3}d_{{i}}
\rho_{{1}}\rho_{{3}}-2\,{d_{{a}}}^{3}d_{{r}}\rho_{{1}}\rho_{{3}}-4\,{d
	_{{a}}}^{2}{d_{{i}}}^{2}{\rho_{{1}}}^{2}-6\,{d_{{a}}}^{2}{d_{{i}}}^{2}
\rho_{{1}}\rho_{{3}}-5\,{d_{{a}}}^{2}{d_{{i}}}^{2}{\rho_{{3}}}^{2}+2\,
{d_{{a}}}^{2}d_{{i}}d_{{r}}\rho_{{1}}\rho_{{3}}-2\,{d_{{a}}}^{2}{d_{{r
}}}^{2}{\rho_{{1}}}^{2}-2\,{d_{{a}}}^{2}{d_{{r}}}^{2}\rho_{{1}}\rho_{{
		3}}-2\,{d_{{a}}}^{2}{d_{{r}}}^{2}{\rho_{{3}}}^{2}-2\,d_{{a}}{d_{{i}}}^
{3}\rho_{{1}}\rho_{{3}}-6\,d_{{a}}{d_{{i}}}^{2}d_{{r}}\rho_{{1}}\rho_{
	{3}}-2\,d_{{a}}d_{{i}}{d_{{r}}}^{2}\rho_{{1}}\rho_{{3}}-2\,d_{{a}}{d_{
		{r}}}^{3}\rho_{{1}}\rho_{{3}}-{d_{{i}}}^{4}{\rho_{{1}}}^{2}-4\,{d_{{i}
}}^{4}\rho_{{1}}\rho_{{3}}-4\,{d_{{i}}}^{4}{\rho_{{3}}}^{2}+2\,{d_{{i}
}}^{3}d_{{r}}\rho_{{1}}\rho_{{3}}-2\,{d_{{i}}}^{2}{d_{{r}}}^{2}{\rho_{
		{1}}}^{2}-6\,{d_{{i}}}^{2}{d_{{r}}}^{2}\rho_{{1}}\rho_{{3}}-10\,{d_{{i
}}}^{2}{d_{{r}}}^{2}{\rho_{{3}}}^{2}+2\,d_{{i}}{d_{{r}}}^{3}\rho_{{1}}
\rho_{{3}}-{d_{{r}}}^{4}{\rho_{{3}}}^{2} \Big) {I^{\ast}}^{2}+2\,{d_{{a}}}^
{4}{d_{{i}}}^{2}+2\,{d_{{a}}}^{4}{d_{{r}}}^{2}+2\,{d_{{a}}}^{2}{d_{{i}
}}^{4}+8\,{d_{{a}}}^{2}{d_{{i}}}^{2}{d_{{r}}}^{2}+2\,{d_{{a}}}^{2}{d_{
		{r}}}^{4}+2\,{d_{{i}}}^{4}{d_{{r}}}^{2}+2\,{d_{{i}}}^{2}{d_{{r}}}^{4}$.\vspace{6pt}
	
$b_8=\Big( {\delta_{{1}}}^{4}+4\,{\delta_{{1}}}^{2}{\delta_{{2}}}^{2}+{
	\delta_{{2}}}^{4} \Big) {{T_{reg}^{\ast}}}^{4}+ \Big( 4\,d_{{a}}{\delta_{{1
}}}^{3}+8\,d_{{a}}\delta_{{1}}{\delta_{{2}}}^{2}+8\,d_{{i}}{\delta_{{1
}}}^{2}\delta_{{2}}+4\,d_{{i}}{\delta_{{2}}}^{3} \Big) {{T_{reg}^{\ast}}}^{3
}+ \Big(  \Big( -2\,{\delta_{{1}}}^{2}{\rho_{{1}}}^{2}-2\,{\delta_{{
			1}}}^{2}\rho_{{1}}\rho_{{3}}-{\delta_{{1}}}^{2}{\rho_{{3}}}^{2}-2\,
\delta_{{1}}\delta_{{2}}{\rho_{{1}}}^{2}-6\,\delta_{{1}}\delta_{{2}}
\rho_{{1}}\rho_{{3}}-{\delta_{{2}}}^{2}{\rho_{{1}}}^{2}-2\,{\delta_{{2
}}}^{2}\rho_{{1}}\rho_{{3}}-5\,{\delta_{{2}}}^{2}{\rho_{{3}}}^{2}
\Big) {I^{\ast}}^{2}+6\,{d_{{a}}}^{2}{\delta_{{1}}}^{2}+4\,{d_{{a}}}^{2}{
	\delta_{{2}}}^{2}+16\,d_{{a}}d_{{i}}\delta_{{1}}\delta_{{2}}+4\,{d_{{i
}}}^{2}{\delta_{{1}}}^{2}+6\,{d_{{i}}}^{2}{\delta_{{2}}}^{2}+4\,{d_{{r
}}}^{2}{\delta_{{1}}}^{2}+4\,{d_{{r}}}^{2}{\delta_{{2}}}^{2} \Big) {
	{T_{reg}^{\ast}}}^{2}+ \Big(  \Big( -4\,d_{{a}}\delta_{{1}}{\rho_{{1}}}^{2}-
4\,d_{{a}}\delta_{{1}}\rho_{{1}}\rho_{{3}}-2\,d_{{a}}\delta_{{1}}{\rho
	_{{3}}}^{2}-4\,d_{{a}}\delta_{{2}}\rho_{{1}}\rho_{{3}}-2\,d_{{i}}
\delta_{{1}}{\rho_{{1}}}^{2}-4\,d_{{i}}\delta_{{1}}\rho_{{1}}\rho_{{3}
}-2\,d_{{i}}\delta_{{2}}{\rho_{{1}}}^{2}-8\,d_{{i}}\delta_{{2}}\rho_{{
		1}}\rho_{{3}}-10\,d_{{i}}\delta_{{2}}{\rho_{{3}}}^{2}-2\,d_{{r}}\delta
_{{1}}\rho_{{1}}\rho_{{3}}+4\,d_{{r}}\delta_{{2}}\rho_{{1}}\rho_{{3}}
\Big) {I^{\ast}}^{2}+4\,{d_{{a}}}^{3}\delta_{{1}}+8\,{d_{{a}}}^{2}d_{{i}}
\delta_{{2}}+8\,d_{{a}}{d_{{i}}}^{2}\delta_{{1}}+8\,d_{{a}}{d_{{r}}}^{
	2}\delta_{{1}}+4\,{d_{{i}}}^{3}\delta_{{2}}+8\,d_{{i}}{d_{{r}}}^{2}
\delta_{{2}} \Big) {T_{reg}^{\ast}}+2\,\rho_{{1}}\rho_{{3}} \Big( {\rho_{{1
}}}^{2}+\rho_{{3}}\rho_{{1}}+{\rho_{{3}}}^{2} \Big) {I^{\ast}}^{4}+ \Big( 
-2\,{d_{{a}}}^{2}{\rho_{{1}}}^{2}-2\,{d_{{a}}}^{2}\rho_{{1}}\rho_{{3}}
-{d_{{a}}}^{2}{\rho_{{3}}}^{2}-2\,d_{{a}}d_{{i}}\rho_{{1}}\rho_{{3}}-2
\,d_{{a}}d_{{r}}\rho_{{1}}\rho_{{3}}-2\,{d_{{i}}}^{2}{\rho_{{1}}}^{2}-
6\,{d_{{i}}}^{2}\rho_{{1}}\rho_{{3}}-5\,{d_{{i}}}^{2}{\rho_{{3}}}^{2}+
2\,d_{{i}}d_{{r}}\rho_{{1}}\rho_{{3}}-{d_{{r}}}^{2}{\rho_{{1}}}^{2}-2
\,{d_{{r}}}^{2}\rho_{{1}}\rho_{{3}}-2\,{d_{{r}}}^{2}{\rho_{{3}}}^{2}
\Big) {I^{\ast}}^{2}+{d_{{a}}}^{4}+4\,{d_{{a}}}^{2}{d_{{i}}}^{2}+4\,{d_{{a
}}}^{2}{d_{{r}}}^{2}+{d_{{i}}}^{4}+4\,{d_{{i}}}^{2}{d_{{r}}}^{2}+{d_{{
			r}}}^{4}$.\vspace{6pt}
		
$b_{10}=\Big( 2\,{\delta_{{1}}}^{2}+2\,{\delta_{{2}}}^{2} \Big) {{T_{reg}^{\ast}}}
^{2}+ \Big( 4\,d_{{a}}\delta_{{1}}+4\,d_{{i}}\delta_{{2}} \Big) {T{reg}^{\ast}}- \Big( \rho_{{1}}+\rho_{{3}} \Big) ^{2}{I^{\ast}}^{2}+2\,{d_{{a}}}
^{2}+2\,{d_{{i}}}^{2}+2\,{d_{{r}}}^{2}$.}

\bibliographystyle{ieeetr}

\end{document}